\documentclass[a4paper,UKenglish,cleveref, autoref]{lipics-v2019}
%This is a template for producing LIPIcs articles. 
%See lipics-manual.pdf for further information.
%for A4 paper format use option "a4paper", for US-letter use option "letterpaper"
%for british hyphenation rules use option "UKenglish", for american hyphenation rules use option "USenglish"
%for section-numbered lemmas etc., use "numberwithinsect"
%for enabling cleveref support, use "cleveref"
%for enabling cleveref support, use "autoref"

%\graphicspath{{./graphics/}}%helpful if your graphic files are in another directory

\bibliographystyle{plainurl}% the mandatory bibstyle

\title{Efficient parameterized algorithms for computing all-pairs shortest paths} %TODO Please add

\titlerunning{Efficient parameterized algorithms for computing all-pairs shortest paths}%optional, please use if title is longer than one line

\author{Stefan Kratsch}{Humboldt-Universit{\"a}t zu
Berlin, Germany}{kratsch@informatik.hu-berlin.de}{https://orcid.org/0000-0002-0193-7239}{}%TODO mandatory, please use full name; only 1 author per \author macro; first two parameters are mandatory, other parameters can be empty. Please provide at least the name of the affiliation and the country. The full address is optional

\author{Florian Nelles}{Humboldt-Universit{\"a}t zu
Berlin, Germany}{nelles@informatik.hu-berlin.de}{}{}

\authorrunning{S. Kratsch and F. Nelles}%TODO mandatory. First: Use abbreviated first/middle names. Second (only in severe cases): Use first author plus 'et al.'

\Copyright{Stefan Kratsch and Florian Nelles}%TODO mandatory, please use full first names. LIPIcs license is "CC-BY";  http://creativecommons.org/licenses/by/3.0/

\ccsdesc[100]{Mathematics of computing~Graph algorithms}
\ccsdesc[100]{Theory of computation~Shortest paths}%TODO mandatory: Please choose ACM 2012 classifications from https://dl.acm.org/ccs/ccs_flat.cfm 

\keywords{all-pairs shortest paths, efficient parameterized algorithms, parameterized complexity, clique-width, modular-width }%TODO mandatory; please add comma-separated list of keywords

\category{}%optional, e.g. invited paper

\relatedversion{}%optional, e.g. full version hosted on arXiv, HAL, or other respository/website
%\relatedversion{A full version of the paper is available at \url{...}.}

\supplement{}%optional, e.g. related research data, source code, ... hosted on a repository like zenodo, figshare, GitHub, ...

%\funding{(Optional) general funding statement \dots}%optional, to capture a funding statement, which applies to all authors. Please enter author specific funding statements as fifth argument of the \author macro.

%\acknowledgements{I want to thank \dots}%optional

\nolinenumbers %uncomment to disable line numbering

\hideLIPIcs  %uncomment to remove references to LIPIcs series (logo, DOI, ...), e.g. when preparing a pre-final version to be uploaded to arXiv or another public repository

%Editor-only macros:: begin (do not touch as author)%%%%%%%%%%%%%%%%%%%%%%%%%%%%%%%%%%
\EventEditors{John Q. Open and Joan R. Access}
\EventNoEds{2}
\EventLongTitle{42nd Conference on Very Important Topics (CVIT 2016)}
\EventShortTitle{CVIT 2016}
\EventAcronym{CVIT}
\EventYear{2016}
\EventDate{December 24--27, 2016}
\EventLocation{Little Whinging, United Kingdom}
\EventLogo{}
\SeriesVolume{42}
\ArticleNo{23}
%%%%%%%%%%%%%%%%%%%%%%%%%%%%%%%%%%%%%%%%%%%%%%%%%%%%%%

\usepackage{xspace}
\usepackage{tikz}

\newcommand{\prob}[1]{\textsc{\lowercase{#1}}\xspace}
\newcommand{\Oh}{\mathcal{O}} 
\newcommand{\NP}{\ensuremath{\textsf{NP}}\xspace}

\newcommand{\APSP}{\prob{All-Pairs Shortest Paths}}
\newcommand{\VWAPSP}{\prob{Vertex-Weighted All-Pairs Shortest Paths}}

\DeclareMathOperator{\cw}{\mathsf{cw}}
\DeclareMathOperator{\mw}{\mathsf{mw}}

\DeclareMathOperator*{\argmin}{arg\,min}

\begin{document}

\maketitle

\begin{abstract}
Computing all-pairs shortest paths is a fundamental and much-studied problem with many applications. Unfortunately, despite intense study, there are still no significantly faster algorithms for it than the $\Oh(n^3)$ time algorithm due to Floyd and Warshall (1962). Somewhat faster algorithms exist for the vertex-weighted version if fast matrix multiplication may be used. Yuster (SODA 2009) gave an algorithm running in time $\Oh(n^{2.842})$, but no combinatorial, truly subcubic algorithm is known.

Motivated by the recent framework of efficient parameterized algorithms (or ``FPT in P''), we investigate the influence of the graph parameters clique-width ($\cw$) and modular-width ($\mw$) on the running times of algorithms for solving \APSP. We obtain efficient (and combinatorial) parameterized algorithms on non-negative vertex-weighted graphs of times $\Oh(\cw^2n^2)$, resp. $\Oh(\mw^2n + n^2)$. If fast matrix multiplication is allowed then the latter can be improved to $\Oh(\mw^{1.842}n + n^2)$ using the algorithm of Yuster as a black box.
The algorithm relative to modular-width is adaptive, meaning that the running time matches the best unparameterized algorithm for parameter value $\mw$ equal to $n$, and they outperform them already for $\mw \in \Oh(n^{1 - \varepsilon})$ for any $\varepsilon > 0$.
\end{abstract}

\section{Introduction}

\APSP (APSP) is a fundamental and much-studied problem in the field of algorithmic graph theory. Next to the theoretical interest in the problem, \APSP is important for many practical applications, e.g., it is closely related to several vertex centrality measures in networks (for example, the betweenness centrality of a vertex $v$ is defined as the sum of the fraction of all-pairs shortest paths that pass through $v$). The \APSP problem is also considered as the core of many routing problems and has applications for example in areas such as routing protocols, driving direction on web mappings, transportation, and traffic assignment problems, and many more. See also the survey of Susmita~\cite{Susmita15} for more applications.

Despite the large interest in \APSP, there are only small improvements known since the well-known $\Oh(n^3)$-time algorithm by Floyd and Warshall~\cite{Floyd62, Warshall62} from 1962: Chan~\cite{Chan10} as well as Han and Takaoka~\cite{HanT16} gave an algorithm running in $\Oh(n^3 / \log^2 n)$ (omitting $poly(\log \log n)$ factors) and Williams~\cite{Williams18} gave an randomized algorithm running in time $\Oh(n^3 / 2^{\Omega(\log n)^{1/2}})$.
 While there are no unconditional lower bounds known, it has been conjectured that there is no truly subcubic algorithm for \APSP, i.e., that no algorithm achieves time $\Oh(n^{3-\varepsilon})$ for any $\varepsilon>0$. Using suitable subcubic reductions, this is tightly connected to the existence of subcubic algorithms for several network centrality measures, finding a directed triangle of negative total edge length, finding the second shortest simple path between two nodes in an edge-weighted graph, or checking if a given matrix defines a metric. This means that if one of those problems can be solved in truly subcubic time (i.e., can be solved in time $\Oh(n^{3-\varepsilon}) \cdot poly(\log M)$ for an $\varepsilon > 0$ and weights in $[-M, M]$ for weighted problems), then all of the problems admit algorithms with truly subcubic running time~\cite{WilliamsW18}.
The situation is different for \VWAPSP: While it is conjectured that there is no truly subcubic \emph{combinatorial} algorithm, faster algorithms are known if fast matrix multiplication may be used. The currently fastest algorithm is due to Yuster~\cite{Yuster09} and runs in time $\Oh(n^{2.842})$. For sparse graphs there is an algorithm running in time $\Oh(nm + n^2 \log\log n)$ for directed graphs~\cite{Pettie04} and an algorithm for undirected graphs~\cite{PettieR05} with a running time of  $\Oh(mn \log \alpha(m, n))$, where $\alpha$ is the inverse of the Ackermann’s function.

Independently of whether one believes in conditional lower bounds and hypotheses, the fact remains that we do not know any truly subcubic algorithms for \APSP nor truly subcubic, combinatorial algorithms for the vertex-weighted case. Besides heuristics or approximation algorithms, one possible solution for faster algorithms for at least \emph{some} input graphs is to exploit \emph{structure} in the input graph. In addition to measuring the complexity of a problem relative to the input size of a graph (number $n$ of vertices and number $m$ of edges), one may additionally consider some \emph{parameter}, say $k$, that quantifies structure that may be exploited by an algorithm; i.e., we may study the \emph{parameterized complexity} of the problem. This framework typically aims at \NP-hard problems and a key goal is to obtain \emph{fixed-parameter tractable} (FPT) algorithms that run in time $f(k)n^c$ for some constant $c$ and some (usually exponential) function $f(k)$ of the parameter. Initiated by the work of Giannopoulou et al.~\cite{GiannopoulouMN15}, also efficient parameterized algorithms for tractable problems are considered (apart from many older results that predate even parameterized complexity). In this framework, also called ``FPT in P'', one is interested in running times $\Oh(k^\alpha n^\beta)$ when the best dependence on the input size alone is $\Oh(n^\gamma)$ with $\gamma > \beta$, which then results in a better running time for sufficiently small parameter $k$. Typically, the parameter $k$ is at most $n$, thus, in the case of $\alpha + \beta = \gamma$, one already achieves truly better running times for $k \in o(n)$. We call such algorithms, which even for $k =n$ are not worse than the best unparameterized algorithm, \emph{adaptive} algorithms. 

Several recent publications dealt with efficient parameterized algorithms for different problems and parameters~\cite{FominLSPW18,IwataOO18,CoudertDP18,BentertFNN19,Husfeldt16,MertziosNN16}, however, \APSP got very little attention. Coudert et al.~\cite{CoudertDP18} considered the \emph{clique-width} of a graph as a parameter for tractable problems related to cycle problems. Intuitively, clique-width captures the closeness of a graph to a cograph, with cographs being exactly the graphs of clique-width at most two. Alongside some positive results for \prob{Triangle Counting} or \prob{Girth}, they proved a conditional lower bound for \prob{Diameter} namely that there is no $\Oh(2^{o(\cw)} \cdot n^{2 - \varepsilon})$ time algorithm for any $\varepsilon > 0$. That is, even computing just the greatest length of any shortest path in an unweighted graph admits no such algorithm. A weaker parameter and an upper bound for clique-width is the modular-width of a graph, which is another parameter that has been previously studied regarding its use for efficient parameterized algorithms~\cite{CoudertDP18,KratschN18}.

Note that small clique-width or small modular-width does not imply the sparsity of the graph, e.g. cliques have clique-width and modular-width two.
For parameters that do imply the sparsity of the graph (meaning that for parameter value $k$, the number of edges is bounded by $\Oh(kn)$, where $n$ denotes the number of vertices in the graph), the algorithm of Pettie and Ramachandran directly yields a running time of $\Oh(kn^2 + n^2 \log\log n)$, which is nearly optimal. 

\subparagraph{Our work.}
We study efficient parameterized algorithms for \APSP for its vertex-weighted variant. We consider the structural parameters \emph{clique-width} ($\cw$) and \emph{modular-width} ($\mw$).
As our main result, we present an $\Oh(\cw^2 n^2)$-time algorithm for \VWAPSP, yielding a truly subcubic algorithm for $\cw \in \Oh(n^{0.5 - \varepsilon})$. This immediately allows to solve the \prob{Diameter} problem in the same asymptotic time $\Oh(\cw^2 n^2)$, even with vertex weights, and thereby nicely complements the lower bound ruling out $\Oh(2^{o(\cw)} \cdot n^{2 - \varepsilon})$ for any $\varepsilon > 0$ \cite{CoudertDP18}.

Further, we present a general framework to determine the running time for many algorithms that use modular-width and the related modular decomposition tree. We use this framework to prove an algorithm of time
$\Oh(\mw^2n + n^2)$ for \prob{Vertex-Weighted All-Pairs Shortest Paths} on graphs of modular-width at most $\mw$.
This algorithm is combinatorial, however, it can benefit from subcubic algorithms for \prob{Vertex-Weighted All-Pairs Shortest Paths} that use fast matrix multiplication. For example, we achieve a running time of $\Oh(\mw^{1.842}n + n^2)$ by using an $\Oh(n^{2.842})$-time algorithm for the vertex-weighted case by Yuster~\cite{Yuster09} in each prime node; this algorithm uses fast matrix multiplication whereas all other algorithms (previous and new) are combinatorial. 
% 
% Finally, we consider the tree-depth $\td$ of the input graph and present an algorithm running in time $\Oh(\td n^2)$ for \prob{Edge-Weighted All-Pairs Shortest Paths}. This directly yields an algorithm of time $\Oh(\tw n^2 \log n)$ for graphs of tree-width at most $\tw$.
% The algorithms relative to modular-width and tree-depth are never worse than the respective general-case algorithms and outperform already for parameter value $\Oh(n^{1-\varepsilon})$ for any $\varepsilon > 0$.

\subparagraph{Related Work.}
Following the work of Floyd and Warshall~\cite{Floyd62, Warshall62}, Fredman~\cite{Fredman76} achieved the first subcubic algorithm, running in time $\Oh(n^3 \log^{1/3}\log n / \log^{1/3}n)$. Chan~\cite{Chan10} and Han and Takaoka~\cite{HanT16} both achieved a running time of $\Oh(n^3 / \log^2 n)$ (omitting $poly(\log \log n)$ factors). Recently, Williams~\cite{Williams18} solved APSP in randomized time $\Oh(n^3 / 2^{\Omega(\log n)^{1/2}})$. For sparse graphs, Pettie and Ramachandran~\cite{PettieR05} get a running time of $\Oh( n^2 \alpha(n,m) + mn)$. 
All these algorithms solve the standard edge-weighted case.

In the vertex-weighted case, the currently fastest algorithm by Yuster~\cite{Yuster09} runs in $\Oh(n^{2.842})$ and relies on fast matrix multiplication. 
Shapira et al.~\cite{ShapiraYZ11} considered some variants of APSP, namely the \prob{All-Pairs Bottleneck Paths}, where one seeks the maximum bottleneck weight on a graph, and provided an algorithm of time $\Oh(n^{2.575})$ for vertex-weighted graphs. 
Czumaj and Lingas~\cite{CzumajL09} analyzed the related problem of finding the minimum-weight triangle in vertex-weighted graphs and achieved a running time of $\Oh(n^{\omega} + n^{2 + o(1)})$.
All of these algorithms for vertex-weighted graphs exploit fast matrix multiplication.
There is no truly subcubic \emph{combinatorial} algorithm known for \prob{Vertex-Weighted All-Pairs Shortest Paths}.

There are some subcubic algorithms known for APSP on special graph classes, such as uniform disk graphs with non-negative vertex weights, induced by point sets of bounded density within a unit square. Lingas and Sledneu~\cite{LingasS12} showed how to solve APSP on such graphs in time $\Oh(\sqrt{r} n^{2.75})$, where $r$ is the radius of the disk around the vertices in a unit square. 
Bentert and Nichterlein~\cite{BentertN19} considered the related problem of computing the diameter of a graph, parameterized by several parameters. 

\subparagraph{Organization.} Section~\ref{sec:Preliminaries} contains the preliminaries, in particular, the definition of clique-width and modular-width. In Section~\ref{sec:APSPcw}, we present the algorithm for \prob{Vertex-Weighted All-Pairs Shortest Paths} parameterized by the clique-width. The algorithm parameterized by modular-width as well as the running time framework can be found in Section~\ref{sec:APSPmw}. We conclude in Section~\ref{sec:Conclusion}.

\newpage
\section{Preliminaries}\label{sec:Preliminaries}
We follow basic graph notations~\cite{Diestel12}. For a natural number $k \in \mathbb{N}$, define $[k] = \{1, \ldots, k\}$. All graphs are simple, i.e., without loops or multiple edges. In a graph $G = (V, E)$, a path $P = (v_1, v_2, \ldots, v_n)$ is a sequence of vertices $v_i \in V$ with $\{v_i, v_{i+1}\} \in E$ for $i \in [n-1]$. We define by $P_{[v_i,v_j]}$ the subpath of $P$ starting in $v_i$ and ending in $v_j$ for $i, j \in [n]$ with $i < j$.  The \textit{length} of a path is the number of edges in it. In a vertex-weighted graph $G=(V,E)$ with weights $\omega \colon V \rightarrow \mathbb{R}_{\geq 0}$, the \textit{weight} (also called \textit{cost}) of a path $P$ is defined as $\omega(P) = \sum_{i = 1}^n \omega(v_i)$. Thus, every paths between two distinct vertices $u$ and $v$ has minimum weight $\omega(v) + \omega(u)$ and a path of length 0 from a vertex $v$ to itself has always weight $\omega(v)$. For a graph $G = (V,E)$ and $u,v \in V$, we denote the minimum weight of all paths between $u$ and $v$ as $dist_G(u,v)$. For a set of vertices $X \subseteq V$ and a vertex $u \in V$ we define $dist_G(u,X) = \min_{v \in X} dist_G(u,v)$ and for two sets of vertices $X,Y \subseteq V$, we define $dist_G(X,Y) = \min_{u \in X, v \in Y}(u,v)$. 

For two sets $A$ and $B$ we denote the disjoint union by $A \dot{\cup} B$ and we say that two sets $A$ and $B$ \emph{overlap} if $A \cap B \neq \emptyset$, $A \setminus B \neq \emptyset$, and $B \setminus A \neq \emptyset$.

\subsection{Clique-width and NLC-width}
A $k$-\textit{labeled} graph is a graph in which each vertex is assigned one out of $k$ labels. Formally, a vertex-labeled graph $G$ is a triple $(V,E,lab)$ with $V$ being the vertex set, $E$ denotes the set of edges, and $lab: V \rightarrow [k]$ is a function that defines the label for each vertex. For a $k$-labeled graph $G = (V,E,lab)$ we denote by $unlab(G) = (V,E)$ the underlying unlabeled graph. 
Intuitively, a graph $G$ has clique-width at most $k$, if it is the underlying graph of some $k$-labeled graph that can be constructed by using four operations: (1) Introducing a single labeled vertex, (2) redefining one label to another label, (3) taking the disjoint union of two already created $k$-labeled graphs, and (4) adding all edges between vertices of label $i$ to vertices of label $j$ for a pair $(i,j)$ of labels.

\begin{definition}[Clique-width,~\cite{CourcelleO00}]\label{def:cw}
 Let $k \geq 2$. The class $CW_k$ consists of all $k$-labeled graphs that can be constructed by the following operations:
 \begin{itemize} 
   	\item  The nullary operation $\bullet_a$, that corresponds to a graph consisting of a single                 vertex with a label $a \in [k]$. 
   	
   	\item Let $G = (V, E, lab) \in CW_k$ be a $k$-labeled graph, and let $a, b \in [k]$. Then 
	\begin{align*}
			\rho_{a,b}(G) = (V,E, lab') \qquad \text{ with } lab'(v) = 
			\begin{cases}
			  lab(v) &, \text{ if } lab(v) \neq a \\
			  b &, \text{ if } lab(v) = a \\
			\end{cases}	 
	\end{align*}
	is in $CW_k$.
        
        \item Let $G = (V_G, E_G, lab_G) \in CW_k$ and $H = (V_H, E_H, lab_H) \in CW_k$ be two $k$-labeled graphs in $CW_k$ with $V_G \cap V_H = \emptyset$. Then the disjoint union, defined by
  	\begin{align*}
  		 G \oplus H = (V_G \dot{\cup} V_H, E_G \dot{\cup} E_H, lab')  \qquad \text{ with } lab'(v) = 
  		 \begin{cases}
			lab_G(v) &, \text{ if } v \in V_G\\
			lab_H(v) &, \text{ if } v \in V_H\\
		\end{cases}
  	\end{align*}
	is in $CW_k$.
	
	\item Let $G = (V, E, lab ) \in CW_k$ be a $k$-labeled graph, and let $a, b \in [k]$ with $a \neq b$. Then 
	\begin{align*}
                \eta_{a,b}(G) = (V,E', lab) \qquad \text{ with } E' = E \cup \{ \{u,v\} \mid lab(u) = a, lab(v) = b \}
	\end{align*}
 \end{itemize}
\end{definition}

The clique-width of a graph $G$, denoted by $cw(G)$, is the smallest $k \geq 2$ such that there is a labeled graph $G' \in CW_k$ with $unlab(G') = G$.  The expression consisting of the operations defined in Definition~\ref{def:cw} is called a (clique-width) $k$-expression. For a $k$-expression $t$, we denote with $val(t)$ the resulting labeled graph and by $tree(t)$ the so called $k$-expression tree of $t$, which is the canonical tree representation of $t$.
Clique-width is a strict generalization of modular-width, which will be defined later. In fact, the clique-width of a graph $G$ is equal to the maximum clique-width of any quotient graph of a prime node in the modular decomposition tree of $G$. On the other hand, modular-width cannot be bounded by a function of clique-width.

Very similar to clique-width, one can define NLC-width, which was introduced by Wanke~\cite{Wanke94}. The main differences are that the join operation $\eta$ and the disjoint union operation $\oplus$ are somewhat combined and consecutive relabel operations are compressed into one operation.

\begin{definition}[NLC-width]\label{def:nlc}
 Let $k \geq 1$. The class $NLC_k$ consists of all $k$-labeled graphs that can be constructed by the following operations:
 \begin{itemize}
   	\item The nullary operation $\bullet_a$, that corresponds to a graph consisting of a single vertex with a label $a \in [k]$.
   	
   	\item Let $G = (V, E, lab) \in NLC_k$ and let $R \colon [k] \rightarrow [k]$. Then 
   	\begin{align*}
   		\circ_R(G) = (V,E,lab') \qquad \text{ with } lab'(v) = R(lab(v))
   	\end{align*}
	is in $NLC_k$.
	
	\item Let $G = (V_G, E_G, lab_G) \in NLC_k$ and $H = (V_H, E_H, lab_H) \in NLC_k$ be two $k$-labeled graphs in $NLC_k$. Let $S \subseteq [k]^2$. Then
	\begin{align*}
	 	&G \times_S H = (V_G \cup V_H, E', lab') \qquad \text{ with }  lab'(v) = 
  		 \begin{cases}
			lab_G(v) &, \text{ if } v \in V_G\\
			lab_H(v) &, \text{ if } v \in V_H\\
		\end{cases}\\
		&\text{ and } E' = E_G \cup E_H \cup \{\{u,v\} \mid u \in V_G, v \in V_H, \text{ and } (lab_G(u), lab_H(v)) \in S\}
	\end{align*}
	is in $NLC_k$.
 \end{itemize}

\end{definition}
The NLC-width of a graph $G$, denoted by $nlc(G)$, is the smallest $k \geq 2$ such that there is a labeled graph $G' \in NLC_k$ with $unlab(G') = G$. As for clique-width, the expression consisting of the operations defined in Definition~\ref{def:nlc} is called a (NLC-width) $k$-expression. For a $k$-expression $t$, we again denote with $val(t)$ the resulting labeled graph and by $tree(t) = T$ canonical tree representation of $t$, the so called $k$-expression tree of $t$.
This means each leaf node of $T$ is marked with $\bullet_a$ for some $a \in [k]$ and each internal node is either marked with $\circ_R$ for some $R \colon [k] \mapsto [k]$ or with $\times_S$ for some $S \subseteq [k]^2$, according to the operations defined in Definition~\ref{def:cw} resp.\ Definition~\ref{def:nlc}. For a node $x \in V(T)$ we denote by $G^x$ the labeled graph defined by the $k$-expression represented by the subtree of $T$ rooted in $x$ and we define by $L_i^x = \{v \in V(G^x) \mid lab(v) = i\}$ the set of vertex in $G^x$ with label $i \in [k]$. For a node $x \in V(T)$, we will use the shortcut $dist^x(u,v) := dist_{G^x}(u,v)$ to denote the distance between two vertices $u$ and $v$ in $G^x$.
 
The following lemma shows that we can safely focus on NLC $k$-expression trees, since the 
NLC-width and clique-width only differs by a factor of two at most.
\begin{lemma}[\cite{johansson98}]
 For any graph $G$ it holds that $nlc(G) \leq cw(G) \leq 2 \cdot nlc(G)$.
\end{lemma}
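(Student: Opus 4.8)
The plan is to prove the two inequalities separately by transforming expression trees while carefully tracking the number of labels used.

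For the easier direction $cw(G)\le 2\cdot nlc(G)$ I would start from an NLC $k$-expression and build a clique-width $2k$-expression by structural induction, reserving labels $[k]$ for the ``current'' graph and labels $\{k+1,\dots,2k\}$ as scratch space. The base case $\bullet_a$ is shared by both formalisms. A relabeling $\circ_R$ is realized by first moving every label $i$ to the fresh label $k+R(i)$ via $\rho_{i,k+R(i)}$ for all $i\in[k]$ (the source labels stay in $[k]$ and the targets lie in $\{k+1,\dots,2k\}$, so these operations do not interfere with one another) and then moving each $k+j$ back to $j$; this correctly implements an arbitrary, possibly non-injective, function $R$, and the scratch labels are precisely what allow us to realize permutations that a plain $\rho$ cannot. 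The union $G\times_S H$ is simulated by relabeling $H$'s labels $j\mapsto k+j$, taking $\oplus$, applying $\eta_{i,k+j}$ for every $(i,j)\in S$ (since the two label ranges are disjoint, these joins touch only cross-pairs and never internal pairs), and finally relabeling $k+j\mapsto j$. Every intermediate graph uses at most $2k$ labels, giving the bound.

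For $nlc(G)\le cw(G)$ I would go the other way, turning a clique-width $k$-expression into an NLC $k$-expression without increasing the label count. The obstacle is that $\eta_{a,b}$ adds edges both \emph{across} the two most recently united parts and \emph{internally} within each part, whereas NLC's $\times_S$ adds only cross edges, and only at union time. I would first put the clique-width expression into a normal form in which each $\eta$ operation is applied as early as possible, pushing an internal join $\eta_{a,b}$ down into the subexpression of whichever operand already carries both labels $a$ and $b$. After this normalization, at a union node $G\oplus H$ followed by a block of joins $\eta_{a,b}$, the internal $a$-$b$ edges within $G$ and within $H$ are already present, so the top-level join contributes only the cross edges joining $a$-labeled vertices of $G$ to $b$-labeled vertices of $H$ and vice versa, which is exactly what $G\times_S H$ with $S=\{(a,b),(b,a)\}$ produces.

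The translation then maps $\bullet_a\mapsto\bullet_a$, sends each $\rho_{a,b}$ to the relabeling $\circ_R$ with $R(a)=b$ and $R$ the identity elsewhere, and replaces each maximal block consisting of a $\oplus$ together with the $\eta$ operations immediately above it by a single $\times_S$ whose set $S$ collects all joined ordered label pairs, all using the same $k$ labels. The main technical difficulty, which I expect to absorb most of the work, is justifying the normalization: one must verify by induction on the expression tree that pushing an $\eta$ down is sound --- in particular, that the vertices carrying labels $a$ and $b$ at the point of the original join are precisely those carrying these labels within the operand's subexpression --- and one must account for any intervening $\rho$ operations by composing the induced label maps so that the pushed-down join targets the correct label classes. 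Once the normal form is established, the label-for-label correspondence of the remaining operations makes the bound $nlc(G)\le cw(G)$ immediate.
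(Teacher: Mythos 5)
The paper itself does not prove this lemma; it imports it from the cited reference [johansson98], so there is no in-paper proof to compare against. Your argument is, in substance, the standard simulation proof from that source, and it is sound. The direction $cw(G)\le 2\cdot nlc(G)$ is complete as you describe it: the scratch range $\{k+1,\ldots,2k\}$ correctly realizes an arbitrary (possibly non-injective) map $R$ via the two-pass relabeling, and after shifting $H$'s labels the joins $\eta_{i,k+j}$ can only create cross edges, which is exactly the effect of $\times_S$. For $nlc(G)\le cw(G)$ your plan is the right one, but two remarks. First, the phrase ``pushing an internal join $\eta_{a,b}$ down into the subexpression of \emph{whichever} operand already carries both labels'' is slightly off: in general both operands carry both labels, so the join must be pushed into \emph{both} operands \emph{and} recorded as the cross pairs $\{(a,b),(b,a)\}$ in the $\times_S$ at that union (your subsequent sentence assumes this, so it is a slip of phrasing rather than of intent). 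Second, the normalization you defer is most cleanly discharged not as a rewriting of the expression but as a strengthened induction: show that for every clique-width subexpression $t$ and every set $J\subseteq[k]^2$ of pending joins there is an NLC $k$-expression for $val(t)$ with all $J$-edges added; then a $\rho$ node translates $J$ through the inverse label map (one pending pair may become several), an $\eta$ node enlarges $J$, a $\oplus$ node passes $J$ to both children and emits the cross pairs into $\times_S$, and at leaves $J$ is vacuous. This is exactly your push-down argument, but the induction makes soundness and termination immediate and handles interleaved $\rho$'s and $\eta$'s without defining a normal form.
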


\subsection{Modular-width}

A \emph{module} in a graph $G = (V,E)$ is a set $M \subseteq V$ of vertices such that all vertices outside of $M$ are either connected to none or to all vertices in $M$, i.e, that $M \cap N(x) = \emptyset$ or $M \subseteq N(x)$ for every vertex $x \in V \setminus M$. Thus, all vertices in $M$ have the same neighborhood in 
$V\setminus M$.
It is easy to see that $\emptyset$, $V$, and $\{v\}$ for every $v \in V$ are modules of any graph $G = (V,E)$; those sets are called \emph{trivial modules}. If a graph $G$ only admits trivial modules, we call $G$ \emph{prime}.

Consider a so called modular partition $P = \{M_1, M_2, \ldots, M_\ell\}$ that is a partition of the vertices of $G$ into modules with $\ell \geq 2$. Due to the definition of a module, it holds for any two modules $M_i$ and $M_j$ of $P$ that every vertex of $M_i$ is either connected to all vertices of $M_j$ or to none. In the first case we call $M_i$ and $M_j$ \emph{adjacent}, in the latter \emph{non-adjacent}.
Thus, for a fixed modular partition $P$ we get a compact representation of the connection between the modules in $P$ by shrinking each module in $P$ to a single vertex. This graph is called the \emph{quotient graph} of $G$ (together with the modular partition $P$).

\begin{definition}
	Let $P = \{M_1, M_2, \ldots, M_\ell\}$ be a modular partition of a graph $G = (V, E)$. The \emph{quotient graph} is defined by $G_{/P} = (\{q_{M_1}, q_{M_2}, \ldots, q_{M_\ell}\}, E_P)$ with $E_P = \{ \{ q_{M_i}, q_{M_j} \} \mid \exists u \in M_i, v \in M_j \colon \{u,v\} \in E  \}$.
\end{definition}

For a modular partition $P$ of a graph $G$, the quotient graph $G_{/P}$ 
is a compact representation of all the edges in $G$ with endpoints in different modules.
If one additionally knows all subgraphs $G[M_i]$, with $i \in [\ell]$, one can reconstruct $G$. Each subgraph $G[M_i]$ is called a \emph{factor}. Instead of explicitly storing all factors, one can recursively decompose them as well until one reaches trivial modules $\{v\}$. To make the decomposition unique, one considers only modular partitions consisting of strong modules.
A module $M \subseteq V$ of a graph $G$ is called a \emph{strong} module, if it does not overlap with any other module $M'$ of $G$, meaning that either $M$ and $M'$ are disjoint or one module is a subset of the other. One can represent all strong modules of a graph $G$ by an inclusion tree $MD(G)$. 
Each strong module $M$ in $G$ corresponds to a vertex $v_M$ in $MD(G)$. 
A vertex $v_A$ is an an ancestor of $v_B$ in $MD(G)$ if and only if $B \subsetneq A$ for the corresponding strong modules $A$ and $B$ of $G$. Hence, the root node of $MD(G)$ corresponds always to the complete vertex set $V$ of $G$ and every leaf of $MD(G)$ corresponds a singleton set $\{v\}$ with $v \in V$.  Consider an internal node $v_M$ of $MD(G)$ with the set of children $\{v_{M_1}, \ldots, v_{M_\ell}\}$, i.e., $v_M$ corresponds to a strong module $M$ of $G$ and $P = \{M_1, \ldots , M_\ell\}$ is a modular partition of $G[M]$ into strong modules where $M_i$ is the corresponding module of $v_{M_i}$, with $i \in [\ell]$. There are three types of internal nodes in $MD(G)$. 
A node $v_M$ in $MD(G)$ is \emph{degenerate}, if for any non-empty subset of the children of $v_M$ in $MD(G)$, the union of the corresponding modules induces a (not necessarily strong) module. In this case the quotient graph $G[M]_{/P}$ is either a clique or an independent set. In the former case one calls $v_M$ a \emph{parallel} node, in the latter a \emph{series} node. Another case are so called \emph{prime} nodes. Here, for no proper subset of the children of $v_M$, the union of the corresponding modules induces a module. In this case the quotient graph of $v_M$ is prime.
 Gallai showed there are no further nodes in $MD(G)$.

\begin{theorem}[\cite{GallaiT67}] \label{modular_decomposition_theorem}
 For any graph $G = (V, E)$ one of the three conditions is satisfied:
 \begin{itemize}
    \item $G$ is not connected,
    \item $\overline{G}$ is not connected,
    \item $G$ and $\overline{G}$ are connected and the quotient graph $G_{/P}$, where $P$ is the maximal modular partition of $G$, is a prime graph.
 \end{itemize}

\end{theorem}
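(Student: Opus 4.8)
The statement is a trichotomy, so the plan is first to argue that the three cases are exhaustive and mutually exclusive, and then to extract the real content, which lies in the third case. For exhaustiveness and exclusiveness I would show that $G$ and $\overline{G}$ cannot both be disconnected: if $G$ has components $C_1, \dots, C_p$ with $p \ge 2$, then any two vertices of $\overline{G}$ are either adjacent (if they lie in different $C_i$) or share a common neighbour in some other component, so $\overline{G}$ is connected. Hence at most one of the first two conditions occurs, and since the third explicitly demands that both $G$ and $\overline{G}$ be connected, exactly one of the three holds once its conclusion is established.

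For the third case, assume $G$ and $\overline{G}$ are both connected. The key technical tool is the well-known closure behaviour of modules under overlap: if two modules $M_1, M_2$ overlap, then $M_1 \cap M_2$, $M_1 \cup M_2$, $M_1 \setminus M_2$, $M_2 \setminus M_1$, and $M_1 \triangle M_2$ are all modules of $G$. I would first show that the maximal proper modules are pairwise disjoint, so that they form the maximal modular partition $P$. Suppose two distinct maximal proper modules $M_1, M_2$ overlap. Then $A := M_1 \setminus M_2$, $B := M_1 \cap M_2$, $C := M_2 \setminus M_1$ are non-empty modules, and $M_1 \cup M_2 = V$ by maximality, so $\{A, B, C\}$ is a modular partition of $V$. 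Since $A \cup B = M_1$, $B \cup C = M_2$, and $A \cup C = M_1 \triangle M_2$ are all modules, in the $3$-vertex quotient graph on $\{a, b, c\}$ \emph{every} pair of vertices is a module. A short case check shows that the only such graphs on three vertices are the empty graph and the triangle: in the former $G$ is disconnected, in the latter $\overline{G}$ is disconnected, each contradicting our assumption. Hence maximal proper modules are pairwise disjoint; and since every singleton $\{v\}$ is a proper module (as $|V| \ge 2$) contained in some maximal one, these maximal modules cover $V$ and yield the partition $P$.

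It remains to show that $G_{/P}$ is prime. Here I would invoke the quotient correspondence: for a modular partition $P$, a set $S$ of quotient-vertices is a module of $G_{/P}$ if and only if $\bigcup_{q_{M_i} \in S} M_i$ is a module of $G$. If $G_{/P}$ had a non-trivial module $S$ with $2 \le |S| \le \ell - 1$, then $M := \bigcup_{q_{M_i} \in S} M_i$ would be a module of $G$ strictly containing some maximal proper module $M_i \in S$; by maximality the only module properly containing $M_i$ is $V$, forcing $M = V$ and contradicting $|S| \le \ell - 1$. Thus $G_{/P}$ admits only trivial modules and is prime.

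The main obstacle is the disjointness argument for maximal proper modules; everything else is bookkeeping with the partitive closure and quotient correspondences. The overlap case is exactly where both connectivity hypotheses must be used at once: collapsing the overlap to $A, B, C$ forces the $3$-vertex quotient to be either edgeless or complete, and it is precisely the two connectivity assumptions that exclude these respective possibilities. I would also record the two auxiliary facts (partitive closure of modules and the module--quotient correspondence) as separate lemmas, since the excerpt states neither explicitly.
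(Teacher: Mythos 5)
The paper offers no proof of this statement to compare against: it is imported verbatim as a classical theorem of Gallai~\cite{GallaiT67}, so your argument should be judged on its own, and it stands up. The reduction of the whole theorem to the implication ``$G$ and $\overline{G}$ connected $\Rightarrow$ the quotient by the maximal modular partition is prime'' is the right framing (your mutual-exclusivity discussion is harmless but not needed, since the theorem only asserts that at least one case holds). The core overlap argument is correct: if two inclusion-maximal proper modules $M_1,M_2$ overlapped, partitive closure gives that $M_1\cup M_2$ is a module, hence equals $V$ by maximality, and then $A=M_1\setminus M_2$, $B=M_1\cap M_2$, $C=M_2\setminus M_1$ form a three-part modular partition in whose quotient every pair of vertices is a module (here you correctly need closure under \emph{symmetric difference}, i.e.\ the full partitive property, to handle $A\cup C$); the only three-vertex graphs with that property are $\overline{K_3}$ and $K_3$, contradicting connectivity of $G$ resp.\ $\overline{G}$. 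The primality step via the module--quotient correspondence is also sound: a nontrivial module of $G_{/P}$ would lift to a proper module of $G$ strictly containing a maximal proper module. Two minor points you should make explicit if this were written out in full: first, the degenerate cases ($|V|\le 2$, where the third alternative cannot arise anyway, so the partition into maximal proper modules is well defined whenever it is needed); second, the two auxiliary facts you lean on (partitive closure of overlapping modules, including symmetric difference, and the correspondence between modules of $G_{/P}$ and unions of parts that are modules of $G$) do require their own short proofs, as you yourself note --- with those lemmas recorded, the proof is complete.
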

 
 Theorem~\ref{modular_decomposition_theorem} implies that $MD(G)$ is unique. The tree $MD(G)$ is called the \emph{modular decomposition tree} and the \emph{modular-width}, denoted by $\mw=\mw(G)$, is the minimum $k \geq 2$ such that any prime node in $MD(G)$ has at most $k$ children. Since every node in $MD(G)$ has at least two children and there are exactly $n$ leaves, $MD(G)$ has at most $2n -1$ nodes.
 It is known that $MD(G)$ can be computed in time $\Oh(n + m)$ \cite{TedderCHP08}. We refer to a survey of Habib and Paul~\cite{HabibP10} for more information.

\section{APSP parameterized by clique-width}\label{sec:APSPcw}

Assuming SETH, one cannot solve \prob{Diameter} (and thus, unweighted \APSP) in time $2^{o(\cw)} \cdot n^{2 - \varepsilon}$ \cite{CoudertDP18}. In this section, we show how to solve \VWAPSP in time $\Oh(\cw^2 n^2)$.

\begin{theorem}\label{thm:APSP}
    For every graph $G = (V, E)$, given together with a clique-width $k$-expression and vertex weights $\omega \colon V \rightarrow \mathbb{R}_{\geq 0}$, \prob{Vertex-Weighted All-Pairs Shortest Paths} can be solved in time $\Oh( k^2n^2)$. 
\end{theorem}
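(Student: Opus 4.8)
The plan is to process the NLC $k$-expression tree $T$ bottom-up, maintaining at each node $x$ enough distance information to reconstruct all shortest paths in $G^x$. Since the clique-width and NLC-width differ by at most a factor of two (johansson98), working with an NLC $k$-expression tree is safe and only affects the running time by a constant factor. The key insight I would exploit is that in a $k$-labeled graph, any shortest path between two vertices can be summarized by how it enters and leaves each label class: since all vertices of a given label $i$ have identical connections to the outside, the "interface" of $G^x$ to the rest of the construction is captured by only $k$ label classes. Concretely, for each node $x$ I would maintain, for every vertex $v \in V(G^x)$ and every label $i \in [k]$, the quantity $dist^x(v, L_i^x)$, i.e.\ the minimum distance from $v$ to \emph{some} vertex of label $i$ in $G^x$. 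This is a table of size $\Oh(kn)$ per node rather than the full $\Oh(n^2)$ distance matrix, which is what makes the parameterized running time possible.

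\textbf{The bottom-up computation} proceeds according to the three NLC operations. For a leaf $\bullet_a$ the table is trivial. For a relabel node $\circ_R(G)$, no edges change, so the underlying distances are unchanged; I only need to recompute the label-indexed table by taking, for each target label $j$, the minimum of $dist(v, L_i)$ over all old labels $i$ with $R(i) = j$, which costs $\Oh(k^2)$ per vertex and hence $\Oh(k^2 n)$ per node. The interesting case is the union-with-join node $G \times_S H$. Here a shortest path from $u$ to $v$ may cross between $V_G$ and $V_H$ several times, but every such crossing uses an edge sanctioned by $S$, i.e.\ goes from a label-$i$ vertex on one side to a label-$j$ vertex on the other with $(i,j) \in S$. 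The crucial structural fact is that crossing into a label class "resets" the path: once we reach \emph{any} vertex of label $j$ on the other side, the cheapest continuation only depends on the label $j$, not on the specific crossing vertex. This lets me set up a small auxiliary shortest-path instance on a graph whose nodes are the $2k$ label classes (or $\Oh(k)$ super-nodes), with arc weights derived from the already-computed tables of $G$ and $H$ together with the bipartite connections given by $S$. Running an all-pairs computation on this $\Oh(k)$-node auxiliary graph, then combining with each vertex's distance to its own label classes, yields both the updated label-indexed table and, for the final root, the full distance matrix.

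\textbf{For the running time accounting}, I would charge the work at each node as follows: the relabel and leaf nodes are cheap, and at a join node the dominant cost is computing, for each of the $\Oh(n)$ vertices, its distance to each of the $\Oh(k)$ label classes in the combined graph, where each such value is obtained by minimizing over $\Oh(k)$ crossing options using the $\Oh(k)$-node auxiliary all-pairs table; this gives $\Oh(k^2 n)$ per node. Summing $\Oh(k^2 n)$ over the $\Oh(n)$ nodes of the expression tree yields $\Oh(k^2 n^2)$ overall, with the full distance matrix extracted at the root in $\Oh(n^2)$ additional time by reading off $dist(u,v)$ values. The careful bookkeeping at the join node is the only place where I expect to pay $k^2$ rather than $k$, so I would check that the auxiliary all-pairs step does not blow up.

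\textbf{The main obstacle} I anticipate is proving the correctness of the label-class "reset" principle at the join node, namely that maintaining only $dist^x(v, L_i^x)$ (one value per vertex per label) is genuinely sufficient to recover exact distances, rather than merely an approximation or an upper bound. The worry is a shortest path that oscillates between the two sides many times: I must argue by an exchange/substructure argument that any such path can be decomposed into segments, each lying entirely within $G$ or within $H$, glued at crossing edges, and that the optimal way to realize each gluing depends only on the labels at the crossing points. Formalizing this — presumably by induction on the number of crossings, showing that an optimal multi-crossing path never needs to "remember" which specific vertex it used to enter a label class — is where the real work lies; everything else is routine dynamic-programming bookkeeping once that invariant is established.
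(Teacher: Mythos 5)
There is a genuine gap: your algorithm is purely bottom-up, but bottom-up information does not suffice for this problem. For a pair $u,v$ whose leaves first meet at a join node $x$ (children $y,z$, say $u \in V(G^y)$, $v \in V(G^z)$), everything you maintain describes distances \emph{inside} $G^x$, whereas a shortest $u$-$v$-path in $G$ may leave $V(G^x)$ entirely and return via join edges introduced \emph{above} $x$ in the expression tree. In the extreme case, $u$ and $v$ can be disconnected in $G^x$ yet at distance $\omega(u)+\omega(w)+\omega(v)$ in $G$ through a vertex $w$ added later. Consequently, your plan to "read off the full distance matrix at the root in $\Oh(n^2)$ time" fails: at the root you hold only the $\Oh(kn)$ table of values $dist_G(v,L_i)$ plus the $\Oh(k)$-node auxiliary instance, which determines $dist_G(u,v)$ only for pairs split at the root; for all pairs meeting strictly below the root, the information you have computed refers to $dist^{x}(u,v)$-type quantities, which can strictly overestimate (even be $\infty$ compared to) the true distance in $G$. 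Maintaining full $n \times n$ matrices per node instead would blow the budget, since the join update costs $\Oh(k^2)$ per pair, giving $\Oh(k^2n^3)$ overall.

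The paper closes exactly this hole with two further passes that your proposal has no counterpart for. Its first, bottom-up phase essentially matches yours (including the small auxiliary graph; note the paper uses $4k$ vertices, with separate "entry" and "exit" copies of each label class and directed zero-weight join arcs, which is what makes the vertex-weight accounting come out right). It then runs a \emph{top-down} phase propagating, for every node $x$ and every $v \in V(G^x)$, values $d^x_{v,i,\cdot}$ equal to the cheapest cost of a path in the \emph{whole} graph $G$ from $v$ to a label class of a child of $x$ whose last step enters from outside that child; these are computed from the parent's $d$-values combined with the local auxiliary graph. A third pass then evaluates, at the unique join node separating $u$ and $v$, the formula $dist_G(u,v) = \min_{i \in [k]}\{d^x_{u,i,z} + a^z_{v,i}\}$, marrying top-down (global) and bottom-up (local) information. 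The obstacle you flagged — decomposing paths that oscillate between the two sides of a join — is real but is the easier part, handled by the auxiliary-graph lemmas; the idea your proposal is missing is the root-to-leaf propagation that lets the small per-node tables speak about shortest paths in all of $G$ rather than only in $G^x$.
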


For an input graph $G=(V,E)$, given together with a clique-width $k$-expression for some $k \geq 2$, we transform in a first step the clique-width $k$-expression to an NLC-width $k$-expression in linear time as described for example in \cite{johansson98}. 
For the rest of this section, by writing $k$-expression we always refer to an NLC-width $k$-expression instead of a clique-width $k$-expression. We interpret the (NLC-width) $k$-expression as a $k$-expression tree $T$, in which each node $v \in V(T)$ is marked with an operation of the $k$-expression that is applied to the children of $v$. Accordingly, $T$ has exactly $n$ leafs, each marked with an operation $\bullet_i$ for $i \in [k]$, and exactly $n-1$ nodes marked with an operation $\times_S$ for some $S \subseteq [k]^2$. For ease of presentation, we assume that there is exactly one node marked with an operation $\circ_R$ for some $R \colon [k] \to [k]$ in between any two nodes marked with $\times_S$ (using $R(i)=i$ when no actual relabeling is necessary), hence, the length of the $k$-expression is $\Oh(n)$. Note, that the length of a clique-width $k$-expression is $\Oh(n+m)$ in general. For a node $x \in V(T)$ we denote by $G^x$ the labeled graph that is defined by the subexpression tree of $T$ rooted in $x$.

The algorithm consists of three phases. In the first phase, we traverse $T$ in a bottom-up manner: For each node $x \in V(T)$ we partition the vertex set into sets of same-labeled vertices and compute 
the shortest distance for each single vertex to (the closest vertex in) each label set.
Additionally, we compute the distance between each pair of label sets, i.e., the shortest distance of two vertices of the respective sets. Note, however, that in the first phase we only consider for each each node $x \in V(T)$ the distances in the graph $G^x$. In the second phase, we perform a top-down traversal of the $k$-expression tree $T$ and consider the whole graph $G$. 
Once we have computed the necessary values in phase one and two, we traverse $T$ one last time and finally compute the shortest path distances between all pairs of vertices.

\subparagraph{First Phase.} 

For a node $x \in V(T)$, which corresponds to the $k$-labeled graph $G^x$, we define $L_i^x = \{v \in V(G^x) \mid lab(v) = i\}$ as the set of all vertices in $G^x$ with label $i$. Note, that $unlab(G^x)$ is an induced subgraph of $G$ for any $x \in V(T)$.
We traverse $T$ in a bottom-up manner and compute for each node $x \in V(T)$ and for all pairs $(i,j) \in [k]^2$ of labels the shortest distance between some vertex in $L_i^x$ and some vertex in $L_j^x$. Additionally, we compute for any vertex $v \in V(G^x)$ and any label $i \in [k]$ the shortest distance from $v$ to some vertex in $L_i^x$. To be precise, for a node $x \in V(T)$ we compute the following values:
\begin{align*}
    c_{i,j}^x &= dist^x(L_i^x,L_j^x) &\text{ for } i,j \in [k] \\
    a_{v,i}^x &= dist^x(v, L_i^x)  &\text{ for } v \in V(G^x), i \in [k] 
\end{align*}

For nodes $x \in V(T)$ that are marked with $\times_S$ for some $S \subseteq [k]^2$ we need to compute some auxiliary values. Let $y$ and $z$ be the two children of $x$ in $T$. This means that $G^x$ consists of the disjoint union of $G^y$ and $G^z$ together with a full join between the vertex sets $L_i^y$ and $L_j^z$ for each $(i,j) \in S$. Thus, one can partition the vertex set of $G^x$ into the $2k$ sets $\{L_1^y, \ldots, L_k^y, L_1^z, \ldots, L_k^z\} = \mathcal{L}^x$. For each pair $(A,B) \in \mathcal{L}^x \times \mathcal{L}^x$ of vertex sets, we compute the shortest distance between some vertex in $A$ to some vertex of $B$. In addition, we compute the shortest distance between $A$ and $B$ with the constraint that either the first edge, the last edge, or the first and the last edge of the shortest path is an edge of a newly inserted full join defined by $S$. This achieves the effect that we additionally compute the shortest distance from (1) \textit{all} vertices of $A$ to \textit{some} vertex of $B$, (2) from \textit{some} vertex of $A$ to \textit{all} vertices of $B$, and (3) from \textit{all} vertices of $A$ to \textit{all} vertices of $B$. Doing this, one can e.g.\ combine a path that ends at \textit{some} vertex of $A$ with a path that can start at \textit{any} vertex of $A$.
In the following, we will describe how to compute the required values for each of the three different types of nodes in the $k$-expression tree $T$.

For the base case, let $x$ be a leaf of the $k$-expression tree $T$. Thus, the node $x \in V(T)$ is marked with $\bullet_\ell$ for some $\ell \in [k]$. This means that $G^x$ consists of a single vertex $v$ with label~$\ell$. In this case the following holds:
\begin{align*}
    c_{i,j}^x &= \begin{cases}
                    \omega(v) \qquad &\text{ if } i = j  = \ell \\
                    \infty \qquad &\text{ otherwise } \\
                \end{cases} 
                &\text{ for } i,j \in [k] \\
    a_{v,i}^x &= \begin{cases}
                    \omega(v) \qquad &\text{ if } i = \ell \\
                    \infty \qquad &\text{ otherwise } \\
                \end{cases}  
                &\text{ for } v \in V(G^x), i \in [k]  
\end{align*}
Now, let $x \in V(T)$ be an internal node of the $k$-expression tree $T$ marked with $\circ_R$ for some $R \colon [k] \to [k]$. Let $y \in V(T)$ be the unique child of $x$ in $T$. Since we traverse $T$ in a bottom-up manner, we have already computed the values $a_{v,i}^y$ for all $v \in V(G^y)$ and $i \in [k]$ and the values $c_{i,j}^x$ for all $i,j\in[k]$. Note, that $unlab(G^x) = unlab(G^y)$, which, in particular, implies that distances between vertices are identical in both graphs (though distances between label sets may be not, as these sets may be different).

\begin{lemma}
    Let $x \in V(T)$ be an internal node of a $k$-expression tree $T$ marked with $\circ_R$ for some $R \colon [k] \to [k]$ and let $y$ be the unique child of $x$ in $T$. Then 
    \begin{align}
        c_{i,j}^x = \min_{i' \in R^{-1}(i), j' \in R^{-1}(j)} c_{i',j'}^y \text{ for all } i,j \in [k].
    \end{align}

\end{lemma}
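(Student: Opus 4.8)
The plan is to prove the claimed identity by showing the two distance quantities $c_{i,j}^x$ and $\min_{i' \in R^{-1}(i),\, j' \in R^{-1}(j)} c_{i',j'}^y$ are equal, exploiting the fact that the relabeling operation $\circ_R$ changes only the labels, not the underlying graph. The crucial observation, already noted in the excerpt, is that $unlab(G^x) = unlab(G^y)$, so distances between any pair of \emph{vertices} are identical in $G^x$ and $G^y$; only the partition of vertices into label sets changes. This reduces the claim to a statement purely about how the label sets reorganize under $R$.

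**First I would** unfold the definition on the left-hand side: $c_{i,j}^x = dist^x(L_i^x, L_j^x) = \min_{u \in L_i^x,\, v \in L_j^x} dist^x(u,v)$. Then I would establish the key set identity describing how label sets transform under relabeling. Since the operation $\circ_R$ sets $lab'(w) = R(lab(w))$ for every vertex $w$, a vertex $w$ has label $i$ in $G^x$ exactly when its label in $G^y$ lies in $R^{-1}(i)$; that is,
\begin{align*}
    L_i^x = \{ w \in V(G^x) \mid R(lab_y(w)) = i \} = \bigcup_{i' \in R^{-1}(i)} L_{i'}^y.
\end{align*}
An analogous identity holds for $L_j^x$. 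This is the heart of the argument and is essentially a direct consequence of the definition of $\circ_R$.

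**Next I would** substitute these decompositions into the minimization and use that distances agree across the two graphs. Because $dist^x(u,v) = dist^y(u,v)$ for all vertices $u,v$, we obtain
\begin{align*}
    c_{i,j}^x = \min_{u \in L_i^x,\, v \in L_j^x} dist^x(u,v)
    = \min_{\substack{i' \in R^{-1}(i) \\ j' \in R^{-1}(j)}} \; \min_{u \in L_{i'}^y,\, v \in L_{j'}^y} dist^y(u,v)
    = \min_{\substack{i' \in R^{-1}(i) \\ j' \in R^{-1}(j)}} c_{i',j'}^y,
\end{align*}
where the middle step regroups a minimum over a union of sets as a minimum of minima over the individual pieces, and the last step reapplies the definition of $c^y$. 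One boundary case worth a remark is when $R^{-1}(i) = \emptyset$ (no old label maps to $i$): then $L_i^x = \emptyset$, both sides equal $\infty$ under the convention that a minimum over the empty set is $\infty$, so the identity still holds.

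**I do not expect any serious obstacle here**, as the lemma is essentially bookkeeping built on the two facts that relabeling preserves the underlying graph and merely coarsens the label partition via the preimages $R^{-1}(\cdot)$. The only point requiring minor care is the empty-preimage case above and the silent convention that $\min \emptyset = \infty$; handling these correctly ensures the formula is valid for \emph{all} $i,j \in [k]$ rather than only those in the image of $R$.
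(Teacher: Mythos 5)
Your proof is correct and follows essentially the same route as the paper's: both rest on the set identity $L_i^x = \bigcup_{i' \in R^{-1}(i)} L_{i'}^y$ combined with the fact that $unlab(G^x) = unlab(G^y)$ preserves vertex distances, then regroup the minimum over the union. Your explicit handling of the empty-preimage case ($R^{-1}(i)=\emptyset$, both sides $\infty$) is a small addition the paper leaves implicit, but the argument is otherwise the same.
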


\begin{proof}
The vertex sets of $G^x$ and $G^y$ can be partitioned into the label sets $\{L_1^x, \ldots, L_k^x\}$ resp.\ $\{L_1^y, \ldots, L_k^y\}$ and it holds that $L_{i}^y \subseteq L_{R(i)}^x$ and $L_i^x = \bigcup_{j \in R^{-1}(i)}L_{j}^y$ for all $i \in [k]$. 
It follows that
\begin{align*}
c_{i,j}^x &= dist^x(L_i^x, L_j^x) \\
		  &= dist^y(\bigcup_{i' \in R^{-1}(i)}L_{i'}^y, \bigcup_{j' \in R^{-1}(j)}L_{j'}^y) \\
		  &= \min_{i' \in R^{-1}(i), j' \in R^{-1}(j)} dist^y(L_{i'}^y,L_{j'}^y) \\
		  &= \min_{i' \in R^{-1}(i), j' \in R^{-1}(j)} c_{i',j'}^y
\end{align*}
Here, it is crucial that distances between vertices are the same in $G^x$ and $G^y$, as noted above.
\end{proof}

Note, that the computation of all $c_{i,j}^x$ can be realized in time $\Oh(k^2)$ by updating for every $c_{i,j}^y$ the corresponding value $c_{R(i),R(j)}^x$. The values $a_{v,i}^x$ can be similarly computed from the values at the child node:

\begin{lemma}
    Let $x \in V(T)$ be an internal node of a $k$-expression tree $T$ marked with $\circ_R$ for some $R \colon [k] \to [k]$ and let $y$ be the unique child of $x$ in $T$. Then $a_{v,i}^x = \min_{j \in R^{-1}(i)} a_{v,j}^y$ for all $v \in V(G^x)$ and $i \in [k]$.
\end{lemma}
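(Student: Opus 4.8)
The plan is to mirror the proof of the preceding lemma for the label-set-to-label-set distances, exploiting that the relabel operation $\circ_R$ leaves the underlying graph untouched. First I would record the two structural facts I need: since $unlab(G^x)=unlab(G^y)$, the vertex-to-vertex distances agree, that is $dist^x(v,w)=dist^y(v,w)$ for all vertices $v,w$; and the relabel operation merges label classes according to $R$, so that $L_i^x=\bigcup_{j\in R^{-1}(i)}L_j^y$ for every $i\in[k]$. Both follow directly from the definition of $\circ_R$ in Definition~\ref{def:nlc}, and the first is exactly the observation already highlighted for the previous lemma.

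Given these facts, I would expand $a_{v,i}^x=dist^x(v,L_i^x)$ by substituting the decomposition of $L_i^x$ and then distributing the minimum over the union of label classes:
\begin{align*}
a_{v,i}^x &= dist^x\Bigl(v,\bigcup_{j\in R^{-1}(i)}L_j^y\Bigr) \\
          &= \min_{j\in R^{-1}(i)} dist^y(v,L_j^y) \\
          &= \min_{j\in R^{-1}(i)} a_{v,j}^y.
\end{align*}
Here the first equality replaces $L_i^x$ by its preimage decomposition and uses that vertex distances coincide in $G^x$ and $G^y$; the second is the elementary identity that the distance from $v$ to a union of sets equals the minimum of the distances to the individual sets; and the third is just the definition of $a_{v,j}^y$, which is available because $T$ is processed bottom-up and $y$ is a child of $x$.

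I expect no genuine obstacle, as the statement is a direct analogue of the previous lemma and the argument is essentially bookkeeping. The one point deserving explicit care is the degenerate case $R^{-1}(i)=\emptyset$, i.e.\ when no label is mapped to $i$: then $L_i^x=\emptyset$, so $a_{v,i}^x=dist^x(v,\emptyset)=\infty$, which matches the value of an empty minimum under the convention $\min\emptyset=\infty$. I would also remark, as for the companion lemma, that all values $a_{v,i}^x$ can be computed in time $\Oh(k\cdot|V(G^x)|)$ at node $x$ by scanning each value $a_{v,j}^y$ once and updating the corresponding $a_{v,R(j)}^x$.
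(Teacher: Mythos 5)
Your proof is correct and follows essentially the same route as the paper: decompose $L_i^x=\bigcup_{j\in R^{-1}(i)}L_j^y$, use that distances are unchanged under relabeling since $unlab(G^x)=unlab(G^y)$, and distribute the minimum over the union. Your extra remarks on the empty-preimage case and the $\Oh(nk)$ update cost are sound and match the paper's surrounding discussion.
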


\begin{proof}
Let $\{L_1^x, \ldots, L_k^x\}$ resp.\ $\{L_1^y, \ldots, L_k^y\}$ be the partition in $G^x$ resp.\ $G^y$ of the vertex set into sets of same labeled vertices. Again, it holds that $L_{i}^y \subseteq L_{R(i)}^x$ and $L_i^x = \bigcup_{j \in R^{-1}(i)}L_{j}^y$ for all $i \in [k]$. Thus,
\begin{align*}
a_{v,i}^x &= dist^x(v,L_i^x) \\
		  &= dist^y(v, \bigcup_{j \in R^{-1}(i)}L_{j}^y) \\
		  &= \min_{j \in R^{-1}(i)} dist^y(v, L_j^y) \\
		  &= \min_{j \in R^{-1}(i)} a_{v,j}^y \qedhere
\end{align*} 
\end{proof}
The running time for computing the values $a_{v,i}^x$ is $\Oh(nk)$ since we need to consider each value $a_{v,i}^y$ for any $v \in V(G^y)$ and $i \in [k]$ exactly once.

Finally, let $x \in V(T)$ be an internal node of the $k$-expression tree $T$ marked with $\times_S$ for some $S \subseteq [k]^2$. Denote by $y \in V(T)$ and $z \in V(T)$ the two children of $x$ in $T$, meaning that $G^x$ combines the two labeled graphs $G^y$ and $G^z$ by introducing for each $(i,j) \in S$ a full join between the vertices in $L_i^y$ and those in $L_j^z$. Thus, $V(G^x) = V(G^y) \dot{\cup} V(G^z)$ and one can partition the vertices of $G^x$ into the $2k$ vertex sets $\{L_1^y, \ldots, L_k^y, L_1^z, \ldots, L_k^z\}$ and $L_i^x = L_i^y \dot{\cup} L_i^z$. 

To compute the desired distances between the label sets $\{L_1^y, \ldots, L_k^y, L_1^z, \ldots, L_k^z\}$, we construct an edge-weighted directed graph $H^x$ that represents all the distances between the label sets in a graph with only $4k$ vertices.

For each label set $L_i^a$ of $G^x$ with $i \in [k]$ and $a \in \{y,z\}$ we create two vertices $v_i^a$ and $u_i^a$. Let $V^a = \{v_i^a \mid i \in [k] \}$ resp.\ $U^a = \{u_i^a \mid i \in [k]\}$ for $a \in \{y,z\}$. We add a directed full join from $V^y$ to $U^y$ resp.\ from $V^z$ to $U^z$ with weight equal to the length of a shortest path between the two corresponding label sets. Finally, we connect vertices in $U^y$ with vertices in $V^z$, resp.\ vertices in $U^z$ with vertices in $V^y$, if and only if the corresponding pair is contained in $S$, i.e., if there is a full join in $G^x$ between the two corresponding label sets. See also Figure~\ref{Fig:Gstar} for an illustration. 
Formally, we define the directed, edge-weighted graph $H^x$ as follows.

\begin{figure}[t]

\centering

\begin{tikzpicture}

   \def \gap {3}
   \def \distance {3}
   \node[draw, circle] (Vy) at (0,0) {$V^y$};
   \node[draw, circle] (Uy) at (0,\distance) {$U^y$};
   \node[draw, circle] (Vz) at (\distance,\distance) {$V^z$};
   \node[draw, circle] (Uz) at (\distance,0) {$U^z$};
	
   \draw[->, >=latex] ([yshift=\gap] Vy.north) to node[left] {$c_{i,j}^y$} ([yshift=-\gap] Uy.south);
   \draw[->, >=latex] ([xshift=\gap] Uy.east) to node[above] {$S$} ([xshift=-\gap] Vz.west);
   \draw[->, >=latex] ([yshift=-\gap] Vz.south) to node[right] {$c_{i,j}^z$} ([yshift=\gap] Uz.north);
   \draw[->, >=latex] ([xshift=-\gap] Uz.west) to node[below] {$S^{-1}$} ([xshift=\gap] Vy.east);
    
\end{tikzpicture}
\caption{Construction of the auxiliary graph $H^x$. Each large node consists of $k$ disjoint vertices corresponding to one of the $k$ label sets in $G^y$ resp.\ $G^z$. Between $V^y$ and $U^y$ there is a full join, each edge between the corresponding vertex of $L_i^y$ and $L_j^y$ is weighted by $c_{i,j}^y$, analogously for $V^z$ and $U^z$. Vertices in $U^y$ are only connected to those vertices in $V^z$ for which the corresponding label sets are connected via $S$, analogously for $U^z$ and $V^y$.}
\label{Fig:Gstar}
\end{figure}
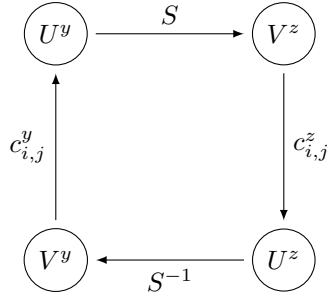

\begin{definition}\label{def:gstar}
Let $x \in V(T)$ be an internal node of a $k$-expression tree $T$ marked with $\times_S$ for some $S \subseteq [k]^2$ and let $y \in V(T)$ and $z \in V(T)$ be the children of $x$. We define $H^x$ as a directed, edge-weighted graph on $4k$ vertices created as follows: 
\begin{itemize}
	\item For each label set $L_i^a \in \{L_1^y, \ldots, L_k^y, L_1^z, \ldots, L_k^z\}$ we create two vertices $v_i^a$ and $u_i^a$ for $i \in [k]$ and $a \in \{y,z\}$. 
	\item Add edges $(v_i^y, u_j^y)$ with cost $c_{i,j}^y$ for all $i,j \in [k]$. 
	\item Add edges $(v_i^z, u_j^z)$ with cost $c_{i,j}^z$ for all $i,j \in [k]$.
	\item Add edges $(u_i^y, {v}_j^z)$ with cost zero for all $(i,j) \in S$.
	\item Add edges $(u_i^z, {v}_j^y)$ with cost zero for all $(j,i) \in S$.
\end{itemize} 
\end{definition}

Note, that some edges may have cost $\infty$ as there is no path of the requested type exists.
Next, we will see that $H^x$ exhibits all the desired distances from $G^x$ in a compact way.

\begin{theorem}\label{thm:gstarproperties}
Let $x \in V(T)$ be an internal node of a $k$-expression tree $T$ marked with $\times_S$ for some $S \subseteq [k]^2$ with children $y$ and $z$. Let $H^x$ be the graph as defined in Definition~\ref{def:gstar}. Then the following holds:
\begin{itemize}
    \item[(1)] $dist_{H^x}(v_i^a, u_j^b) = dist^x(L_i^a, L_j^b)$ for all $a, b  \in \{y,z\}$ and $i,j \in [k]$.
    
    \item[(2)] $dist_{H^x}(u_i^a, u_j^b) = \min_{P \in \mathcal{P}} \omega(P) - \min_{v \in L_i^a} \omega(v)$ where $\mathcal{P}$ is the set of all paths in $G^x$ starting in $L_i^a$, ending in $L_j^b$, and having the second vertex in $V(G^x) \setminus V(G[a])$.
    
    \item[(3)] $dist_{H^x}(v_i^a, v_j^b) = \min_{P \in \mathcal{P}} \omega(P) - \min_{v \in L_j^b} \omega(v)$ where $\mathcal{P}$ is the set of all paths in $G^x$ starting in $L_i^a$, ending in $L_j^b$, and having the penultimate vertex in $V(G^x) \setminus V(G[b])$.
    
    \item[(4)] $dist_{H^x}(u_i^a, {v}_j^b) = \min_{P \in \mathcal{P}} \omega(P) - \min_{v \in L_i^a} \omega(v) - \min_{v \in L_j^b} \omega(v)$ where $\mathcal{P}$ is the set of all paths in $G^x$ starting in $L_i^a$, ending in $L_j^b$, and having the second vertex in $V(G^x) \setminus V(G[a])$ and the penultimate vertex in $V(G^x) \setminus V(G[b])$.
\end{itemize}
\end{theorem}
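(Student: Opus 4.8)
My plan is to exploit the rigid layered structure of $H^x$. Writing $\bar a$ for the child opposite to $a$, every edge of $H^x$ either runs from a $v$-vertex to a $u$-vertex of the \emph{same} child (a $c$-weighted edge) or from a $u$-vertex to a $v$-vertex of $\bar a$ (a zero-weight crossing edge). Hence every directed path in $H^x$ alternates between these two edge types, and a path from $v_i^a$ to $u_j^b$ visits a sequence $v_{p_0}^{a_0}, u_{q_0}^{a_0}, v_{p_1}^{a_1}, u_{q_1}^{a_1}, \dots, u_{q_r}^{a_r}$ in which $a_0 = a$, $p_0 = i$, $a_r = b$, $q_r = j$, the child $a_t$ flips at every crossing, and the total cost is $\sum_{t} c_{p_t,q_t}^{a_t}$ since crossings are free. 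The heart of the argument is to match such sequences with the decomposition of a path of $G^x$ into its maximal subpaths that stay inside one child. This decomposition is well defined because, by the $\times_S$ construction, the join edges are the only edges of $G^x$ between the two children, so a path of $G^x$ switches child exactly along a join edge.

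I would first establish (1). For $dist_{H^x}(v_i^a, u_j^b) \ge dist^x(L_i^a, L_j^b)$, take a shortest $H^x$-path, replace each $c_{p_t,q_t}^{a_t}$-edge by a witnessing path inside child $a_t$ from $L_{p_t}^{a_t}$ to $L_{q_t}^{a_t}$, and glue consecutive witnesses with a join edge. Gluing always works: the crossing edge of $H^x$ exists exactly when the corresponding pair lies in $S$, and then the join in $G^x$ is \emph{complete}, so the (arbitrary) exit vertex of one witness is adjacent to the (arbitrary) entry vertex of the next. This produces a walk in $G^x$ of weight $\sum_t c_{p_t,q_t}^{a_t}$; as weights are non-negative we may shortcut it to a simple path of no larger weight, whence $dist^x(L_i^a, L_j^b) \le \sum_t c_{p_t,q_t}^{a_t} = dist_{H^x}(v_i^a, u_j^b)$. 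For the reverse inequality, cut a shortest path $P$ of $G^x$ from $L_i^a$ to $L_j^b$ at its join edges; the segments are vertex-disjoint (as $P$ is simple), each is a genuine path inside its child, so its weight is at least $c_{p_t,q_t}^{a_t}$, and the induced label sequence traces a valid $H^x$-path of cost $\sum_t c_{p_t,q_t}^{a_t} \le \omega(P)$. Hence $dist_{H^x}(v_i^a, u_j^b) \le dist^x(L_i^a, L_j^b)$.

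The three boundary statements reduce to (1) by peeling the forced crossing(s). Since the only edges leaving $u_i^a$ are zero-cost crossings, $dist_{H^x}(u_i^a, u_j^b) = \min_p dist_{H^x}(v_p^{\bar a}, u_j^b)$ over all $p$ with $(i,p) \in S$ (resp.\ $(p,i) \in S$), which by (1) equals $\min_p dist^x(L_p^{\bar a}, L_j^b)$. On the other side, any $P \in \mathcal{P}$ starts at some $w_0 \in L_i^a$ and crosses immediately into some $L_p^{\bar a}$; because the join is complete, the choice of $w_0$ is independent of the remainder, so $\min_{P \in \mathcal{P}} \omega(P) = \min_{v \in L_i^a}\omega(v) + \min_p dist^x(L_p^{\bar a}, L_j^b)$. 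Subtracting proves (2). Statement (3) is the mirror image, peeling the forced final crossing into $v_j^b$ and decoupling the free endpoint in $L_j^b$, and (4) peels both crossings at once, removing $\min_{v \in L_i^a}\omega(v)$ and $\min_{v \in L_j^b}\omega(v)$ simultaneously.

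The step I expect to be most delicate is the vertex-weight bookkeeping, not the combinatorial matching. One must verify that gluing two witnesses at a complete join neither drops nor double-counts the endpoint weights: each $c$-value already pays for the weights of both its endpoints, and adjacent segments lie in different children and are therefore disjoint. Equally, the subtraction of $\min_{v \in L}\omega(v)$ must be justified exactly; this is precisely the point where ``starting/ending at a $u$- or $v$-vertex'' turns into ``the boundary vertex is unpaid and hence free to choose'', and it relies essentially on the join being complete. The degenerate length-zero segments --- a vertex entered and immediately left, matching $c_{p,p}^{a} = \min_{v \in L_p^a}\omega(v)$ --- are a routine special case of the same accounting.
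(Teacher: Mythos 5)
Your proof is correct, and its engine is the same as the paper's: the strict $v \to u \to v \to u$ alternation of $H^x$ is matched against the decomposition of a $G^x$-path into maximal segments lying inside one child, with the completeness of the joins guaranteeing that segments can be glued together; this is exactly what Lemma~\ref{lem:GstarToG} and Lemma~\ref{lem:optimumPathInGstar} do. You differ in two ways. First, for the inequality $dist_{H^x}(v_i^a,u_j^b) \leq dist^x(L_i^a,L_j^b)$ you only use that each single-child segment of a shortest path weighs at least its $c$-value, which is immediate from the definition of the $c$-values as minima; the paper instead shows via an exchange argument that every segment weighs \emph{exactly} its $c$-value, which is more than this direction needs, so your version is a small simplification. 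Second, and more substantially, the paper proves (2)--(4) by re-running the whole correspondence with modified endpoints (Corollary~\ref{cor:gstarToGpath} and Corollary~\ref{cor:optimumPathInGstar}, both justified ``by the same argumentation''), whereas you derive (2)--(4) formally from (1) by peeling the forced zero-weight crossing edges incident to $u_i^a$ resp.\ $v_j^b$ and, on the $G^x$ side, decoupling the free boundary vertex via completeness of the join. This localizes all endpoint-weight bookkeeping in one short argument and avoids duplicating the correspondence proof, at the price of a few degenerate cases (e.g.\ the direct edge $u_i^a \to v_j^b$ when the pair lies in $S$) which you rightly flag as routine. One point you must make explicit: your peeling step splits $\min_{P \in \mathcal{P}} \omega(P)$ as $\min_{v \in L_i^a}\omega(v)$ plus a remainder distance, but prepending the minimum-weight vertex of $L_i^a$ to a shortest remainder may revisit that vertex, so this step --- and likewise your ``shortcut to a simple path'' remark --- is sound only because the paper's definition of a path allows repeated vertices (it is really a walk). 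Under a strict simple-path reading, claims (2)--(4) can genuinely fail: one can build examples where the constrained family $\mathcal{P}$ is empty while $dist_{H^x}(u_i^a,u_j^b)$ is finite, the witness in $G^x$ being forced to reuse a vertex. So drop the shortcutting language in (2)--(4) and argue with walks throughout; with that reading your argument goes through verbatim, and the paper's own proof implicitly relies on the same convention.
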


We prove Theorem~\ref{thm:gstarproperties} in two steps. We first prove that every path in $H^x$ corresponds to some path in $G^x$. Later, we prove that also each optimal path between two label sets in $G^x$ corresponds to some shortest path in $H^x$. We start with statement (1) of Theorem~\ref{thm:gstarproperties}.

\begin{lemma}\label{lem:GstarToG}
 Let $x \in V(T)$ be an internal node of the $k$-expression tree $T$ marked with $\times_S$ for some $S \subseteq [k]^2$ with children $y$ and $z$. Let $H^x$ be the auxiliary graph as defined in Definition~\ref{def:gstar} and let $P^*$ be an arbitrary $v_i^y$-$u_j^z$-path in $H^x$ for some $i,j \in [k]$. Then there exists an $L_i^y$-$L_j^z$-path $P$ in $G^x$ with $\omega(P) = \omega_{H^x}(P^*) $. 
\end{lemma}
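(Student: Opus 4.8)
The plan is to \emph{decode} the path $P^*$ in $H^x$ into a walk of equal total weight in $G^x$, exploiting the layered structure of $H^x$. First I would observe that the only edges of $H^x$ run from $V^y$ to $U^y$, from $V^z$ to $U^z$, from $U^y$ to $V^z$, and from $U^z$ to $V^y$. Consequently every $v_i^y$-$u_j^z$-path $P^*$ must alternate between \emph{intra-side segments} of the form $v_p^a \to u_q^a$ (of cost $c_{p,q}^a$, with $a \in \{y,z\}$) and zero-cost \emph{crossing edges} of the form $u_p^y \to v_q^z$ with $(p,q)\in S$, resp.\ $u_p^z \to v_q^y$ with $(q,p)\in S$. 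In particular the first edge is a $y$-segment and the last edge is a $z$-segment, since $v_i^y \in V^y$ is left only via $V^y \to U^y$ and $u_j^z \in U^z$ is entered only via $V^z \to U^z$. This forced alternation is the structural backbone of the argument.

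Next I would replace each piece of $P^*$ by a corresponding object in $G^x$. For a segment $v_p^a \to u_q^a$ I invoke the definition $c_{p,q}^a = dist^a(L_p^a, L_q^a)$ to pick a shortest $L_p^a$-$L_q^a$-path $Q$ in $G^a$ with $\omega(Q) = c_{p,q}^a$; since $unlab(G^a)$ is a subgraph of $unlab(G^x)$, the same $Q$ is a walk in $G^x$ of unchanged weight. For a crossing $u_p^y \to v_q^z$ with $(p,q)\in S$, the construction of $G^x$ inserts a full join between $L_p^y$ and $L_q^z$, so the last vertex of the preceding $y$-segment (which lies in $L_p^y$) is adjacent in $G^x$ to the first vertex of the following $z$-segment (which lies in $L_q^z$); the symmetric statement holds for crossings $u_p^z \to v_q^y$ with $(q,p)\in S$. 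Gluing the segment-walks in order, using these join edges as connectors, produces a single $L_i^y$-$L_j^z$-walk $P$ in $G^x$.

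It then remains to verify the exact equality $\omega(P) = \omega_{H^x}(P^*)$, which is the only delicate point. Since crossing edges carry weight zero in $H^x$ and each intra-side segment $v_p^a \to u_q^a$ carries weight $c_{p,q}^a = \omega(Q)$, the edge-weight $\omega_{H^x}(P^*)$ is exactly the sum of the weights of the segment-walks $Q$. On the $G^x$ side, the connectors are \emph{edges}, not shared vertices, and two consecutive segments live in the vertex-disjoint sets $V(G^y)$ and $V(G^z)$; hence gluing them via a join edge concatenates their vertex sequences without merging or dropping any endpoint, so each segment contributes its full vertex weight to $\omega(P)$ and nothing else is added. Summing over all segments yields $\omega(P) = \omega_{H^x}(P^*)$. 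Here it is essential that the paper's notion of path permits repeated vertices: if two same-side segments happen to share a vertex the resulting walk is non-simple, but its weight still equals the sum of the segment costs because $\omega(\cdot)$ counts vertices with multiplicity. I expect this bookkeeping at the crossings---confirming that exactly one copy of each ``pivot'' endpoint is counted---to be the main obstacle, and I would carry it out by the direct accounting just sketched, or equivalently by induction on the number of crossing edges of $P^*$ with the hypothesis generalized to paths ending at an arbitrary vertex of $U^y \cup U^z$.
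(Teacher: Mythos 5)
Your proposal is correct and follows essentially the same route as the paper's proof: exploit the forced alternation of $H^x$-paths between intra-side edges and zero-cost crossing edges, replace each intra-side edge $v_p^a \to u_q^a$ by a shortest $L_p^a$-$L_q^a$-path realizing $c_{p,q}^a$, and glue consecutive segments via the full-join edges guaranteed by $S$, with the weights matching because the $c$-values already account for all vertex costs. Your extra care about possibly non-simple walks is a point the paper glosses over, and your resolution is consistent with the paper's definition of a path (a vertex sequence, repetitions allowed, with $\omega$ counted with multiplicity), so it does not constitute a different approach.
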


\begin{proof}
 Due to the circular structure of $H^x$, each $v_i^y$-$u_j^z$-path in $H^x$ will repeat the sequence $(v_p^y, u_q^y, v_r^z, u_s^z)$ for some $p,q,r,s \in [k]$ until reaching $u_j^z$ at the end of a sequence.
  Thus, each $v_i^y$-$u_j^z$-path in $H^x$ consists of $4\ell$ vertices for $\ell \in \mathbb{N}$ and can be written as
  \begin{align*}
 P^* = (v_i^y = v_{p_1}^y, u_{q_1}^y, v_{r_1}^z, u_{s_1}^z, v_{p_2}^y, u_{q_2}^y, v_{r_2}^z, u_{s_2}^z, \ldots , v_{p_\ell}^y, u_{q_\ell}^y, v_{r_\ell}^z, u_{s_\ell}^z = u_j^z).
  \end{align*}
  One can construct a path $P$ in $G^x$ from $P^*$ as follows: For each edge $(v_{p_i}^y, u_{q_i}^y)$ in $P^*$ of cost $c_{p_i,q_i}^y$ pick a shortest path in $G^y$ of total cost $c_{p_i,q_i}^y$ and for each edge $(v_{r_i}^z, u_{s_i}^z)$ in $P^*$ of cost $c_{r_i, s_i}^y$ pick a shortest path in $G^z$ of total cost $c_{r_i, s_i}^z$ for each $i \in [\ell]$. Those paths always exist since $c_{p_i,q_i}^y$ resp.\ $c_{r_i, s_i}^y$ are defined as the cost of a shortest $L_{p_i}^y$-$L_{q_i}^y$-path in $G^y$, resp.\ as the cost of a shortest $L_{r_i}^z$-$L_{s_i}^z$ in $G^z$. Since each edge $(u_{q_i}, v_{r_i})$ only exists if and only if there is a full join between the sets $L_{q_i}^y$ and $L_{r_i}^z$, one can connect the last vertex of the path corresponding to the previous edge in $P^*$ (that ends in some vertex in $L_{q_i}^y$) to the first vertex of the path corresponding to the following edge in $P^*$ (that starts at some vertex in $L_{r_i}^z$). 
 In the same manner one can argue that due to each edge $(u_{s_i}^z, v_{p_{i+1}}^y)$ one can connect the last vertex of the path corresponding to the edge $(v_{r_i}^z, u_{s_i}^z)$ with the first vertex of the path corresponding to the edge $(v_{p_{i+1}}^y, u_{q_{i+1}}^y)$. In both cases, the cost of the vertices is already accounted for in the resp.\ $c^\cdot_{\cdot,\cdot}$ value. Thus, each $v_i^y$-$u_j^z$-path in $H^x$ corresponds to an $L_i^y$-$L_j^z$-path in $G^x$ of same cost.
\end{proof}

Next, we generalize this argumentation to the following corollary:
\begin{corollary}\label{cor:gstarToGpath}
 Let $x \in V(T)$ be an internal node of the $k$-expression tree $T$ marked with $\times_S$ for some $S \subseteq [k]^2$ with children $y$ and $z$. Let $H^x$ be the auxiliary graph as defined in Definition~\ref{def:gstar}. 
 Then for every $i,j \in [k]$ and $a,b \in \{y,z\}$ the following holds:

 \begin{enumerate}
    \item[(1)]  For any $v_i^a$-$u_j^b$-path $P^*$ in $H^x$ there exists an $L_i^a$-$L_j^b$-path $P$ in $G^x$ with $\omega_{H^x}(P^*) = \omega(P)$.  	
    
    \item[(2)]  For any $u_i^a$-$u_j^b$-path $P^*$ in $H^x$ there exists an $L_i^a$-$L_j^b$-path $P = (p_1, p_2, \ldots, p_\ell)$ in $G^x$ with the property that $p_2 \in V(G^x)\setminus V(G[a])$ and $\omega_{H^x}(P^*) = \omega(P) - \omega(p_1)$. 
    
    \item[(3)]  For any $v_i^a$-$v_j^b$-path $P^*$ in $H^x$ there exists an $L_i^a$-$L_j^b$-path $P = (p_1, \ldots, p_{\ell-1}, p_\ell)$ in $G^x$ with the property that $p_{\ell-1} \in V(G^x)\setminus V(G[b])$ and $\omega_{H^x}(P^*) = \omega(P) - \omega(p_\ell)$. 
    
    \item[(4)]  For any $u_i^a$-$v_j^b$-path $P^*$ in $H^x$ there exists an $L_i^a$-$L_j^b$-path $P = (p_1, p_2, \ldots, p_{\ell-1}, p_\ell)$ in $G^x$ with the property that $p_2 \in V(G^x)\setminus V(G[a])$, $p_{\ell-1} \in V(G^x)\setminus V(G[b])$, and $\omega_{H^x}(P^*) = \omega(P) - \omega(p_1)- \omega(p_\ell)$. 
 \end{enumerate}
\end{corollary}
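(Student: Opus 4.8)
The plan is to bootstrap the single case already handled by Lemma~\ref{lem:GstarToG} into all sixteen cases of the corollary in two stages. \emph{First}, I would establish statement~(1) in full generality, i.e.\ for every $a,b \in \{y,z\}$ and not just for $a=y$, $b=z$. The construction in the proof of Lemma~\ref{lem:GstarToG} never used anything specific about the start block lying on side $y$ or the end block on side $z$: it only used the cyclic block structure $V^y \to U^y \to V^z \to U^z \to V^y$ of $H^x$, the fact that each intra-side edge $(v_p^c,u_q^c)$ carries cost $c_{p,q}^c$ equal to a shortest $L_p^c$-$L_q^c$-path in $G^c$, and the fact that each crossing edge has cost $0$ and exists exactly when the two label sets are fully joined in $G^x$. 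Since the roles of $y$ and $z$ are symmetric in Definition~\ref{def:gstar}, rerunning that decomposition with $v_i^a$ as the first and $u_j^b$ as the last vertex yields an $L_i^a$-$L_j^b$-path $P$ in $G^x$ of equal weight, proving~(1). This also covers the ``wrap-around'' cases $a=b$, where $P^*$ traverses the full cycle one or more times before terminating.

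\emph{Second}, I would obtain~(2),~(3), and~(4) by rerunning the construction of~(1) and inspecting what changes at the two endpoints when the first vertex is of type $u$ rather than $v$, or the last vertex is of type $v$ rather than $u$. The decisive observation is that the only edges leaving a vertex $u_i^a$ are the zero-cost crossing edges into $V^{\bar a}$, where $\bar a$ denotes the side opposite to $a$. Hence a $u_i^a$-$u_j^b$-path $P^*$ begins $u_i^a \to v_r^{\bar a} \to \cdots$, so the corresponding $G^x$-path starts with some $p_1 \in L_i^a$ immediately followed by $p_2 \in L_r^{\bar a} \subseteq V(G[\bar a]) = V(G^x)\setminus V(G[a])$; since $L_i^a$ and $L_r^{\bar a}$ are fully joined, $p_1$ may be taken as any vertex of $L_i^a$. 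The key point for the weight is that, unlike in case~(1), no leading edge $(v_i^a,u_{\cdot}^a)$ occurs in $P^*$, so the cost of $P^*$ does not pay for $\omega(p_1)$; the same construction applied to the remaining edges therefore produces $P$ with $\omega_{H^x}(P^*)=\omega(P)-\omega(p_1)$. The symmetric analysis at the terminal vertex handles a $v_j^b$-endpoint and gives~(3) with the penultimate-vertex condition and the $-\omega(p_\ell)$ correction, and applying both modifications simultaneously gives~(4).

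The part needing the most care is the \emph{weight bookkeeping} in tandem with the side constraints on $p_2$ and $p_{\ell-1}$. Because each $c_{p,q}^c$ value silently bundles the weights of both endpoints of its subpath, one must track, once for each boundary between consecutive blocks, that the shared vertex is charged in precisely one of the two adjacent $c$-values (as is already used implicitly in Lemma~\ref{lem:GstarToG}), and then verify that dropping the leading or trailing intra-side edge removes exactly the charge for $\omega(p_1)$ resp.\ $\omega(p_\ell)$ and nothing else. The side constraints are then immediate, since crossing edges connect only opposite sides, forcing $p_2 \notin V(G[a])$ and $p_{\ell-1}\notin V(G[b])$; the concatenation producing $P$ and the verification that its weight equals the claimed value follow verbatim from Lemma~\ref{lem:GstarToG}.
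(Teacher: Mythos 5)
Your proposal is correct and follows essentially the same route as the paper's own proof: statement (1) is obtained by rerunning the decomposition of Lemma~\ref{lem:GstarToG} symmetrically in $y$ and $z$, and statements (2)--(4) follow from the observation that the only edges leaving a $u$-vertex (resp.\ entering a $v$-vertex) are zero-cost crossing edges, which simultaneously forces the side constraints on $p_2$ and $p_{\ell-1}$ and explains the missing $\omega(p_1)$ resp.\ $\omega(p_\ell)$ charges. The only difference is presentational: your endpoint and weight bookkeeping is spelled out more explicitly than the paper's terse ``analogously'' argument (and note that consecutive blocks of the constructed path share no vertex, so the charging concern you flag resolves trivially).
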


\begin{proof}
With the same argumentation as in the proof of Lemma~\ref{lem:GstarToG}, one can prove that also every $v_i^y$-$u_j^y$-path in $H^x$ corresponds to an $L_i^y$-$L_j^y$-path in $G^x$ with same cost. The two cases with $a=z$ and $b\in\{y,z\}$ now follow by swapping the roles of $y$ and $z$.
This proves (1).  

Similarly, one can argue that any path $P^*$ in $H^x$ that starts at a vertex $u_i^y$ (resp.\ ends at a vertex $v_i^z$) for $i \in [k]$, corresponds to a path $P = (p_1, p_2, \ldots, p_{\ell-1}, p_\ell)$ in $G^x$ that starts in $L_i^y$ (resp.\ ends in $L_j^z$) with the additional property (due to the directed edges) that $p_2 \in V(G^x) \setminus V(G^y)$ (resp.\ $p_{\ell-1} \in V(G^x) \setminus V(G^z)$). The cost of $P^*$ in $H^x$ is exactly the cost of $P$ in $G^x$, minus the cost of the first (resp.\ last) vertex of $P$. In this regard, recall that all vertex costs in $G^x$ are represented by the edge weights in $H^x$. The remaining cases of Corollary~\ref{cor:gstarToGpath} with $a = z$ or $b = y$ again can be shown analogously. 
\end{proof}

For any path in $H^x$ that starts at some vertex $u_i^a$ (resp.\ ends at some vertex $v_j^b$) one can find a corresponding path $P$ in $G^x$ with the property that the second vertex (resp.\ the penultimate vertex) is connected to \textit{all} vertices of $L_i^a$ (resp.\ $L_j^b$). Thus, one can extend any path that ends at \textit{some} vertex in $L_i^a$ by such a path (resp.\ one can prepend any path that starts in $L_j^b$ by such a path). Hence, the cost of the first vertex (resp.\ last vertex) is neglected if the path starts in some vertex $u_i^a$ or ends at some vertex $v_j^b$ for $a,b \in \{x,y\}$. 
In general, every path that one can find in $H^x$ corresponds to a path in $G^x$ of essentially the same cost, possibly without the first or last vertex (which can be chosen as the minimum of the label set). This proves ``$\leq$'' in the equations of Theorem~\ref{thm:gstarproperties}. 

For the other direction, we will show that every optimal shortest path between two label sets in $G^x$ is represented by a path in $H^x$.

\begin{lemma}\label{lem:optimumPathInGstar}
 Let $x \in V(T)$ be an internal node of a $k$-expression tree $T$ marked with $\times_S$ for some $S \subseteq [k]^2$ with children $y$ and $z$. Let $H^x$ be the auxiliary graph as defined in Definition~\ref{def:gstar}. 
 Let $P$ be a \textit{shortest} $L_i^y$-$L_j^z$-path in $G^x$ for some $i,j \in [k]$. Then there exist a $v_i^y$-$u_j^z$-path $P^*$ in $H^x$ with $\omega_{H^x}(P^*) = \omega(P)$. 
\end{lemma}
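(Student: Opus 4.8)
The plan is to take a shortest $L_i^y$-$L_j^z$-path $P$ in $G^x$ and cut it into maximal subpaths that each lie entirely inside one of the two children, then read off a corresponding path in $H^x$ segment by segment. Since $P$ starts in $L_i^y \subseteq V(G^y)$ and ends in $L_j^z \subseteq V(G^z)$, and since the only edges of $G^x$ joining $V(G^y)$ and $V(G^z)$ are the full-join edges added by the $\times_S$ operation, these maximal monochromatic segments must alternate between $G^y$ and $G^z$, beginning with a $G^y$-segment and ending with a $G^z$-segment; every transition between consecutive segments uses exactly one full-join edge. Write the segments as $Q_1, Q_2, \ldots, Q_{2\ell}$, where $Q_{2t-1}$ lies in $G^y$ and runs from $L_{p_t}^y$ to $L_{q_t}^y$, and $Q_{2t}$ lies in $G^z$ and runs from $L_{r_t}^z$ to $L_{s_t}^z$.

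Next I would map this decomposition to a candidate path in $H^x$. Each $G^y$-segment $Q_{2t-1}$ is a genuine path of $G^y$ (the $\times_S$ operation adds no edges inside $V(G^y)$), so its weight is at least $c_{p_t,q_t}^y = dist^y(L_{p_t}^y, L_{q_t}^y)$; I send it to the edge $(v_{p_t}^y, u_{q_t}^y)$ of $H^x$, and symmetrically each $G^z$-segment to $(v_{r_t}^z, u_{s_t}^z)$. A transition from a vertex of $L_q^y$ to a vertex of $L_r^z$ only exists because $(q,r) \in S$, which is exactly the condition under which $H^x$ contains the zero-cost edge $(u_q^y, v_r^z)$; the $z$-to-$y$ transitions match the edges $(u_s^z, v_p^y)$ in the same way. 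Concatenating these images yields a genuine $v_i^y$-$u_j^z$-path $P^*$ in $H^x$ (its forced alternating $v^y, u^y, v^z, u^z, \ldots$ shape is precisely the circular structure exploited in Lemma~\ref{lem:GstarToG}), whose weight is $\omega_{H^x}(P^*) = \sum_t (c_{p_t,q_t}^y + c_{r_t,s_t}^z)$.

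To finish, I would establish the two matching inequalities. Because the segments are vertex-disjoint and their transitions are single edges, $\omega(P) = \sum_t(\omega(Q_{2t-1}) + \omega(Q_{2t})) \geq \sum_t(c_{p_t,q_t}^y + c_{r_t,s_t}^z) = \omega_{H^x}(P^*)$. Conversely, $P^*$ is in particular a $v_i^y$-$u_j^z$-path in $H^x$, so Lemma~\ref{lem:GstarToG} (equivalently Corollary~\ref{cor:gstarToGpath}(1)) produces an $L_i^y$-$L_j^z$-path in $G^x$ of weight $\omega_{H^x}(P^*)$, whence $\omega_{H^x}(P^*) \geq dist^x(L_i^y, L_j^z) = \omega(P)$ since $P$ is shortest. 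Combining the two bounds gives $\omega_{H^x}(P^*) = \omega(P)$, as required.

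I expect the main obstacle to be the bookkeeping at the segment boundaries: one has to make sure that replacing each monochromatic segment by the corresponding $c$-edge is legitimate even though the full-join edge of $G^x$ attaches to a specific vertex of a label set, whereas the $c$-value refers to the best vertex in that set. The clean way around this is the sandwich argument above: rather than proving that each segment is individually a shortest label-set-to-label-set path via an exchange argument (which would require invoking that a full join connects \emph{all} vertices of the two label sets, so that the endpoints of a replaced segment may move freely), one only needs the easy lower bound $\omega(Q) \geq c$ for each segment together with the already-proved direction of the correspondence, and equality is then forced globally. A minor point to double-check is the degenerate case in which a segment is a single vertex (so $p_t = q_t$), where the mapped edge $(v_{p_t}^y, u_{p_t}^y)$ carries weight $c_{p_t,p_t}^y = \min_{v \in L_{p_t}^y}\omega(v)$; this is consistent with the lower bound and causes no difficulty.
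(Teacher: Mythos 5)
Your proposal is correct, and it departs from the paper's proof at the decisive step. The decomposition is identical: the paper also splits the shortest $L_i^y$-$L_j^z$-path into maximal subpaths alternating between $V(G^y)$ and $V(G^z)$, observes that consecutive subpaths are joined by full-join edges corresponding to pairs in $S$, and maps each segment to the corresponding arc of $H^x$. The difference is how the weight equality is established. The paper proves \emph{per-segment tightness}, namely $\omega(P_\ell) = c^y_{lab(v_\ell),lab(u_\ell)}$ (resp.\ $c^z_{\cdot,\cdot}$), by an exchange argument: if some segment were strictly heavier than its $c$-value, replace it by a shortest label-set-to-label-set path; because the adjacent transition vertices are joined to \emph{all} vertices of the relevant label sets, the replacement still yields an $L_i^y$-$L_j^z$-path, contradicting the minimality of $P$. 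You avoid the exchange argument entirely: you take only the trivial per-segment bound $\omega(Q) \geq c$, giving $\omega(P) \geq \omega_{H^x}(P^*)$, and then force the reverse inequality globally from Lemma~\ref{lem:GstarToG}, which maps $P^*$ back to an $L_i^y$-$L_j^z$-path in $G^x$ of equal weight, so that $\omega_{H^x}(P^*) \geq dist^x(L_i^y, L_j^z) = \omega(P)$ by optimality of $P$. There is no circularity, since Lemma~\ref{lem:GstarToG} is proved independently beforehand. Your sandwich is shorter and more robust: it absorbs the endpoint bookkeeping, the degenerate single-vertex segments, and even the possibility that the constructed image in $H^x$ revisits a label pair (harmless under the paper's definition of path, and otherwise shortcutting only decreases weight while the sandwich still forces equality). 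What the paper's route buys in return is a self-contained argument and the stronger structural fact that every segment of a shortest path is itself a shortest path between its label sets; that local statement is the template the paper reuses when generalizing to the other endpoint types in Corollary~\ref{cor:optimumPathInGstar}, where your global argument would have to be rerun once per case anyway.
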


\begin{proof}
 Let $P$ be a shortest $L_i^y$-$L_j^z$-path in $G^x$ for fixed $i,j \in [k]$ of length $\omega(P) = dist^x(L_i^y, L_j^z)$.
 Since $V(G^x)$ can be subdivided into the two vertex sets $V(G^y)$ and $V(G^z)$, we can split the path $P$ into maximal subpaths consisting of vertices completely in $V(G^y)$ resp.\ completely in $V(G^z)$. Formally, let $P = (P_1, \ldots, P_{d})$ with $V(P_{2q+1}) \subseteq V(G^y)$ and $V(P_{2q}) \subseteq V(G^z)$ for $0 \leq q \leq d/2$. Let $v_i, u_i \in V(G)$ such that $P_i = P_{[v_i,u_i]}$ for all $i \in [d]$. Thus 
 \begin{align*}
  P = (\underbrace{v_1, \ldots, u_1}_{P_1 \in V(G^y)}, \underbrace{v_2, \ldots, u_2}_{P_2 \in V(G^z)}, \ldots ,\underbrace{v_{d-1}, \ldots, u_{d-1}}_{P_{d -1} \in V(G^y)}, \underbrace{v_d, \ldots, u_d}_{P_d \in V(G^z)}) 
 \end{align*}
 Note, that a path $P_\ell$ could consist of a single vertex $v_\ell=u_\ell$ for $\ell \in [d]$. 
 Define $E_{new}^x = E(G^x) \setminus (E(G^y) \cup E(G^z))$ as the set of newly created edges in $G^x$. I.e., $E_{new}^x$ is the union of all full joins between the vertex sets $L_i^y$ and $L_j^z$ for each $(i,j) \in S$.  
 For the path $P$ it holds by construction that $\{u_\ell, v_{\ell+1}\} \in E_{new}^x$ for $\ell \in [d-1]$ and, since $P$ is a shortest path, each $P_\ell$ is a shortest $v_\ell$-$u_\ell$-path in $G^y$ (for $\ell$ odd) resp.\ in $G^z$ (for $\ell$ even) for all $\ell \in [d]$. 
 
 To complete the proof, we will show that for each $\ell \in [d] $ it holds that $\omega(P_\ell) = c_{lab(v_\ell), lab(u_\ell)}^y$ for $\ell$ even and $\omega(P_\ell) = c_{lab(v_\ell), lab(u_\ell)}^z$ for $\ell$ odd, and thus that each path $P_\ell$ corresponds to an arc $(v_{lab(v_\ell)}^y, u_{lab(u_\ell)}^y)$ for $\ell$ even and to an arc $(v_{lab(v_\ell)}^z, u_{lab(u_\ell)}^z)$ for $\ell$ odd in $H^x$: 
 Let w.l.o.g.~$\ell$ be odd, hence $V(P_\ell) \subseteq V(G^y)$. Since $P_\ell$ is an $L_{lab(v_\ell)}^y$-$L_{lab(u_\ell)}^y$-path, it holds that $\omega(P_\ell) \geq c_{lab(v_\ell), lab(u_\ell)}^y$. Assume for contradiction that $\omega(P_\ell) > c_{lab(v_\ell), lab(u_\ell)}^y$ and let $Q$ be an $L_{lab(v_\ell)}^y$-$ L_{lab(u_\ell)}^y$-path in $G^y$ with $\omega(Q) = c_{lab(v_\ell), lab(u_\ell)}^y$. 
 Replace $P_\ell$ by $Q$ in $P$ to get $P' = (P_1, \ldots, P_{\ell-1}, Q, P_{\ell+1}, \ldots, P_d)$. Since $u_{\ell-1}$ is connected to all vertices in $L_{lab(v_\ell)}^y$, and $v_{\ell+1}$ is connected to all vertices in $L_{lab(u_\ell)}^y$, $P'$ is an $L_i^y$-$L_j^z$-path and since $\omega(P_\ell) > \omega(Q)$ the total cost of $P'$ is smaller than the total cost of $P$, which is a contradiction.    
\end{proof}

Again, one can generalize the argumentation of Lemma~\ref{lem:optimumPathInGstar} to the following corollary:
\begin{corollary}\label{cor:optimumPathInGstar}
 Let $x \in V(T)$ be an internal node of the $k$-expression tree $T$ marked with $\times_S$ for some $S \subseteq [k]^2$ with children $y$ and $z$. Let $H^x$ be the auxiliary graph as defined in Definition~\ref{def:gstar}. 
 Then for every $i, j \in [k]$ and $a,b \in \{y,z\}$ the following holds:
 \begin{itemize}
    \item[(1)] For every shortest $L_i^a$-$L_j^b$-path $P$ in $G^x$ there exists a $v_i^a$-$u_j^b$-path $P^*$ in $H^x$ of cost $\omega_{H^x}(P^*) = \omega(P)$.
    
    \item[(2)] For every shortest $L_i^a$-$L_j^b$-path $P$ in $G^x$ with the property that the second vertex is in $V(G^x) \setminus V(G[a])$ there exists a $u_i^a$-$u_j^b$-path $P^*$ in $H^x$ of cost $\omega_{H^x}(P^*) = \omega(P) - \min_{v \in L_i^a} \omega(v) $.
    
    \item[(3)] For every shortest $L_i^a$-$L_j^b$-path $P$ in $G^x$ with the property that the penultimate vertex is in $V(G^x) \setminus V(G[b])$ there exists a $v_i^a$-$v_j^b$-path $P^*$ in $H^x$ of cost $\omega_{H^x}(P^*) = \omega(P) - \min_{v \in L_j^b} \omega(v)$.    
    
    \item[(4)] For every shortest $L_i^a$-$L_j^b$-path $P$ in $G^x$ with the property that the second vertex is in $V(G^x) \setminus V(G[a])$ and the penultimate vertex is in $V(G^x) \setminus V(G[b])$ there exists a $u_i^a$-$v_j^b$-path in $H^x$ of cost $\omega_{H^x}(P^*) = \omega(P) - \min_{v \in L_i^a} \omega(v) - \min_{v \in L_j^b} \omega(v)$. 
 \end{itemize}
\end{corollary}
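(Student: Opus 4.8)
The plan is to generalize the proof of Lemma~\ref{lem:optimumPathInGstar} in two orthogonal directions: from the single case $(a,b)=(y,z)$ of statement~(1) to all four combinations $a,b\in\{y,z\}$, and from statement~(1) to the variants (2)--(4) that forbid the first and/or last edge of the path from staying on its own side. For statement~(1) with arbitrary $a,b$, I would take a shortest $L_i^a$-$L_j^b$-path $P$ in $G^x$ and split it into maximal subpaths $P_1,\dots,P_d$ lying entirely in $G^y$ or entirely in $G^z$, exactly as in Lemma~\ref{lem:optimumPathInGstar}; by definition of $E_{new}^x$ every edge joining two consecutive subpaths is a join edge, so $P_1\subseteq G[a]$, $P_d\subseteq G[b]$, and the sides alternate. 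Reusing verbatim the exchange argument of Lemma~\ref{lem:optimumPathInGstar} (replace a non-optimal subpath by a cheapest label-set-to-label-set path, which is attachable because the adjacent join connects to all vertices of the relevant label set), each $P_\ell$ has weight exactly $c^y_{\cdot,\cdot}$ or $c^z_{\cdot,\cdot}$ according to its side and thus corresponds to one arc of $H^x$; chaining these with the zero-cost join arcs yields a $v_i^a$-$u_j^b$-path $P^*$ with $\omega_{H^x}(P^*)=\omega(P)$. The only change from the base lemma is the bookkeeping of which side carries the first and last subpath, so this step goes through by swapping the roles of $y$ and $z$.

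For statements (2)--(4) I would read off the same decomposition but start and/or end $P^*$ at a $u$- or $v$-vertex instead. The hypothesis that the second vertex of $P$ lies in $V(G^x)\setminus V(G[a])$ forces the first edge of $P$ to be a join edge, so the first maximal subpath collapses to the single vertex $p_1\in L_i^a$. Because that join connects $p_1$'s neighbour to all of $L_i^a$ and $P$ is shortest, I would argue $\omega(p_1)=\min_{v\in L_i^a}\omega(v)$; I then start $P^*$ directly at $u_i^a$, using the join arc out of $u_i^a$ (which exists precisely because the corresponding label pair lies in $S$) and omitting the arc that would have encoded $p_1$. This drops exactly $\min_{v\in L_i^a}\omega(v)$ from the cost, matching statement~(2). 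The penultimate-vertex hypothesis is handled symmetrically: the last subpath collapses to a single vertex of $L_j^b$, $P^*$ ends at $v_j^b$, and the cost drops by $\min_{v\in L_j^b}\omega(v)$, giving statement~(3); imposing both constraints collapses both end subpaths and yields statement~(4) with both terms subtracted.

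The main obstacle will be the careful vertex-weight accounting at the two ends. Since each $c$-value already includes the weights of both endpoints of its subpath, I must check that every vertex of $P$ is counted exactly once and that the omitted end terms equal precisely $\min_{v\in L_i^a}\omega(v)$ and $\min_{v\in L_j^b}\omega(v)$; this is where optimality of $P$ together with the full-join property is essential. A few degenerate cases also need explicit attention: a path $P$ that stays entirely on one side (so $d=1$ and statement~(1) with $a=b$ is realized by a single arc), and the extreme case of statement~(4) in which $P$ is a single join edge whose two endpoints are both trivial subpaths, so that $P^*$ degenerates to the single join arc $(u_i^a,v_j^b)$ of cost $0=\omega(P)-\omega(p_1)-\omega(p_2)$. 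Once these boundary situations are verified, the construction together with Corollary~\ref{cor:gstarToGpath} yields the two opposite inequalities and hence all equalities of Theorem~\ref{thm:gstarproperties}.
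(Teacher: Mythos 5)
Your proposal is correct and follows essentially the same route as the paper: statement (1) is obtained by rerunning the decomposition-and-exchange argument of Lemma~\ref{lem:optimumPathInGstar} with the roles of $y$ and $z$ permuted, and statements (2)--(4) follow from the observation that the full-join property together with optimality of $P$ forces the first (resp.\ last) vertex to be a minimum-weight vertex of $L_i^a$ (resp.\ $L_j^b$), whose cost is exactly what is omitted when $P^*$ starts at $u_i^a$ (resp.\ ends at $v_j^b$). Your write-up is in fact more explicit than the paper's two-sentence proof, particularly in the end-vertex weight accounting and the degenerate single-arc cases.
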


\begin{proof}
 With the same argumentation as done in Lemma~\ref{lem:optimumPathInGstar} one can show that also every shortest $L_i^y$-$L_j^y$-path in $G^x$ corresponds to a shortest $v_i^y$-$v_j^y$-path in $H^x$. The cases with $a = z$ follows analogously by renaming $y$ and $z$. This proves (1). 
 For the remaining cases we observe that for $i,j \in [k]$ and $a,b \in \{y,z\}$, every shortest $L_i^a$-$L_j^b$-path $P$ in $G^x$  with the property that the second vertex is in $V(G^x) \setminus V(G[a])$ (resp.\ the penultimate vertex in $V(G^x) \setminus V(G[b])$), the first (resp.\ last) vertex of $P$ is a minimum cost vertex in $L_i^a$ (resp.\ in $L_j^b$) and that it is not covered in the cost of the corresponding path in $H^x$.
\end{proof}

Corollary~\ref{cor:optimumPathInGstar} shows that every \textit{shortest} $L_i^a$-$L_j^b$-path in $G^x$ is represented in $H^x$ for $i,j \in [k]$ and $a,b \in \{y,z\}$. Together with Corollary~\ref{cor:gstarToGpath}, this proves Theorem~\ref{thm:gstarproperties}.

After the construction of the auxiliary graph $H^x$ as defined in Definition~\ref{def:gstar}, we compute and store the shortest distances for all pairs of vertices in $H^x$. With those values one can now compute the values $c_{i,j}^x$ and $a_{v,i}^x$ for $i,j \in [k]$ and $v \in V(G^x)$. Note that some of the values are only required in the second phase.

\begin{corollary}\label{cor:firstCij}
 Let $x \in V(T)$ be a node in the $k$-expression tree $T$ marked with $\times_S$ for some $S \subseteq [k]^2$. For all $i, j \in [k]$ it holds that
 \begin{align*}
  c_{i,j}^x = \min \big\{dist_{H^x}(v_i^y, u_j^y),dist_{H^x}(v_i^y, u_j^z),dist_{H^x}(v_i^z, u_j^y),dist_{H^x}(v_i^z, u_j^z) \big\}
\end{align*}
\end{corollary}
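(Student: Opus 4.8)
The plan is to derive this identity directly from how label sets decompose at a $\times_S$ node, combined with statement~(1) of Theorem~\ref{thm:gstarproperties}. First I would recall that at a node $x$ marked with $\times_S$ we have $V(G^x) = V(G^y) \dot{\cup} V(G^z)$ and the join operation leaves all labels unchanged, so $L_i^x = L_i^y \dot{\cup} L_i^z$ for every $i \in [k]$. Consequently the starting endpoint of any $L_i^x$-$L_j^x$-path lies in one of $L_i^y, L_i^z$ and its ending endpoint lies in one of $L_j^y, L_j^z$, which is exactly the source of the four cases on the right-hand side.

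Next I would expand the definition $c_{i,j}^x = dist^x(L_i^x, L_j^x)$ using the fact that the distance from a union of sets equals the minimum of the distances from the individual sets:
\begin{align*}
 c_{i,j}^x
 &= dist^x\big(L_i^y \cup L_i^z,\ L_j^y \cup L_j^z\big) \\
 &= \min\big\{ dist^x(L_i^y, L_j^y),\ dist^x(L_i^y, L_j^z),\ dist^x(L_i^z, L_j^y),\ dist^x(L_i^z, L_j^z) \big\}.
\end{align*}
Here it is worth emphasizing that each term $dist^x(L_i^a, L_j^b)$ is the shortest-path distance in the \emph{full} graph $G^x$, so the witnessing path is free to cross back and forth between $G^y$ and $G^z$ via the newly inserted joins; it is in no way restricted to remain inside a single factor.

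Finally I would invoke Theorem~\ref{thm:gstarproperties}(1), which asserts $dist_{H^x}(v_i^a, u_j^b) = dist^x(L_i^a, L_j^b)$ for all $a,b \in \{y,z\}$ and $i,j \in [k]$, and substitute it into each of the four terms, yielding precisely the claimed formula. The essential content---that paths crossing between the two factors are faithfully captured by walks through the cyclic gadget $H^x$---is already encapsulated in Theorem~\ref{thm:gstarproperties}, so no genuine obstacle remains at this point. The only step requiring a little care is recognizing that $c_{i,j}^x$ must range over all four choices of which factor each endpoint is drawn from, matching exactly the four $dist_{H^x}$ terms on the right-hand side; neglecting, say, the mixed cases $dist_{H^x}(v_i^y, u_j^z)$ would incorrectly forbid shortest paths whose two endpoints sit in different factors.
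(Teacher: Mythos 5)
Your proof is correct and follows essentially the same route as the paper: decompose $L_i^x = L_i^y \dot{\cup} L_i^z$, take the minimum of $dist^x(L_i^a, L_j^b)$ over the four endpoint combinations, and convert each term via Theorem~\ref{thm:gstarproperties}(1). The additional remarks (paths may cross between factors; the mixed cases must not be dropped) are accurate but not beyond what the paper's own two-line argument already implies.
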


\begin{proof}
Since $V(G^x)$ is the disjoint union of $V(G^y)$ and $V(G^z)$ it holds that $L_i^x = L_i^y \dot{\cup} L_i^z$ for any $i \in [k]$. Thus, the shortest path between the vertex sets $L_i^x$ and $L_j^x$ for $i,j \in [k]$ in $G^x$ starts in either $L_i^y$ or $L_i^z$ and ends in either $L_j^y$ or $L_j^z$ and the following holds:  
\begin{align*}
	c_{i,j}^x &= dist^x(L_i^x, L_i^x) \\
                  &= dist^x(L_i^y \dot{\cup} L_i^z, L_j^y \dot{\cup} L_j^z) \\
                  &= \min \big\{dist^x(L_i^y,L_j^y),dist^x(L_i^y, L_j^z), dist^x(L_i^z, L_j^y),dist^x(L_i^z, L_j^z)\big\} \\
                  &= \min \big\{dist_{H^x}(v_i^y, u_j^y),dist_{H^x}(v_i^y, u_j^z),dist_{H^x}(v_i^z, u_j^y),dist_{H^x}(v_i^z, u_j^z) \big\}
\end{align*}
The last equation follows from Theorem~\ref{thm:gstarproperties}.
\end{proof}

\begin{corollary}\label{cor:firstAvi}
Let $x \in V(T)$ be a node in the $k$-expression tree $T$ marked with $\times_S$ for some $S \subseteq [k]^2$. Then for any $v \in V(G^y)$ and $i\in [k]$ it holds that
\begin{align*}
 a_{v,i}^x = \min_{j \in [k], a \in \{y,z\}} \big\{ a_{v,j}^y + dist_{H^x}(u_j^y, u_i^a)\big\}
\end{align*}
\end{corollary}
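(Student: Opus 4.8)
The plan is to prove the identity by establishing the two inequalities separately, mirroring the lower/upper bound split that underlies Theorem~\ref{thm:gstarproperties}. Throughout I fix the node $x$ with children $y,z$, a vertex $v \in V(G^y)$, and a label $i \in [k]$, and I use that $V(G^x) = V(G^y) \dot{\cup} V(G^z)$, that $L_i^x = L_i^y \dot{\cup} L_i^z$, and that every edge of $G^x$ joining two vertices of $G^y$ already lies in $G^y$ (new edges created by $\times_S$ only run between $V(G^y)$ and $V(G^z)$). The key structural fact exploited repeatedly is that all vertices of a label class $L_j^y$ have identical neighbours outside $G^y$, so a path crossing from $G^y$ into $G^z$ may be made to leave from any prescribed vertex of $L_j^y$ without changing its continuation.

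For the direction $a_{v,i}^x \le \min_{j \in [k], a \in \{y,z\}}\{a_{v,j}^y + dist_{H^x}(u_j^y, u_i^a)\}$, I would fix an arbitrary $j$ and $a$ and exhibit a $v$-$L_i^x$ walk in $G^x$ of cost $a_{v,j}^y + dist_{H^x}(u_j^y,u_i^a)$. Let $R$ be a shortest $v$-$L_j^y$ path in $G^y$ (cost $a_{v,j}^y$) ending in some $w \in L_j^y$. Applying Corollary~\ref{cor:gstarToGpath}(2) to a shortest $u_j^y$-$u_i^a$ path in $H^x$ yields an $L_j^y$-$L_i^a$ path $P' = (p_1, p_2, \ldots, p_\ell)$ in $G^x$ with $p_2 \in V(G^x) \setminus V(G^y)$ and $dist_{H^x}(u_j^y, u_i^a) = \omega(P') - \omega(p_1)$. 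Since $p_1 \in L_j^y$ is join-adjacent to $p_2 \notin V(G^y)$, so is $w$; hence concatenating $R$ with $(p_2, \ldots, p_\ell)$ is a valid $v$-$L_i^a \subseteq L_i^x$ walk of cost $a_{v,j}^y + (\omega(P') - \omega(p_1)) = a_{v,j}^y + dist_{H^x}(u_j^y,u_i^a)$. As weights are non-negative, $a_{v,i}^x = dist^x(v,L_i^x)$ is at most this cost, which gives the bound for every $j$ and $a$.

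For the reverse direction I would start from a shortest $v$-$L_i^x$ path $P$ in $G^x$ and split on whether $P$ leaves $G^y$. If $V(P) \subseteq V(G^y)$, then $P$ ends in $L_i^y$ and is a $v$-$L_i^y$ path of $G^y$, so $\omega(P) \ge a_{v,i}^y$; taking $j=i$, $a=y$ and $dist_{H^x}(u_i^y,u_i^y)=0$ gives a minimand of value $a_{v,i}^y \le \omega(P) = a_{v,i}^x$. Otherwise let $w \in L_j^y$ be the $G^y$-endpoint of the first edge of $P$ crossing into $G^z$, with successor $w'$, and let $a \in \{y,z\}$ be the side containing the endpoint of $P$. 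The prefix $P_{[v,w]}$ lies entirely in $G^y$, so $\omega(P_{[v,w]}) \ge a_{v,j}^y$. For the suffix I would reroute through the lightest vertex $w^{*} \in L_j^y$: the path $(w^{*}, w', \ldots)$ is admissible for the minimization in Theorem~\ref{thm:gstarproperties}(2), whence $dist_{H^x}(u_j^y, u_i^a) \le \omega(w^{*}) + \omega(P_{[w',\mathrm{end}]}) - \min_{u \in L_j^y}\omega(u) = \omega(P_{[w',\mathrm{end}]})$. Adding the two estimates and using $\omega(P) = \omega(P_{[v,w]}) + \omega(P_{[w',\mathrm{end}]})$ yields $a_{v,j}^y + dist_{H^x}(u_j^y,u_i^a) \le \omega(P) = a_{v,i}^x$, so the minimum is at most $a_{v,i}^x$.

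The step I would be most careful about is the accounting for the weight of the junction vertex. Here $a_{v,j}^y$ charges the weight of its endpoint in $L_j^y$, while $dist_{H^x}(u_j^y, u_i^a)$ subtracts only $\min_{u \in L_j^y}\omega(u)$ over the whole label class, not the weight of the actual crossing vertex; splicing naively at $w$ therefore mismatches by $\omega(w) - \min_{u}\omega(u)$. The resolution in both directions is the interchangeability of the vertices of $L_j^y$: in the upper bound it lets the $H^x$-continuation (cheapest from $w^{*}$) be reattached at $w$, and in the lower bound it lets the suffix of $P$ be rerouted to start at $w^{*}$, so that the subtracted $\min_u\omega(u)$ cancels exactly against $\omega(w^{*})$. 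Once this is made explicit both inequalities close and the identity follows.
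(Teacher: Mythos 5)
Your proof is correct and takes essentially the same route as the paper's: for the upper bound you splice a shortest $v$-$L_j^y$ path in $G^y$ with the $G^x$-path given by Corollary~\ref{cor:gstarToGpath}(2) minus its first vertex, and for the lower bound you split a shortest $v$-$L_i^x$ path at the first edge leaving $G^y$ and bound the prefix by $a_{v,j}^y$ and the suffix by $dist_{H^x}(u_j^y,u_i^a)$ via Theorem~\ref{thm:gstarproperties}(2). If anything, your write-up is slightly more careful than the paper's, which leaves implicit both the degenerate case where the path never leaves $G^y$ and the rerouting through the minimum-weight vertex of $L_j^y$ that reconciles the junction-vertex accounting.
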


\begin{proof}
 Let $P = (v = w_1, w_2, \ldots, w_\ell)$ be a shortest $(v,L_i^x)$-path in $G^x$ with $v \in V(G^y)$. Subdivide $P$ into two parts $P = P_1,P_2$ such that $P_1$ is the maximal subpath of $P$ with $V(P_1) \subseteq V(G^y)$. Let $u_1$ be the last vertex of $P_1$. Then, $\omega(P_1) \geq a_{v,lab(u_1)}^y$ and $\omega(P_2) \geq dist_{H^x}(u_{lab(u_1)}^y, u_i^a) $, thus $a_{v,i}^x = \omega(P_1) + \omega(P_2) \geq \min_{j \in [k], a \in \{y,z\}} \{ a_{v,j}^y + dist_{H^x}(u_j^y, u_i^a)\}$. 
 
 On the other hand, for each $j \in [k]$ and $a \in \{y,z\}$ there is a path in $G^x$ of length $a_{v,j}^y + dist_{H^x}(u_j^y, u_i^a)$: Per definition there is a $(v, L_j^y)$-path $P_1$ in $G^x$ with $\omega(P_1) = a_{v,j}^y$ and due to Theorem~\ref{thm:gstarproperties}, there is a $L_j^y$-$L_i^z$-path $P_2$ of cost $\omega(P_2) = dist_{H^x}(u_j^y, u_i^a) - \min_{v \in L_j^y} \omega(v)$ with the property that the second vertex is connected to all vertices of $L_j^y$. Thus, one can combine the path $P_1$ with the path $P_2$ except of the first vertex of $P_2$. The resulting path is a $(v,L_i^a)$-path of desired cost.
\end{proof}

\subparagraph{Second Phase.}
In this phase, we process the $k$-expression tree $T$ in a top-down manner and use the local values that we have computed in the first phase to determine distances in the whole graph $G$. 

Consider an internal node $x \in V(T)$ of the $k$-expression tree $T$ marked with $\times_S$ for some $S \subseteq [k]^2$ and let $y$ and $z$ be the children of $x$ in $T$. Let $L_i^y$ resp.\ $L_i^z$ denote the set of vertices with label $i$ in $G^y$ resp.\ $G^z$ for $i \in [k]$. 
For an internal node $x$ with children $y$ and $z$ we will compute for any vertex set $L_i^y$ resp.\ $L_i^z$ and every vertex $v \in V(G^x)$ the minimum cost of all paths in $G$ that start in $v$ and end in $L_i^y$ resp.\ $L_i^z$ with the property that the penultimate vertex is in $V(G) \setminus V(G^y)$, resp.\ in $V(G) \setminus V(G^z)$. Thus, the penultimate vertex is connected to \textit{all} vertices of the vertex set $L_i^y$ resp.\ $L_i^z$. It will therefore be convenient not to include the cost of the final vertex in these costs (cf.\ definition below). Note, that we consider the whole graph $G$ in this step instead of just $G^x$.

Formally, for a node $x \in V(T)$ marked with $\times_S$ for some $S \subseteq [k]^2$ with children $y$ and $z$ we compute for every $v \in V(G^x)$ and $i \in [k]$ the following values:
\begin{itemize}
    \item $d_{v,i,y}^x = \min_{P \in \mathcal{P}} \omega(P) - \min_{u \in L_i^y} \omega(u)$ where $\mathcal{P}$ is the set of all paths in $G$ starting in $v$, ending in $L_i^y$, and having the penultimate vertex in $V(G) \setminus V(G^y)$.
    
    \item $d_{v,i,z}^x = \min_{P \in \mathcal{P}} \omega(P) - \min_{u \in L_i^z} \omega(u)$ where $\mathcal{P}$ is the set of all paths in $G$ starting in $v$, ending in $L_i^z$, and having the penultimate vertex in $V(G) \setminus V(G^z)$.
\end{itemize}

For a node $x \in V(T)$ marked with $\circ_R$ for some $R \colon [k] \mapsto [k]$ and the child $y$, we only compute $d_{v,i,y}^x$.
We start by computing those values for the root node. We can assume, w.l.o.g., that the root node has label $\times_S$ for some $S \subseteq [k]^2$.

\begin{lemma}\label{lem:dviyvaluesOnRoot}
 Let $r \in V(T)$ be the root node of the $k$-expression tree $T$ marked with $\times_S$ for some $S \subseteq [k]^2$ and let $y$ and $z$ be the children of $r$. Let further $H^r$ be the graph defined in Definition~\ref{def:gstar}. Then, for any $v \in V(G^y)$ and for every $i \in [k]$ it holds that
 \begin{align*}
 d_{v,i,y}^r = \min_{j \in [k]}\big\{a_{v,j}^y + dist_{H^r}(u_j^y, v_i^y)\big\},\\
 d_{v,i,z}^r = \min_{j \in [k]}\big\{a_{v,j}^y + dist_{H^r}(u_j^y, v_i^z)\big\}.
 \end{align*}
 Analogously, for any $v \in V(G^z)$ and for every $i \in [k]$ it holds that 
  \begin{align*}
 d_{v,i,y}^r = \min_{j \in [k]}\big\{a_{v,j}^z + dist_{H^r}(u_j^z, v_i^y)\big\},\\
 d_{v,i,z}^r = \min_{j \in [k]}\big\{a_{v,j}^z + dist_{H^r}(u_j^z, v_i^z)\big\}.
 \end{align*}
\end{lemma}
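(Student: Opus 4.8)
The plan is to prove the two estimates ``$\le$'' and ``$\ge$'' separately for each of the four displayed equalities. Since $r$ is the root we have $G^r = G$, so every path ``in $G$'' occurring in the definitions of $d^r_{v,i,y}$ and $d^r_{v,i,z}$ is literally a path in $G^r$; this is exactly why the values at the root can be read off the auxiliary graph $H^r$ through Theorem~\ref{thm:gstarproperties} and its two corollaries. First I would reduce to a single case, namely $d^r_{v,i,y} = \min_{j\in[k]}\{a^y_{v,j} + dist_{H^r}(u^y_j, v^y_i)\}$ for $v \in V(G^y)$, and recover the other three equalities by swapping the roles of $y$ and $z$ and, for $d^r_{v,i,z}$, by using the target label set $L^z_i$ in place of $L^y_i$; in terms of $H^r$ this merely changes the target vertex from $v^y_i$ to $v^z_i$, and part~(4) of Theorem~\ref{thm:gstarproperties} is symmetric in exactly this way.

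For the inequality ``$\le$'', I would fix $j \in [k]$, let $P^*$ be a shortest $u^y_j$-$v^y_i$-path in $H^r$, and invoke Corollary~\ref{cor:gstarToGpath}(4) to obtain an $L^y_j$-$L^y_i$-path $Q = (q_1, \ldots, q_m)$ in $G$ whose second and penultimate vertices lie in $V(G^z)$ and which satisfies $\omega_{H^r}(P^*) = \omega(Q) - \omega(q_1) - \omega(q_m)$. Taking a shortest $v$-$L^y_j$-path $R$ in $G^y$ (of cost $a^y_{v,j}$, ending in some $u^* \in L^y_j$) and gluing it with $Q$ -- rerouting the first vertex of $Q$ to $u^*$, which is legal because $q_2 \in V(G^z)$ is a full-join neighbour of all of $L^y_j$, and rerouting $q_m$ to a minimum-weight vertex of $L^y_i$ using that $q_{m-1} \in V(G^z)$ -- produces a path in the set $\mathcal{P}$ defining $d^r_{v,i,y}$ whose cost, after subtracting $\min_{u\in L^y_i}\omega(u)$, equals $a^y_{v,j} + dist_{H^r}(u^y_j, v^y_i)$. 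Minimizing over $j$ then yields the bound.

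For the reverse inequality ``$\ge$'', I would start from any optimal $P \in \mathcal{P}$ realizing $d^r_{v,i,y}$ and split it at the last vertex $w$ of its maximal initial segment contained in $V(G^y)$, setting $j = lab(w)$; then $P = P_1 P_2$, where $P_1$ is a $v$-$w$-path inside $G^y$ and $P_2$ is an $L^y_j$-$L^y_i$-path whose second and penultimate vertices lie in $V(G^z)$, and $\omega(P) = \omega(P_1) + \omega(P_2) - \omega(w)$. The bound $\omega(P_1) \ge dist^y(v, L^y_j) = a^y_{v,j}$ is immediate. For $P_2$, rerouting its first vertex to a minimum-weight vertex of $L^y_j$ (possible since the vertex following $w$ is a full-join neighbour of all of $L^y_j$) gives a path of the type characterised by Theorem~\ref{thm:gstarproperties}(4), whence $dist_{H^r}(u^y_j, v^y_i) \le \omega(P_2) - \omega(w) - \min_{u \in L^y_i}\omega(u)$. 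Combining the two bounds yields $d^r_{v,i,y} = \omega(P) - \min_{u\in L^y_i}\omega(u) \ge a^y_{v,j} + dist_{H^r}(u^y_j, v^y_i)$, which is at least the claimed minimum over $j$. The main obstacle throughout is the careful accounting of the shared split vertex $w$ and of the two endpoint vertices, since $a^y_{v,j}$ includes its endpoint weight, $dist_{H^r}$ discounts the minimum weight of both incident label sets, and $d^r_{v,i,y}$ discounts only the final vertex. The device reconciling these is that each such vertex is adjacent to a vertex on the opposite side of the newly added join, which by the full-join property is adjacent to the \emph{entire} label set; this lets me reroute any endpoint to the minimum-weight representative of its label set without affecting feasibility, making all the $-\min_{u\in L}\omega(u)$ terms cancel.
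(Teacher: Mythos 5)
Your proof is correct and follows essentially the same route as the paper's: for ``$\ge$'' you split an optimal path at the last vertex of its maximal initial segment inside $G^y$, bound the first part by $a^y_{v,j}$ and the second via Theorem~\ref{thm:gstarproperties}(4), and for ``$\le$'' you glue a shortest $v$-$L^y_j$-path in $G^y$ onto the path realizing $dist_{H^r}(u^y_j,v^y_i)$, using the full-join property to reroute endpoints to minimum-weight representatives. The only differences are bookkeeping choices (you keep the split vertex $w$ in both subpaths and reroute it, whereas the paper splits disjointly, prepends $\argmin_{u\in L^y_j}\omega(u)$, and additionally observes that the final vertex of an optimal path may be taken of minimum weight), which do not change the substance of the argument.
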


\begin{proof}
 We prove the first equation for the case that $v \in V(G^y)$. The second equation as well as both equations for $v \in V(G^z)$ can be shown analogously.

 We first show that $d_{v,i,y}^r \geq \min_{j \in [k]}\{a_{v,j}^y + dist_{H^r}(u_j^y, v_i^y)\}$.
 Consider a shortest path $P = (p_1, \ldots, p_{n-1}, p_n)$ with $p_1 = v$, $p_{n} \in L_i^y$ and $p_{n-1} \in V(G) \setminus V(G^y) = V(G^z)$. 
 Let $\ell$  be the maximal index such that 
 $p_r \in V(G^y)$ for all $1 \leq r \leq \ell$ and let $j \in [k]$ such that $p_{\ell} \in L_j^y$. It holds that $1\leq \ell \leq n-2$ since $p_1 \in V(G^y)$ and $p_{n-1} \notin V(G^y)$. Consider the subpaths $P_1 = (p_1, \ldots, p_{\ell})$ and $P_2 = (p_{\ell+1}, \ldots, p_n)$. By construction, $P_1$ is a $v$-$L_j^y$-path in $G^y$ and hence $\omega(P_1) \geq a_{v,j}^y$. 
 Since $p_{\ell+1}$ is connected to $p_\ell$ and $p_{\ell+1} \in G^z$, the vertex $p_{\ell+1}$ is connected to every vertex in $L_j^y$. We prepend $P_2$ by $p' = \argmin_{u \in L_j^y} \omega(u)$ and denote the resulting path by $P'_2$. Since $P$ is a shortest $v$-$L_i^y$-path and $p_{n-1} \in V(G^z)$, it holds that $p_n = \argmin_{u \in L_i^y} \omega(u)$. 
 The path $P'_2$ is an $L_j^y$-$L_i^y$-path in $G[r] = G$ with the property that the second vertex and the penultimate vertex are in $V(G[r]) \setminus V(G^y)$ and hence, due to Theorem~\ref{thm:gstarproperties}, $\omega(P'_2) \geq dist_{H^r}(u_j^y, v_i^y) + \omega(p') + \omega(p_n)$. Note, that $\omega(P) = \omega(P_1) + \omega(P'_2) - \omega(p')$. Thus 
 \begin{align*}
 	d_{v,i,y}^r &= \omega(P) - \min_{u \in L_i^y}  \omega(u) \\
 		    &= \omega(P_1) + \omega(P'_2) - \omega(p') - \min_{u \in L_i^y}  \omega(u) \\
 		    &\geq  a_{v,j}^y + dist_{H^r}(u_j^y, v_i^y) + \omega(p_n)- \min_{u \in L_i^y}\omega(u) \\
 		    &= a_{v,j}^y + dist_{H^r}(u_j^y, v_i^y)\\
 		    &\geq \min_{j \in [k]}\big\{a_{v,j}^y + dist_{H^r}(u_j^y, v_i^y)\big\} 
 \end{align*}
 For the other direction, we observe that for each $j \in [k]$ there is always a path in $G$ of cost $a_{v,j}^y + dist_{H^r}(u_j^y, v_i^y) + \min_{u \in L_i^y} \omega(u)$  that starts in $v$, ends in $L_i^y$, and that has its penultimate vertex in $V(G) \setminus V(G^y)$: For fixed $j \in [k]$ let $P_1$ be a shortest $v$-$L_j^y$-path in $G^y$. By definition it holds that $\omega(P_1) = a_{v,j}^y$. Let $P_2$ be a shortest $L_j^y$-$L_i^y$-path in $G[r] = G$ with the property that the second vertex and the penultimate vertex are in $V(G[r]) \setminus V(G^y)$. By Theorem~\ref{thm:gstarproperties} it holds that $\omega(P_2) = dist_{H^r}(u_j^y, v_i^y) + \min_{u \in L_j^y}\omega(u) + \min_{u \in L_i^y} \omega(u)$. Since the second vertex of $P_2$ is connected to \textit{all} vertices of $L_j^y$ we can combine $P_1$ and $P_2$  by removing the first vertex of $P_2$ to get a $v$-$L_i^y$-path $P$ in $G$ with the property that the penultimate vertex is in $V(G)\setminus V(G^y)$ of cost $a_{v,j}^y + dist_{H^r}(u_j^y, v_i^y) + \min_{u \in L_i^y} \omega(u)$. By definition of $d_{v,i,y}^r$ it follows that  $a_{v,j}^y + dist_{H^r}(u_j^y, v_i^y) \geq d_{v,i,y}^r$. Because the argument works for all $j\in[k]$, it follows directly that $d_{v,i,y}^r\leq \min_{j \in [k]}\{a_{v,j}^y + dist_{H^r}(u_j^y, v_i^y)\}$, which completes the proof.
\end{proof}

Next, we show how to propagate those values downwards in the $k$-expression tree, starting with a node marked with $\circ_R$ for some $R: [k] \mapsto [k]$.

\begin{lemma}
 Let $x \in V(T)$ be an internal node of the $k$-expression tree $T$ marked with $\circ_R$ for some $R \colon [k] \rightarrow [k]$. Let $y$ be the unique child of $x$ and $w$ be the unique ancestor of $x$ in $T$.  Then 
 $d_{v,i,y}^x = d_{v,R(i),x}^w$
\end{lemma}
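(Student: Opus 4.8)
The plan is to exploit the fact that the operation $\circ_R$ changes only labels, not the graph, and to reduce both quantities to a common ``gateway cost''. First I would record two structural facts. Since $\circ_R$ merely relabels vertices we have $unlab(G^x)=unlab(G^y)$, hence $V(G^x)=V(G^y)$ (so $V(G)\setminus V(G^x)=V(G)\setminus V(G^y)$) and $L_i^y\subseteq L_{R(i)}^x$, where indeed $L_{R(i)}^x=\bigcup_{j\in R^{-1}(R(i))}L_j^y$. Second, I would use the standard all-or-nothing property of the $k$-expression tree: for every $q\in V(G)\setminus V(G^x)$ and every label $\ell\in[k]$, the vertex $q$ is adjacent either to all of $L_\ell^x$ or to none of it. This holds because all vertices of $L_\ell^x$ share a common label in $G^x$ and keep a common label at every ancestor of $x$ (relabelings act uniformly and joins are determined solely by labels), so any join performed above $x$ connects $q$ to all of $L_\ell^x$ or to none.

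Next I would introduce the common quantity
\[
m:=\min\{\,\omega(Q): Q \text{ is a } v\text{-}q\text{-path in } G,\ q\in V(G)\setminus V(G^x),\ q \text{ adjacent to all of } L_i^y\,\},
\]
and show $d_{v,i,y}^x=m=d_{v,R(i),x}^w$. For $d_{v,i,y}^x=m$: given an optimal path $P=(p_1,\dots,p_n)$ for $d_{v,i,y}^x$, its penultimate vertex $p_{n-1}\notin V(G^y)$ is adjacent to $p_n\in L_i^y$, so by the all-or-nothing property $p_{n-1}$ is adjacent to all of $L_i^y$; hence the prefix $(p_1,\dots,p_{n-1})$ is a gateway path and $d_{v,i,y}^x=\omega(P)-\min_{u\in L_i^y}\omega(u)\geq m$. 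Conversely, appending $\argmin_{u\in L_i^y}\omega(u)$ to an optimal gateway path produces a path in the defining family of $d_{v,i,y}^x$ of weight $m+\min_{u\in L_i^y}\omega(u)$, giving $d_{v,i,y}^x\leq m$; the combination of the two subpaths is valid exactly as in the proof of Lemma~\ref{lem:dviyvaluesOnRoot}.

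The crucial step, and the main obstacle, is to show that $d_{v,R(i),x}^w$ reduces to the \emph{same} $m$, even though its defining paths end in the possibly strictly larger class $L_{R(i)}^x\supseteq L_i^y$. This is where the all-or-nothing property does the real work: for $q\in V(G)\setminus V(G^x)$ I claim $q$ is adjacent to all of $L_i^y$ if and only if $q$ is adjacent to all of $L_{R(i)}^x$. Indeed, if $q$ sees all of $L_{R(i)}^x$ it sees the subset $L_i^y$; and if $q$ sees all of the nonempty set $L_i^y$, then it sees at least one vertex of $L_{R(i)}^x$, hence by all-or-nothing all of it. Consequently the gateway families for $d_{v,i,y}^x$ and for $d_{v,R(i),x}^w$ coincide, and an analogous decomposition (now subtracting $\min_{u\in L_{R(i)}^x}\omega(u)$, which again merely cancels the cost of the freely chosen last vertex, since its penultimate vertex is adjacent to all of $L_{R(i)}^x$) yields $d_{v,R(i),x}^w=m$. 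Combining the two equalities gives $d_{v,i,y}^x=d_{v,R(i),x}^w$. Degenerate labels with $L_i^y=\emptyset$ may be assumed away, since they contribute only $+\infty$ in the subsequent combination step and are never used in the final answer.
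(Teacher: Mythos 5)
Your proof is correct and takes essentially the same route as the paper's: both arguments rest on the facts that $\circ_R$ leaves the graph unchanged (so $V(G^x)=V(G^y)$ and $L_i^y\subseteq L_{R(i)}^x$), that vertices outside $V(G^x)$ are adjacent to all or none of a label class, and that the final vertex of a qualifying path can be swapped for a minimum-weight one since its cost is excluded from the $d$-values. Your ``gateway cost'' $m$ is just a repackaging of the paper's two-inequality argument into a common intermediate quantity; if anything, it makes explicit the point (glossed over in the paper's first inequality) that the two $d$-values subtract minima over different sets, which your if-and-only-if claim about gateways resolves cleanly.
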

\begin{proof}
 Since $x \in V(T)$ is marked with $\circ_R$ for some $R \colon [k] \rightarrow [k]$ it holds that $unlab(G^y) = unlab(G^x)$ and that $L_i^x = \bigcup_{j: R(j) = i} L_j^y$. Thus, all vertices in $V(G) \setminus V(G^x)$ are either connected to all vertices of $L_i^x$ or to none. Consider a shortest $v$-$L_i^y$-path $P$ in $G$ with the property that the penultimate vertex is in $V(G) \setminus V(G^y)$. Since $V(G^y) = V(G^x)$, $P$ is also a $v$-$L_{R(i)}^x$-path with the property that the penultimate vertex is in $V(G) \setminus V(G^x)$. Hence, $d_{v,i,y}^x \geq d_{v,R(i),x}^w$. On the other hand, every shortest $v$-$L_{R(i)}^x$-path in $G$ with the property that the penultimate vertex is in $V(G) \setminus V(G^x)$ can be changed to a $v$-$L_i^y$-path by possibly replacing the final vertex with $\argmin_{u \in L_i^y} \omega(u)$. Since the cost of this vertex is not included in $d$-values, it follows that $d_{v,R(i),x}^w \geq d_{v,i,y}^x$. 
\end{proof}

We now show the propagation for nodes $x$ of $T$ that are marked with $\times_S$. We start with one specific case and then conclude the general case as a corollary.

\begin{lemma}\label{lem:dviyvalueOnInternalNode}
 Let $x \in V(T)$ be an internal node of the $k$-expression tree $T$ that is marked with $\times_S$ for some $S \subseteq [k]^2$. Let $y$ and $z$ be the two children of $x$ in $T$, let $w$ be the unique ancestor of $x$ in $T$, and let $v \in V(G^y)$ be arbitrary. Then $d_{v,i,y}^x$ is the minimum of the following three values:
 \begin{itemize}
  \item $d_{v,i,x}^w$
  \item $\min_{j \in [k]} \big\{a_{v,j}^y + dist_{H^x}(u_j^y, v_i^y) \big\}$
  \item $\min_{j \in [k], c \in \{y,z\}} \big\{ d_{v,j,x}^w + dist_{H^x}(v_j^c, v_i^y)\big\}  $
 \end{itemize}

\end{lemma}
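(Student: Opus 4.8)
The plan is to prove the stated identity by establishing the two inequalities $d_{v,i,y}^x \le \min\{A,B,C\}$ and $d_{v,i,y}^x \ge \min\{A,B,C\}$, where I abbreviate $A = d_{v,i,x}^w$, $B = \min_{j \in [k]}\{a_{v,j}^y + dist_{H^x}(u_j^y, v_i^y)\}$, and $C = \min_{j \in [k], c \in \{y,z\}}\{d_{v,j,x}^w + dist_{H^x}(v_j^c, v_i^y)\}$. The whole argument rests on a single structural observation that I would state first: since relabelings $\circ_R$ and joins $\times_S$ act uniformly on an entire label class, all vertices of $L_i^x = L_i^y \dot{\cup} L_i^z$ share a common label at every ancestor of $x$; hence any vertex of $V(G) \setminus V(G^x)$ is adjacent either to all of $L_i^x$ or to none of it. In particular, whenever a path ends in $L_i^y$ with its penultimate vertex outside $V(G^y)$, that penultimate vertex is adjacent to all of $L_i^y$, so the final step may be routed to the cheapest vertex of $L_i^y$; this is exactly what makes the $-\min_{u \in L_i^y}\omega(u)$ term in the definition of $d_{v,i,y}^x$ meaningful.

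For the upper bound I would realize each of $A$, $B$, $C$ by an explicit $v$-$L_i^y$-path in $G$ whose penultimate vertex lies outside $V(G^y)$. The term $A = d_{v,i,x}^w$ comes from a path ending in $L_i^x$ with penultimate vertex outside $V(G^x)$; since $V(G)\setminus V(G^x)$ is contained in $V(G)\setminus V(G^y)$ and the penultimate vertex sees all of $L_i^x \supseteq L_i^y$, I simply redirect the last edge into $L_i^y$. The term $B$ arises by gluing a shortest $v$-$L_j^y$-path inside $G^y$ (cost $a_{v,j}^y$) to the $G^x$-path provided by Theorem~\ref{thm:gstarproperties}(4), whose second and penultimate vertices lie in $V(G^z)$, exactly as in the proof of Lemma~\ref{lem:dviyvaluesOnRoot}. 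The term $C$ arises by concatenating the path realizing $d_{v,j,x}^w$, which reaches a vertex $q'$ outside $V(G^x)$, with the $G^x$-path from $L_j^c$ to $L_i^y$ from Theorem~\ref{thm:gstarproperties}(3); because $q'$ is adjacent to the whole class $L_j^x \supseteq L_j^c$, it can be joined to the start of the second path. In each case a short computation shows the resulting value equals the corresponding expression, so $d_{v,i,y}^x \le \min\{A,B,C\}$.

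For the lower bound I would fix a shortest $v$-$L_i^y$-path $P$ with penultimate vertex outside $V(G^y)$ attaining $d_{v,i,y}^x$ and distinguish three exhaustive cases. If the penultimate vertex of $P$ lies outside $V(G^x)$, then $P$ is also a $v$-$L_i^x$-path with penultimate vertex outside $V(G^x)$, and since $d_{v,i,x}^w$ is a minimum over all such paths its cost-up-to-penultimate is bounded below by $A$, yielding at least $A$. If $P$ never leaves $V(G^x)$, its penultimate vertex lies in $V(G^z)$; letting $q \in L_j^y$ be the last vertex of the maximal initial subpath inside $G^y$, the prefix costs at least $a_{v,j}^y$ and the remainder is a $G^x$-path from $L_j^y$ to $L_i^y$ with second and penultimate vertices in $V(G^z)$, hence bounded below by $dist_{H^x}(u_j^y, v_i^y)$ via Theorem~\ref{thm:gstarproperties}(4), yielding at least $B$. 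Otherwise $P$ leaves $V(G^x)$ and re-enters it for a final excursion; taking $q \in L_j^c$ (with $c \in \{y,z\}$) to be the first vertex of the maximal $G^x$-suffix and $q'$ its predecessor outside $V(G^x)$, the prefix up to $q'$ is bounded below by $d_{v,j,x}^w$ and the suffix by $dist_{H^x}(v_j^c, v_i^y)$ via Theorem~\ref{thm:gstarproperties}(3), yielding at least $C$. Since the cases are exhaustive, $d_{v,i,y}^x \ge \min\{A,B,C\}$, and the two inequalities together prove the claim.

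The step I expect to be the main obstacle is the vertex-weight bookkeeping at each gluing, because the quantities $a^y_{v,\cdot}$, $d_{v,\cdot,x}^w$, and the four $dist_{H^x}$ values each treat endpoint weights differently. The clean way to handle this, which I would make explicit, is to read every $dist_{H^x}$ value as an endpoint-independent \emph{internal} cost: because the second and penultimate vertices are adjacent to their entire label classes, the internal part of the represented path does not depend on which class representatives serve as endpoints, which is precisely why Theorem~\ref{thm:gstarproperties}(4) may subtract both endpoint minima. The interface vertex between the two spliced paths must then be charged exactly once — in term $B$ it is paid by $a_{v,j}^y$ while the endpoint-free $dist_{H^x}(u_j^y, v_i^y)$ leaves it out, whereas in term $C$ the term $d_{v,j,x}^w$ stops one vertex before the interface and $dist_{H^x}(v_j^c, v_i^y)$, which by Theorem~\ref{thm:gstarproperties}(3) retains its starting vertex, pays for it — and the deliberately dropped final vertex of $L_i^y$ must be matched against the $-\min_{u\in L_i^y}\omega(u)$ in $d_{v,i,y}^x$, using the all-or-nothing adjacency to choose it as the class minimum.
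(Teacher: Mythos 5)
Your proof is correct and takes essentially the same approach as the paper's: the identical three-case analysis of a shortest path attaining $d_{v,i,y}^x$ for the lower bound (penultimate vertex outside $V(G^x)$; path entirely inside $V(G^x)$; path leaving and re-entering $V(G^x)$), combined with explicit path constructions realizing each of the three candidate values for the upper bound, using Theorem~\ref{thm:gstarproperties} for the weight bookkeeping. The only cosmetic difference is the accounting device at the splice points: the paper splits prefix and suffix disjointly and prepends/appends an argmin vertex of the interface label class, whereas you let the two pieces share the interface vertex and read the $dist_{H^x}$ values as endpoint-free internal costs --- the two devices are equivalent.
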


\begin{proof}
 After possibly adding nodes marked with $\circ_{id}$ to the $k$-expression tree, with $id$ being the identity function, one can assume, that $w$ is marked with $\circ_R$ for some $R \colon [k] \rightarrow [k]$ and that $x$ is the only child of $w$. 
 
 Let $P = (p_1, \ldots,p_{n-1}, p_n)$ be a shortest $v$-$L_i^x$-path in $G$ with penultimate vertex in $V(G) \setminus V(G^y)$, i.e., with $p_1 = v $, $p_n \in L_i^y$, and $p_{n-1} \in V(G) \setminus V(G^y)$; thus, $\omega(P) - \omega(p_n) = d_{v,i,y}^x $. We distinguish three cases: \\
 \textbf{Case 1:} $p_{n-1} \in V(G) \setminus V(G^x)$. In this case, $P$ is also a $v$-$L_i^x$-path with the property that the penultimate vertex is in $V(G) \setminus V(G^x)$; thus, $d_{v,i,y}^x \geq  d_{v,i,x}^w$. \\
 \textbf{Case 2:} $p_{n-1} \in V(G^z)$ and all vertices of $P$ are in $G^x$. In this case, we can compute the value in the same way as done in Lemma~\ref{lem:dviyvaluesOnRoot} for the root node and get $d_{v,i,y}^x = \min_{j \in [k]} \{a_{v,j}^y + dist_{H^x}(u_j^y, v_i^y) \}$. \\
 \textbf{Case 3:} $p_{n-1} \in V(G^z)$ and at least one vertex in $P$ is in $V(G) \setminus V(G^x)$. 
 Let $p_\ell$ be the last vertex of $P$ that is in $V(G) \setminus V(G^x)$; clearly, $p_{\ell+1}\in V(G^x)$. We split the path $P$ into the two subpaths $P_1 = (p_1, \ldots, p_{\ell})$ and $P_2 = (p_{\ell+1}, \ldots, p_n)$. Let $j \in [k]$ such that $p_{\ell+1} \in L_j^x$. 
 Since $p_\ell$ is connected to $p_{\ell+1}$, the vertex $p_\ell$ is connected to every vertex in $L_j^x$.  We extend $P_1$ by $p' = \argmin_{u \in L_j^x} \omega(u)$ and denote the resulting path by $P_1'$. 
 Now it holds by definition that $\omega(P_1') - \omega({p'}) \geq d_{v,j,x}^w$, as the penultimate vertex $p_\ell$ of $P'_1$ is in $V\setminus V(G^x)$.
 Let further $c \in \{y,z\}$ such that $p_{\ell+1} \in L_j^c$, noting that it does not change its label at $x$. Then $\omega(P_2) - \omega(p_{n}) \geq dist_{H^x}(v_j^c,v_i^y)$ by Theorem~\ref{thm:gstarproperties}, as $P_2$ is a path in $G^x$. Note, that $\omega(P) = \omega(P_1') + \omega(P_2) - \omega(p')$. Thus, in this case it holds that 
 \begin{align*}
  d_{v,i,y}^x &= \omega(P) - \min_{u \in L_i^y}  \omega(u) \\
  	      &= \omega(P'_1) - \omega(p') + \omega(P_2)  - \min_{u \in L_i^y}  \omega(u)\\
  	      &\geq d_{v,j,x}^w + dist_{H^x}(v_j^c,v_i^y) + \omega(p_n) - \min_{u \in L_i^y}  \omega(u)  \\
  	      &\geq d_{v,j,x}^w + dist_{H^x}(v_j^c,v_i^y) \\
  	      &\geq \min_{j \in [k], c \in \{y,z\}} \big\{ d_{v,j,x}^w + dist_{H^x}(v_j^c, v_i^y)\big\}
 \end{align*}
 We have seen in the case analysis above that in each case $d_{v,i,y}^x$ is at least the value considered in the case; in particular, it is at least equal to their collective minimum value. 
 On the other hand, for each case there is a path $P$ fulfilling the definition of $d_{v,i,y}^x$ such that $\omega(P) - \min_{u \in L_i^y} \omega(u)$ equals the value of the considered case. Thus, $d_{v,i,y}^x$ is also at most equal to the minimum taken over all three cases. This completes the proof.
\end{proof}

Lemma~\ref{lem:dviyvalueOnInternalNode} shows how to compute the value $d_{v,i,y}^x$ for any $v \in V(G^y)$. By a similar argumentation one can also compute the value $d_{v,i,\beta}^x$ for any $v \in V(G[\alpha])$ for $\alpha, \beta \in \{y,z\}$. 
 
\begin{corollary}
 Let $x \in V(T)$ be an internal node of the $k$-expression tree $T$ marked with $\times_S$ for some $S \subseteq [k]^2$. Let $y$ and $z$ be the unique children of $x$ in $T$, $w$ be the unique ancestor of $x$ in $T$, and let $\alpha, \beta \in \{y,z\}$ be arbitrary.
 Then, for $v \in V(G[\alpha])$ the value $d_{v,i,\beta}^x$ is the minimum of the following three values:
 \begin{itemize}
  \item $d_{v,i,x}^w$
  \item $\min_{j \in [k]} \big\{a_{v,j}^\alpha + dist_{H^x}(u_j^\alpha, v_i^\beta) \big\}$
  \item $\min_{j \in [k], c \in \{y,z\}} \big\{ d_{v,j,x}^w + dist_{H^x}(v_j^c, v_i^\beta)\big\}  $
 \end{itemize}
\end{corollary}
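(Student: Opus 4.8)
The plan is to rerun the proof of Lemma~\ref{lem:dviyvalueOnInternalNode}, of which the present statement is the general case, separating the two roles played there by the label $y$: in the lemma $y$ is simultaneously the component containing the start vertex $v$ and the component of the target label set, and these become $\alpha$ and $\beta$ respectively. All ingredients needed for general components are already available: Theorem~\ref{thm:gstarproperties} records the $H^x$-distance identities for every pair $a,b \in \{y,z\}$, and Lemma~\ref{lem:dviyvaluesOnRoot} supplies the ``stays inside'' computation for every choice of start and target component. I would also note that $H^x$ is invariant under the relabeling swapping $y \leftrightarrow z$ (which maps $v_i^y \mapsto v_i^z$, $u_i^y \mapsto u_i^z$, interchanges the internal joins weighted by $c^y$ and $c^z$, and maps the $S$- and $S^{-1}$-edges onto each other), so the combinations with $\alpha = z$ reduce to those with $\alpha = y$.

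For fixed $\alpha, \beta \in \{y,z\}$ and $v \in V(G[\alpha])$, I would take a shortest $v$-$L_i^\beta$-path $P = (p_1, \ldots, p_n)$ in $G$ whose penultimate vertex $p_{n-1}$ lies in $V(G) \setminus V(G[\beta])$, so that $\omega(P) - \omega(p_n) = d_{v,i,\beta}^x$, and distinguish three cases by the position of $p_{n-1}$, exactly as in Lemma~\ref{lem:dviyvalueOnInternalNode}. Case (i): $p_{n-1} \in V(G) \setminus V(G^x)$, so $P$ is also a $v$-$L_i^x$-path with penultimate vertex outside $G^x$, giving $d_{v,i,\beta}^x \geq d_{v,i,x}^w$. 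Case (ii): $p_{n-1}$ lies in the sibling of $\beta$ inside $G^x$ and $P$ stays within $G^x$, which is the internal-node instance of Lemma~\ref{lem:dviyvaluesOnRoot} and yields $\min_{j \in [k]}\{a_{v,j}^\alpha + dist_{H^x}(u_j^\alpha, v_i^\beta)\}$. Case (iii): $P$ leaves $G^x$, and letting $p_\ell$ be the last vertex outside $G^x$ and $c \in \{y,z\}$ the component of $p_{\ell+1} \in L_j^c$, splitting $P$ at $p_\ell$, extending the first part by the cheapest vertex of $L_j^x$, and applying Theorem~\ref{thm:gstarproperties} to the second part gives $\min_{j \in [k], c \in \{y,z\}}\{d_{v,j,x}^w + dist_{H^x}(v_j^c, v_i^\beta)\}$. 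Since the optimal path falls into one of these three cases, $d_{v,i,\beta}^x$ is bounded below by the minimum of the three terms; conversely each term is realized by an explicitly constructed path of matching cost (gluing an $a$- or $d$-witness to a path recovered from $H^x$ via Theorem~\ref{thm:gstarproperties}), so $d_{v,i,\beta}^x$ equals this minimum.

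The main obstacle is purely the bookkeeping of the two separated roles. Concretely, the ``outside'' condition on the penultimate vertex and the target vertex $v_i^\beta$ are governed by $\beta$, so the sibling appearing in case (ii) is the complement of $\beta$; whereas the start witness $a_{v,j}^\alpha$ and the source vertex $u_j^\alpha$ of the $H^x$-path are governed by $\alpha$. Keeping these assignments straight is the only substantive difference from the lemma: once the source side is indexed by $\alpha$ and the target side by $\beta$ throughout the three cases, the inequalities and the matching path constructions carry over verbatim, and the remaining combinations follow from the $y \leftrightarrow z$ symmetry noted above. No further computation is required.
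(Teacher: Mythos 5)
Your proposal is correct and takes essentially the same route as the paper: the paper offers no separate proof for this corollary beyond the remark that it follows ``by a similar argumentation'' to Lemma~\ref{lem:dviyvalueOnInternalNode}, and your three-case analysis (penultimate vertex outside $G^x$; path inside $G^x$ with penultimate vertex in the sibling of $\beta$; path leaving and re-entering $G^x$), with the source side indexed by $\alpha$ and the target side by $\beta$ and with Theorem~\ref{thm:gstarproperties} supplying the needed $H^x$-distance identities for all component pairs, is exactly that argument. One cosmetic point: $H^x$ is not literally invariant under the vertex swap $y \leftrightarrow z$ (the $c^y$-weighted join is carried onto the $c^z$-weighted one), but is isomorphic to the auxiliary graph built for the swapped instance $(z,y,S^{-1})$, which encodes the same graph $G^x$ --- this is precisely the role-swap the paper itself invokes, so your reduction of $\alpha=z$ to $\alpha=y$ stands.
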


\subparagraph{Third Phase.}
In the third phase, we traverse the $k$-expression tree $T$ one final time; the ordering is immaterial. We go over all nodes $x$ with label $\times_S$ for some $S\subseteq [k]^2$ and compute for each pair of vertices $(u,v)$ with $u \in V(G^y)$ and $v \in V(G^z)$ the shortest $u$-$v$-path in $G$, where $y$ and $z$ are the two children of $x$ in $T$. 
Since the leaves of $T$ correspond one-to-one to single-vertex graphs, one for each vertex of $G$, this procedure will consider every pair of vertices in $G$ at some node $x \in V(T)$.

\begin{lemma}\label{lem:ThirdTraversal}
 Let $x \in V(T)$ be an internal node of the $k$-expression tree $T$ marked with $\times_S$ for some $S\subseteq [k]^2$. Let $y$ and $z$ be the two children of $x$ and let $u \in V(G^y)$ and $v \in V(G^z)$. Then $dist_G(u,v) = \min_{i \in [k]} \big\{d_{u,i,z}^x + a_{v,i}^z \big\}$.
\end{lemma}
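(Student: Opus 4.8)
The plan is to prove the two inequalities $dist_G(u,v) \le \min_{i\in[k]}\{d_{u,i,z}^x + a_{v,i}^z\}$ and $dist_G(u,v) \ge \min_{i\in[k]}\{d_{u,i,z}^x + a_{v,i}^z\}$ separately. The single structural fact driving both directions is the one already recorded for the $d$-values: whenever a vertex $p \notin V(G^z)$ is adjacent to some vertex of $L_i^z$, it is adjacent to \emph{all} of $L_i^z$, since all vertices sharing label $i$ at node $z$ keep identical labels at every ancestor of $z$ and therefore receive exactly the same edges from vertices introduced later. The second ingredient is to split a $u$-$v$-path at the boundary of $G^z$ so that the two pieces share no vertex; the bookkeeping term $-\min_{w\in L_i^z}\omega(w)$ in the definition of $d_{u,i,z}^x$ is then exactly what is needed to glue (or re-split) the pieces without double-counting a vertex weight.

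For the lower bound I would take a shortest $u$-$v$-path $P=(u=p_1,\ldots,p_n=v)$ in $G$ and let $p_\ell$ be the first vertex of the maximal suffix of $P$ lying entirely in $V(G^z)$. Since $u\in V(G^y)$ is not in $V(G^z)$, we have $\ell\ge 2$ and $p_{\ell-1}\in V(G)\setminus V(G^z)$. Let $i$ be the label of $p_\ell$ in $G^z$, so $p_\ell\in L_i^z$. Splitting $P$ between $p_{\ell-1}$ and $p_\ell$ gives $\omega(P)=\omega(P_{[u,p_{\ell-1}]})+\omega(P_{[p_\ell,v]})$ with no shared vertex. The suffix $P_{[p_\ell,v]}$ is a walk inside $G^z$ from $L_i^z$ to $v$ (all edges among $V(G^z)$-vertices already lie in $G^z$), so its weight is at least $a_{v,i}^z$. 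For the prefix, I extend $P_{[u,p_{\ell-1}]}$ by the minimum-weight vertex of $L_i^z$, which is reachable because $p_{\ell-1}$ is adjacent to all of $L_i^z$; this yields a $u$-$L_i^z$-path whose penultimate vertex lies outside $G^z$, so by definition $d_{u,i,z}^x \le \omega(P_{[u,p_{\ell-1}]})$ (the added final vertex contributes exactly $\min_{w\in L_i^z}\omega(w)$, which the definition of $d$ subtracts again). Adding the two bounds gives $dist_G(u,v)=\omega(P)\ge d_{u,i,z}^x + a_{v,i}^z \ge \min_{i\in[k]}\{d_{u,i,z}^x+a_{v,i}^z\}$.

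For the upper bound I fix any $i$ with both values finite and exhibit a $u$-$v$-walk of weight at most $d_{u,i,z}^x+a_{v,i}^z$. Let $P_1$ realize $d_{u,i,z}^x$, ending at some $q_1\in L_i^z$ with penultimate vertex $p\notin V(G^z)$, and let $P_2$ realize $a_{v,i}^z$, i.e.\ a shortest $v$-$L_i^z$-path in $G^z$ with $L_i^z$-endpoint $q_2$. Since $p$ is adjacent to all of $L_i^z$, in particular to $q_2$, I reconnect: follow $P_1$ from $u$ up to $p$, take the edge $(p,q_2)$, then traverse $P_2$ from $q_2$ to $v$. The weight of $P_1$ up to $p$ is $\omega(P_1)-\omega(q_1)\le \omega(P_1)-\min_{w\in L_i^z}\omega(w)=d_{u,i,z}^x$, and the $q_2$-to-$v$ part costs $a_{v,i}^z$, so the constructed walk has weight at most $d_{u,i,z}^x+a_{v,i}^z$. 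Hence $dist_G(u,v)\le d_{u,i,z}^x+a_{v,i}^z$ for every $i$, and thus $dist_G(u,v)\le \min_{i\in[k]}\{d_{u,i,z}^x+a_{v,i}^z\}$. Combining the two inequalities proves the claim; infinite values are handled by the same bounds trivially.

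The main obstacle is exactly the weight bookkeeping around the boundary vertex: a naive split that lets the two subpaths share their common $L_i^z$-endpoint double-counts that vertex and leaves an error term $\omega(p_\ell)-\min_{w\in L_i^z}\omega(w)\ge 0$ of the wrong sign. The fix, used in both directions, is to cut between the last vertex outside $G^z$ and the first vertex of the $G^z$-suffix and to exploit the all-or-nothing adjacency of $L_i^z$, so that the boundary vertex can always be taken to be the cheapest one; this is precisely what makes the $-\min_{w\in L_i^z}\omega(w)$ normalization of $d_{u,i,z}^x$ cancel cleanly against the vertex weight shared at the boundary.
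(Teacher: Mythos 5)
Your proof is correct and follows essentially the same route as the paper's: for the lower bound you split a shortest $u$-$v$-path at the boundary between $V(G)\setminus V(G^z)$ and the maximal $G^z$-suffix, extend the prefix by the cheapest vertex of $L_i^z$ to invoke the definition of $d_{u,i,z}^x$ and bound the suffix by $a_{v,i}^z$; for the upper bound you glue a path realizing $d_{u,i,z}^x$ (truncated at its penultimate vertex) to the reverse of a path realizing $a_{v,i}^z$ via the all-or-nothing adjacency of label sets. The only differences are cosmetic --- you justify the all-or-nothing adjacency explicitly and observe that the glued object is a walk (which suffices by non-negativity of the weights), where the paper simply calls it a path.
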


\begin{proof}
 Let $P = (u = p_1, \ldots, p_n = v)$ be a shortest $u$-$v$-path in $G$. Let $\ell$ be the largest index such that $p_\ell \in V(G) \setminus V(G^z)$. Since $p_1 \in V(G^y)$ and $p_n \in V(G^z)$ this index must exist and it holds that $\ell \leq n-1$. Split $P$ into two subpaths $P_1 = (p_1, \ldots, p_\ell)$ and $P_2 = (p_{\ell+1}, \ldots, p_n)$. 
 Let $i \in [k]$, such that $p_{\ell+1} \in L_i^z$. 
 Since $p_\ell$ is connected to $p_{\ell+1}$, the vertex $p_\ell$ is connected to every vertex in $L_i^z$. We extend $P_1$ by $p' = \argmin_{u \in L_i^z} \omega(u)$ and denote the resulting path by $P_1'$. Now, $P'_1$ is a $u$-$L_i^z$-path with penultimate vertex in $V(G)\setminus V(G^z)$ and, therefore $\omega(P_1') - \omega(p') \geq d_{u,i,z}^x$. Similarly, using that $G$ is undirected, the reverse of $P_2$ is a $v$-$L_i^z$-path in $G^z$, implying that $\omega(P_2) \geq a_{v,i}^z$. Thus 
 \begin{align*}
 dist_G(u,v) &= \omega(P) \\
             &= \omega(P_1') - \omega(p') + \omega(P_2) \\
             &\geq d_{u,i,z}^x + a_{v,i}^z \\
             &\geq \min_{i \in [k]} \{d^x_{u,i,z} + a_{v,i}^z \}. 
 \end{align*}
 
 Conversely, we show that for each $i \in [k]$ there is a $u$-$v$-path in $G$ of cost $d^x_{u,i,z} + a_{v,i}^z$: For fixed $i \in [k]$ let $P'$ be a shortest $u$-$L_i^z$-path in $G$ with the property that the penultimate vertex is in $V(G) \setminus V(G^z)$. By definition it holds that $\omega(P') = d^x_{u,i,z} + \min_{w \in L_i^z} \omega(w)$. Let $P_1$ be obtained from $P'$ by removing the last vertex. Now, $\omega(P_1) = d_{u,i,z}^x$ and $P_1$ has the property that it starts in $u$ and that its last vertex is adjacent to all vertices of $L_i^z$. Let $P_2$ be a shortest $v$-$L_i^z$-path in $G^z$. By definition it holds that $\omega(P_2) = a_{v,i}^z$. Now, we can extend $P_1$ by the reverse of $P_2$ to get a $u$-$v$-path in $G$ of cost $d^x_{u,i,z} + a_{v,i}^z$. This implies that $dist_G(u,v)\leq d^x_{u,i,z} + a_{v,i}^z$ and, using that the argument works for all $i\in[k]$, that $dist_G(u,v) \leq \min_{i \in [k]} \big\{d_{u,i,z}^x + a_{v,i}^z \big\}$.
 \end{proof}

\subparagraph{Running time.}

First, we need to transform the clique-width $k$-expression into a NLC-width $k$-expression tree $T$, which can be done in linear time $\Oh(n+m)$ \cite{johansson98}. 

In the first traversal, we compute for every node $x \in V(T)$ the values $a_{v,i}^x$ for $v \in V(G^x)$ and $i \in [k]$. Thus, we compute at most $n \cdot k$ values, each in time $\Oh(k)$, which results in a running time of $\Oh(n k^2)$ per node of $T$.
In the case of a node $x$ with label $\times_S$ for some $S \subseteq [k]^2$ we first compute the auxiliary graph $H^x$ in time $\Oh(\vert V(H) \vert + \vert E(H) \vert) = \Oh(k^2)$ and solve (edge-weighted) \APSP on $H^x$ in time $\Oh(k^3)$. After this, by using Corollary~\ref{cor:firstCij} resp.\ Corollary~\ref{cor:firstAvi}, we compute each $c_{i,j}^x$ in constant time resp.\ each $a_{v,i}^x$ in time $\Oh(k)$ resulting in a running time per node $x \in V(T)$ of $\Oh(k^3 + k^2 \cdot n)$.

In the second phase we perform a top-down traversal of $T$ to compute the for each node $x$ the values $d_{v,i,y}^x$ and $d_{v,i,z}^x$ for all $v \in G^x$ and $i \in [k]$. Again, we compute at most $n \cdot k$ values, each in time $\Oh(k)$, which results in a running time of $\Oh(nk^2)$ per node of $T$. 
Since there are $\Oh(n)$ nodes in the $k$-expression tree $T$, the total running time for Phase One and Phase Two is $\Oh(nk^3 + n^2k^2) = \Oh(n^2k^2)$.

In the last phase, we consider each pair $(u,v)$ of vertices exactly once and compute each pairwise distance in time $\Oh(k)$. Thus, running time for the last phase is $\Oh(n^2k)$. In total, we obtain the claimed bound of $\Oh(k^2n^2)$.

\section{APSP parameterized by modular-width}\label{sec:APSPmw}

\subsection{General Running Time Theorem}
First, we will derive a general running time theorem that is applicable to many algorithms that use modular decomposition trees, for example all algorithms in \cite{KratschN18}. 
Since we will focus on functions describing running times,
we will restrict ourselves to functions $T \colon \mathbb{R}_{\geq 1} \rightarrow \mathbb{R}_{\geq 1}$.

\begin{definition}[\cite{BuraiSA05}]
 A function $T \colon \mathbb{R}_{\geq 1} \rightarrow \mathbb{R}_{\geq 1}$ is \emph{superhomogeneous} if for all $ \lambda \geq 1$ the following holds:
 \begin{align*}
  \lambda \cdot T(n) \leq T(\lambda \cdot n)
 \end{align*}
\end{definition}

\begin{lemma}\label{lem:properySuperhomogeneous}
 Let $T \colon \mathbb{R}_{\geq 1}^2 \rightarrow \mathbb{R}_{\geq 1}$ be a function that is superhomogeneous in the first component and  monotonically increasing in the second component. Then
 \begin{align*}
 \max_{ \substack{1 \leq k \leq n \\ 1 \leq \ell \leq m}} \frac{T(k,\ell)}{k} \leq \frac{T(n,m)}{n}.
 \end{align*}
\end{lemma}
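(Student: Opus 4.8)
The plan is to fix an arbitrary admissible pair $(k,\ell)$, i.e.\ with $1 \leq k \leq n$ and $1 \leq \ell \leq m$, and to prove directly the pointwise bound $T(k,\ell)/k \leq T(n,m)/n$. Since the right-hand side is independent of $(k,\ell)$, taking the maximum over all such pairs on the left then yields the stated inequality immediately, so the whole task reduces to establishing this single inequality for a fixed pair.

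For the fixed pair I would proceed in two independent reductions, one for each component. First I handle the second argument by monotonicity: because $\ell \leq m$, the assumption that $T$ is monotonically increasing in the second component gives $T(k,\ell) \leq T(k,m)$, and hence $T(k,\ell)/k \leq T(k,m)/k$. It therefore suffices to bound $T(k,m)/k$. Next I exploit superhomogeneity in the first component with the specific scaling factor $\lambda = n/k$. Since $k \leq n$ we have $\lambda \geq 1$, so the defining inequality of superhomogeneity (read in the first argument with the second argument fixed to $m$) yields $\frac{n}{k}\, T(k,m) \leq T\!\left(\tfrac{n}{k}\cdot k,\, m\right) = T(n,m)$. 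Dividing through by $n$ gives $T(k,m)/k \leq T(n,m)/n$, and chaining this with the monotonicity step produces $T(k,\ell)/k \leq T(k,m)/k \leq T(n,m)/n$, as required.

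The argument is essentially mechanical, and there is no genuinely hard step; the only real decision is the choice $\lambda = n/k$, which is precisely what converts the multiplicative form of superhomogeneity, $\lambda\, T(k,m) \leq T(\lambda k, m)$, into a statement about the normalized ratio $T(\cdot\,,m)/(\cdot)$. The one point I would flag for care is purely a matter of reading the hypotheses correctly: superhomogeneity is stated in the preceding definition for a single-variable function, so in the two-variable setting it must be understood as holding in the first argument for each fixed value of the second. Under that reading the substitution above is valid, and the two structural assumptions (superhomogeneity in the first, monotonicity in the second) are used exactly once each, in separate and commuting steps.
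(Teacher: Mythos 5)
Your proof is correct and follows essentially the same route as the paper: first use monotonicity in the second argument to reduce to $\ell = m$, then apply superhomogeneity in the first argument with the scaling factor $\lambda = n/k$ to obtain $T(k,m)/k \leq T(n,m)/n$. The paper's proof is exactly this argument, written as a single chain of (in)equalities.
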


\begin{proof}
 Since $T$ is monotonically increasing in the second component, the maximum is reached for $\ell = m$. Pick $k$ with $1\leq k\leq n$ arbitrarily and set $\lambda\geq 1$ so that $\lambda\cdot k=n$. It follows directly that
 \[
  \frac{T(n, m)}{n} = \frac{T(\lambda k,m)}{\lambda k} \geq \frac{\lambda T(k, m)}{\lambda k} = \frac{T(k,m)}{k}.
 \]
 This completes the proof.
\end{proof}

We can now state the running time framework.

\begin{theorem}\label{thm:RunningTimeMDT}
 Let $G$ be a graph of modular-width equal to $\mw$, let $MD(G)$ be the modular decomposition tree of $G$, and let $T \colon \mathbb{R}_{\geq 1}^2 \rightarrow \mathbb{R}_{\geq 1}$ be a function that is superhomogeneous in the first component and monotone increasing in the second component. If the running time of an algorithm for any prime node $v_i \in V(MD(G))$ is upper bounded by $\Oh(T(n_i, m_i))$, where $n_i$ and $m_i$ denote the number of vertices and edges of the quotient graph corresponding to $v_i$, then the total running time can be upper bounded by 
 \begin{align*}
  \Oh\left(\frac{n}{\mw} \cdot T(\mw, m) + n + m \right)\qquad \text{ and }  \qquad\Oh\left(\frac{n}{\mw} \cdot  T(\mw, \mw^2) + n + m\right).
 \end{align*}
If, additionally, $T$ is also superhomogeneous in the second component then the running time can also be upper bounded by $\Oh(T(\mw, m) + n + m)$.
\end{theorem}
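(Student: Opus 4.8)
The plan is to reduce the total running time to the work spent at the prime nodes, namely $\sum_i \Oh(T(n_i,m_i))$ summed over all prime nodes $v_i$ of $MD(G)$, and to absorb everything else into an additive $\Oh(n+m)$ term: this covers computing $MD(G)$ in time $\Oh(n+m)$ via \cite{TedderCHP08} together with all processing at leaves and at parallel and series nodes, of which there are $\Oh(n)$ since $MD(G)$ has at most $2n-1$ nodes. Before touching $T$ at all, I would establish two purely combinatorial facts. First, (i) $\sum_i n_i \le 2n$: here $n_i$ is exactly the number of children of $v_i$, and the children of prime nodes are pairwise distinct non-root nodes of $MD(G)$, of which there are at most $2n-2$. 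Second, (ii) $\sum_i m_i \le m$: every edge $\{q_{M_a},q_{M_b}\}$ of the quotient graph of $v_i$ witnesses the non-empty set of all $G$-edges between the adjacent modules $M_a,M_b$; these witness sets are disjoint for distinct quotient edges of one node, and disjoint across nodes because each edge $\{u,v\}$ of $G$ is witnessed only at the least common ancestor of the leaves $u,v$ in $MD(G)$. Hence the quotient edges over all prime nodes inject into $E(G)$.

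For the first two bounds I would invoke Lemma~\ref{lem:properySuperhomogeneous} directly. Each prime node has $n_i\le \mw$ children and a prime quotient graph, so $1\le m_i\le \binom{n_i}{2}<\mw^2$, and also $m_i\le m$ by (ii). Applying Lemma~\ref{lem:properySuperhomogeneous} with the first argument bounded by $\mw$ and the second by $m$ gives $T(n_i,m_i)/n_i \le T(\mw,m)/\mw$, so $T(n_i,m_i)\le \tfrac{n_i}{\mw}T(\mw,m)$; summing and using (i) yields $\sum_i T(n_i,m_i)\le \tfrac{2n}{\mw}T(\mw,m)$, which is the first bound after adding $\Oh(n+m)$. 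Bounding the second argument by $\mw^2$ instead of $m$ and applying Lemma~\ref{lem:properySuperhomogeneous} verbatim gives the second bound.

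For the third bound I would use the extra superhomogeneity in the second component. Chaining the two superhomogeneity inequalities gives $T(\lambda a,\mu b)\ge \lambda\mu\,T(a,b)$ for all $\lambda,\mu\ge 1$; applying this with $\lambda=\mw/n_i$ and $\mu=m/m_i$ (both $\ge 1$ since $n_i\le\mw$ and $1\le m_i\le m$) yields $T(\mw,m)\ge \tfrac{\mw}{n_i}\cdot\tfrac{m}{m_i}T(n_i,m_i)$, i.e.\ $T(n_i,m_i)\le \tfrac{n_i m_i}{\mw\cdot m}T(\mw,m)$. Summing over prime nodes and using $n_i\le\mw$ together with (ii),
\[
  \sum_i T(n_i,m_i)\;\le\; \frac{T(\mw,m)}{\mw\cdot m}\sum_i n_i m_i \;\le\; \frac{T(\mw,m)}{\mw\cdot m}\cdot \mw\sum_i m_i \;\le\; T(\mw,m),
\]
which gives the third bound after adding $\Oh(n+m)$.

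I expect the main obstacle to be fact (ii), and its refinement $\sum_i n_i m_i\le \mw\cdot m$ used in the last display: one must argue cleanly that the quotient-graph edges of the prime nodes map injectively into $E(G)$ through the least-common-ancestor structure of $MD(G)$, rather than double-counting edges realised as large complete bipartite joins between modules. The superhomogeneity manipulations are then routine once this counting and Lemma~\ref{lem:properySuperhomogeneous} are in place; the only edge case to dispatch is $m=0$, where there are no prime nodes and every bound holds trivially.
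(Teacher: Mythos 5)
Your core argument is the same as the paper's: both proofs bound the work at prime nodes by combining Lemma~\ref{lem:properySuperhomogeneous} with the two counting facts $\sum_i n_i \leq 2n$ and $\sum_i m_i \leq m$, and your third bound via the chained product inequality $T(\lambda a, \mu b) \geq \lambda\mu\, T(a,b)$ is a repackaged form of the paper's computation $\sum_i m_i \cdot T(n_i,m_i)/m_i \leq m \cdot T(\mw,m)/m = T(\mw,m)$. Two of your touches are in fact more careful than the paper's text: you justify $\sum_i m_i \leq m$ via the least-common-ancestor/witness argument (the paper merely asserts that ``$\sum_i m_i$ counts every edge at most once''), and you dispatch the degenerate case $m=0$ explicitly.

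The one genuine divergence is the treatment of series and parallel nodes, and here your proof has a soft spot. You absorb their processing into the additive $\Oh(n+m)$ term on the grounds that there are $\Oh(n)$ such nodes, which implicitly assumes the algorithm spends constant time at each of them; nothing in the theorem's hypothesis (which speaks only of prime nodes) grants this, and in the intended applications it is false --- e.g., in the paper's APSP algorithm (Theorem~\ref{thm:APSPmw}) a series node with $\ell$ children carries a clique quotient on $\ell$ vertices on which vertex-weighted APSP must be solved. The paper's proof instead replaces every degenerate node with $k$ children by a chain of $k-1$ pseudo-prime nodes whose quotient graphs are isomorphic to $K_2$ resp.\ $I_2$; the per-node hypothesis $\Oh(T(n_i,m_i))$ then covers these nodes as well (at cost $T(2,\cdot)$ each), and the single summation $\sum_i T(n_i,m_i)$ accounts for all of the work. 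Your counting facts (i) and (ii) survive this replacement, so your argument goes through essentially verbatim once you add this reduction; without it, the step ``everything else is $\Oh(n+m)$'' is an unsupported assumption about the algorithm rather than a consequence of the hypothesis.
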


\begin{proof}
 In a first step we compute the modular decomposition tree of the input graph $G$ in linear time~\cite{TedderCHP08}.
 We can assume that the running time is then dominated by the sum of computations of all prime nodes, since one can replace a series resp.\ parallel node with $k$ children by a sequence of $k-1$ pseudo-prime nodes each with a corresponding quotient graph isomorphic to $K_2$ resp.\ $I_2$. Let $t$ denote the number of prime nodes after this replacement. For any node $v_i \in V(MD(G))$ let $n_i$ and $m_i$ denote the number of vertices resp.\ edges in the quotient graph associated with $v_i$. Thus, it holds that $n_i \leq \mw $ and $m_i \leq \mw^2$. Therefore, the running time all nodes can be upper bounded by
 \begin{align}
 \sum_{i = 1}^t T(n_i,m_i)  &=     \sum_{i = 1}^t n_i \frac{T(n_i,m_i)}{n_i} \notag \\
 		            &\leq  \sum_{i = 1}^t n_i \cdot \left( \max_{ \substack{1 \leq n_i \leq \mw \\ 1 \leq m_i \leq m}} \frac{T(n_i,m_i)}{n_i} \right) \label{eq:mi} \\
 		            &\leq 2n \cdot \frac{T(\mw,m)}{\mw}. \notag
\end{align}
The last inequality holds due to Lemma~\ref{lem:properySuperhomogeneous} and since $\sum_{i=1}^t n_i$ counts each node in the modular decomposition (except of the root) exactly once. 

Since $T$ is monotone increasing in the second component and each quotient graph has at most $\mw^2$ many edges, one can replace $m_i$ by $\mw^2$ in (\ref{eq:mi}) and get a running time for processing all nodes in the modular decomposition tree of $\Oh(n \frac{T(\mw, \mw^2)}{\mw})$.

If, additionally, $T$ is also superhomogeneous in the second component then the running time can be upper bounded by
\begin{align*}
 \sum_{i = 1}^t T(n_i,m_i)  &=  \sum_{i = 1}^t m_i \frac{T(n_i,m_i)}{m_i}\\
                            &\leq \sum_{i = 1}^t m_i \cdot \left( \max_{ \substack{1 \leq n_i \leq \mw \\ 1 \leq m_i \leq m}} \frac{T(n_i,m_i)}{m_i} \right) \\
                             &\leq m \cdot \frac{T(\mw,m)}{m}\\
                             &= T(\mw,m).
\end{align*}
The second to last inequality holds using the argumentation from Lemma~\ref{lem:properySuperhomogeneous} and since $\sum_{i = 1}^t m_i$ counts every edge at most once.
\end{proof}

\subparagraph{Example.} In \cite{KratschN18} it was shown how to solve \prob{Global Minimum Vertex Cut} with a running time per prime node of $\Oh(n_i m_i \log n_i)$, where $n_i$ denotes the number of vertices and $m_i$ denotes the number of edges in the quotient graph of prime node $v_{M_i}$ in the modular decomposition tree. Thus, by using Theorem~\ref{thm:RunningTimeMDT}, one can bound the total running time by $\Oh(\min\{n \mw^2 \log \mw, m \mw \log \mw\} + n + m)$.

\subsection{APSP parameterized by modular-width}

In this section we study \VWAPSP relative to modular-width. We obtain the following result.

\begin{theorem}\label{thm:APSPmw}
 For every graph $G = (V,E)$ of modular-width $\mw$ and with given vertex weights $\omega \colon V \rightarrow \mathbb{R}_{\geq 0}$, \prob{Vertex-Weighted All-Pairs Shortest Paths} can be solved in time $\Oh(\mw^{1.842}n + n^2)$ using fast matrix multiplication and otherwise in time $\Oh(\mw^2n + n^2)$. 
\end{theorem}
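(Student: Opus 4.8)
The plan is to reduce \VWAPSP on $G$ to solving the problem locally on each prime node's quotient graph and then combining distances along the modular decomposition tree, after which the running-time bound follows immediately from the framework of Theorem~\ref{thm:RunningTimeMDT}. First I would set up the key structural observation: if $M$ is a strong module with children-modules $M_1,\dots,M_\ell$ and quotient graph $Q = G[M]_{/P}$, then for two vertices $u \in M_a$ and $v \in M_b$ with $a \neq b$, any shortest $u$-$v$-path decomposes into an entry-segment inside $M_a$, a traversal through the quotient structure, and an exit-segment inside $M_b$. Crucially, because every vertex of a module has the same neighborhood outside the module, a shortest path that leaves $M_a$ may re-enter $M_a$ (or any $M_i$) only at a minimum-weight vertex of that module, so the "cost of passing through module $M_i$" collapses to a single scalar.

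The core of the algorithm is therefore to compute, at each prime node, an \emph{augmented vertex-weighted instance} on the quotient graph $Q$: I would assign to each quotient vertex $q_{M_i}$ a weight equal to the minimum vertex weight $\min_{w \in M_i}\omega(w)$ (the cheapest way to "pass through" that module), solve \VWAPSP on $Q$ with these weights, and thereby obtain the pairwise quotient distances $d_Q(q_{M_i}, q_{M_j})$. Working bottom-up on $MD(G)$, I would maintain for each module $M$ and each pair of its vertices the true distance $dist_{G[M]}(u,v)$, together with the single value $\min_{w\in M}\omega(w)$ needed by the parent. The distance between $u \in M_a$ and $v \in M_b$ is then recovered as
\begin{align*}
 dist_{G[M]}(u,v) = \min_{i,j}\Big\{ dist_{G[M_a]}(u, s_i) + d_Q(q_{M_a}, q_{M_b}) - w_a - w_b + dist_{G[M_b]}(s_j, v)\Big\},
\end{align*}
where the inner correction terms account for the entry and exit vertices (one must be slightly careful that the quotient distance already prices the endpoint modules, hence the subtraction of $w_a = \min_{w\in M_a}\omega(w)$ and $w_b$); for the series/parallel degenerate nodes this simplifies since $Q$ is a clique or independent set. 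I would then invoke Theorem~\ref{thm:RunningTimeMDT} with $T(n_i,m_i)$ equal to the cost of solving \VWAPSP on the quotient graph of size $n_i \le \mw$: using Yuster's $\Oh(n_i^{2.842})$ algorithm as a black box gives $T(n_i,\cdot)=\Oh(n_i^{2.842})$, which is superhomogeneous in the first component, so the framework yields $\Oh(\tfrac{n}{\mw}\,\mw^{2.842}) = \Oh(\mw^{1.842}n)$; the combinatorial Floyd--Warshall bound $\Oh(n_i^3)$ yields $\Oh(\mw^2 n)$. The additive $\Oh(n^2)$ comes from the combination step, which touches each of the $\Oh(n^2)$ output pairs a bounded number of times, plus the linear cost of computing $MD(G)$.

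The main obstacle I anticipate is proving correctness of the distance-combination formula, specifically justifying that a shortest path never benefits from entering an intermediate module at anything other than its minimum-weight vertex, and that the quotient-distance computation with min-weight vertices faithfully captures all such traversals without double-counting endpoint weights. One must verify that replacing each module by a single min-weight representative is \emph{optimal} rather than merely a lower bound: since all external connections of a module are uniform, any shortest path passing through a module can be rerouted through its cheapest vertex without increasing cost and without violating adjacency, which is exactly the module property; this argument mirrors the role played by the auxiliary graph $H^x$ and Theorem~\ref{thm:gstarproperties} in the clique-width case. A secondary technical point is ensuring that the book-keeping of entry/exit corrections is consistent across parallel, series, and prime nodes so that the total combination cost stays within $\Oh(n^2)$ overall rather than per node; this is handled by charging work to output pairs and appealing to the fact that $\sum_i n_i \le 2n$.
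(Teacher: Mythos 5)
Your reduction for pairs of vertices lying in \emph{different} child modules is sound (your entry/exit formula collapses to the paper's Lemma~\ref{lem:computeDistDisjointVertices}, since one may always exit $M_a$ at $u$ itself: $\min_i dist_{G[M_a]}(u,s_i)=\omega(u)$ because all weights are non-negative), and the invocation of Theorem~\ref{thm:RunningTimeMDT} is exactly right. The genuine gap is that you never say how to handle a pair $u,v$ lying in the \emph{same} child module $M_a$. Your bottom-up invariant is to maintain $dist_{G[M]}(u,v)$ for every pair of vertices of $M$, but for same-child pairs this value is not inherited from the child: a shortest $u$-$v$-path in $G[M]$ may leave $M_a$ and return, and since any vertex outside $M_a$ with a neighbor in $M_a$ is adjacent to \emph{all} of $M_a$, such a detour can be strictly cheaper than any path inside $G[M_a]$. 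Concretely, if $M_a=\{u,v\}$ is an independent set and both vertices are adjacent to a cheap vertex $w\notin M_a$, then $dist_{G[M_a]}(u,v)=\infty$ while $dist_{G[M]}(u,v)=\omega(u)+\omega(w)+\omega(v)$. So carrying child distances upward unchanged, as your invariant implicitly does, is incorrect.

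The repair is not a routine detail, because the obvious fix --- at every node, re-update all pairs inside every child by $dist_{G[M]}(u,v)=\min\{dist_{G[M_a]}(u,v),\,\omega(u)+\omega(v)+k_a\}$, where $k_a$ is the minimum weight of a vertex in a module adjacent to $M_a$ --- touches every pair once per ancestor of its lowest common ancestor in $MD(G)$, which costs $\Theta(n^3)$ on decomposition trees of depth $\Theta(n)$ (e.g., threshold graphs, where $\mw=2$ and the claimed bound is $\Oh(n^2)$). This is exactly what the paper's Corollary~\ref{cor:eitherInsideOrHop} and its \prob{$k$-Capped} recursion are for: every shortest path between two vertices of a module either stays inside the module or can be shortcut to a path of exactly two edges, so the outside world influences internal distances only through a single scalar cap; the paper propagates this cap \emph{top-down} (taking minima along the way) and applies it to each pair exactly once, at the node where the two vertices first fall into different children. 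Your write-up identifies the wrong step as the main obstacle --- optimality of min-weight representatives for cross pairs, which is indeed fine --- whereas the missing idea is the same-child case and the cap propagation that keeps its cost within $\Oh(n^2)$.
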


Let $MD(G)$ be the modular decomposition tree of $G$, which can be computed in linear time.
We will traverse $MD(G)$ in a top-down manner.
For a node $v_M$ in the decomposition tree with children $v_{M_1}, \ldots, v_{M_\ell}$ let $M$ resp.\ $M_1, \ldots, M_\ell$ be the corresponding modules in $G$. We will compute for every $u,v \in M$ with $u \in M_i$ and $v \in M_j$ with $i \neq j$ the shortest path in the whole graph $G$. Since every vertex of $G$ corresponds to a leaf in $MD(G)$, we eventually consider every pair of vertices with this procedure. We start with some structural properties of shortest paths in the modular decomposition tree.

\begin{lemma}\label{lem:disjoint_module_path}
Let $P = \{M_1, \ldots, M_\ell\}$ be a modular partition of a graph $G = (V,E)$. Let $s,t \in V$ be two vertices with $\{s,t\} \nsubseteq M_i$ for all $i \in [\ell]$. Then, there exists a shortest $s$-$t$-path that visits each module at most once.
\end{lemma}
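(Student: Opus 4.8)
The plan is to argue via a minimum counterexample: among all shortest $s$-$t$-paths in $G$, I would fix one path $P = (p_1 = s, p_2, \ldots, p_r = t)$ having the fewest vertices (equivalently, the fewest edges), and then show that such a path cannot visit any module twice. The one structural fact I would exploit throughout is the defining property of a module: if $M_i$ is a module of the partition $P$ and some vertex $x \notin M_i$ has a neighbor in $M_i$, then $x$ is adjacent to \emph{every} vertex of $M_i$. This is exactly what lets me reroute through an arbitrary chosen vertex of $M_i$ whenever the path enters or leaves it, at the cost of at most one vertex of $M_i$.

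Towards a contradiction, I would assume that $P$ meets some module $M_i$ in two or more separate blocks. Let $a$ be the smallest and $d$ the largest index with $p_a, p_d \in M_i$; then $a < d$, and since the blocks are separated by vertices outside $M_i$ the segment $p_a, \ldots, p_d$ contains at least three vertices. Here the hypothesis $\{s,t\} \nsubseteq M_i$ does the decisive work: it guarantees that $s$ and $t$ are not \emph{both} in $M_i$, which is precisely what I need to split into three cases. In case (i), neither $s$ nor $t$ lies in $M_i$, so $a>1$ and $d<r$; then $p_{a-1}$ and $p_{d+1}$ exist, lie outside $M_i$, and by the module property are adjacent to all of $M_i$, so I replace the entire block $p_a,\ldots,p_d$ by a single minimum-weight vertex $w$ of $M_i$, reconnecting as $p_{a-1} - w - p_{d+1}$. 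In case (ii), $s \in M_i$ (so $a=1$, and then $t \notin M_i$ forces $d<r$); here I keep the endpoint and collapse $p_1,\ldots,p_d$ down to $s$ alone, using that $p_{d+1}$ is adjacent to $s$. Case (iii), with $t \in M_i$, is the mirror image.

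To conclude, I would verify three properties of the rerouted walk. First, that it is a genuine \emph{simple} path: the surviving representative lies in $M_i$, while every surviving prefix vertex (occurring before the first entry into $M_i$) and every surviving suffix vertex (occurring after the last vertex of $M_i$) lies outside $M_i$, so no vertex repeats, and the two newly inserted edges are present by the module property. Second, that its cost does not increase: since the weights are non-negative and $w$ has minimum weight among vertices of $M_i$, the single surviving vertex costs at most the deleted block. Third, that it has strictly fewer vertices than $P$. Since $P$ is a shortest path of minimum vertex count, the rerouted path is also shortest yet shorter, a contradiction; hence no module is visited twice.

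I expect the only delicate point — more bookkeeping than genuine obstacle — to be confirming that each collapse yields a simple path rather than a mere walk, together with making explicit that the hypothesis $\{s,t\} \nsubseteq M_i$ is exactly what excludes the degenerate situation in which both endpoints lie in a common module $M_i$. In that excluded situation the statement would in fact be false: if $s$ and $t$ lie in different connected components of $G[M_i]$, every $s$-$t$-path is forced to leave and re-enter $M_i$, so the hypothesis cannot be dropped.
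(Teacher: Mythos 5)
Your proof is correct and takes essentially the same route as the paper: both rest on the module property (a vertex outside $M_i$ with a neighbour in $M_i$ is adjacent to \emph{all} of $M_i$) and shortcut a shortest path between its first and last vertex in a module; the paper iterates this shortcut and argues termination by strict shrinkage of the vertex set, which is equivalent to your minimum-vertex-count extremal formulation, and it keeps the first occurrence (or the endpoint $t$) where you instead insert a fresh minimum-weight vertex $w \in M_i$. One imprecision to fix: the contradiction hypothesis should be that some module contains \emph{two vertices} of the path, not that the path meets it in ``two or more separate blocks''. What the lemma must deliver (and what the paper uses downstream, where it assumes ``no two vertices of $P$ are in the same module'') is that no two path vertices lie in a common module, and a path can violate this while meeting every module in a single contiguous block, namely with two consecutive vertices inside the same module. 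Your reconnection $(p_{a-1}, w, p_{d+1})$ handles that case without any change --- the segment between the first and last occurrence in $M_i$ need only contain two vertices, not three --- so only the stated hypothesis (and the superfluous ``at least three vertices'' claim) needs adjusting. Your closing observation that the hypothesis $\{s,t\} \nsubseteq M_i$ cannot be dropped is correct and goes beyond what the paper records.
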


\begin{proof}
 Let $Q = (s = v_1, v_2, \ldots, v_n = t)$ be a shortest $s$-$t$-path in $G$. Assume that there exist $i,j \in [n]$ with $i \neq j$ such that $\{v_i, v_j\} \subseteq M_k$ for some $k \in [\ell]$. Let $i$ be minimal and $j$ be maximal under this condition. We distinguish two cases: In the case $j \neq n$, consider the path $Q' = (v_1, \ldots, v_{i-1}, v_i, v_{j+1}, v_{j+2}, \ldots, v_n)$. Since $j$ is maximal and $v_j \neq v_n$, it holds that $v_{j+1} \notin M_k$, but $v_{j+1}$ is adjacent to all vertices of $M_k$. Thus, the edge $\{v_i,v_{j+1}\}$ exists and $Q'$ is indeed a $s$-$t$-path with $\omega(Q') \leq \omega(Q) = dist_G(s,t)$.
 
 If $v_j = v_n$ then it holds that $v_i \neq v_1$ since otherwise $\{s,t\}$ are in a same module. Consider the path $Q'' = (v_1, \ldots, v_{i-1}, v_j)$. Since $i$ is minimal, it holds that $v_{i-1} \notin M_k$, but $v_{i-1}$ is adjacent to all vertices of $M_k$, in particular to $v_j$. Thus, $Q''$ is an $s$-$t$-path with $\omega(Q'') \leq \omega(Q) = dist_G(s,t)$. 
We iterate this procedure for every pair of vertices that are in a same module. Since the vertices in $Q'$ resp. $Q''$ are a strict subset of the vertices in $Q$, the number of pairs that are in a same module strictly reduces each time.
\end{proof}

We will use this property to compute shortest paths between vertices in different modules. To do so, we extend the quotient graph  by vertex weights.

\begin{definition}\label{def:weightedQuotientGraph}
 Let $G = (V,E)$ be a graph and $P = \{M_1, \ldots, M_\ell\}$ be a modular partition of $G$. Define $G_{/P}^*$ as the quotient graph $G_{/P}$ extended by additional vertex weights $\omega_{G_{/P}^*}(q_{M_i}) = \min_{v \in M_i} \omega_G(v)$.
\end{definition}

\begin{lemma}\label{lem:computeDistDisjointVertices}
 Let $G = (V,E)$ be a graph and let $P = \{M_1, \ldots, M_\ell\}$ be a modular partition of $G$. Let $G_{/P}^*$ be the vertex-weighted quotient graph as defined in Definition~\ref{def:weightedQuotientGraph} and let $u,v \in G$ be two vertices with $u \in M_i$ and $v \in M_j$ with $i \neq j$. Then, $dist_G(u,v) = dist_{G_{/P}^*}(q_{M_i}, q_{M_j}) - \omega_{G_{/P}^*}(q_{M_i}) + \omega_G(u) - \omega_{G_{/P}^*}(q_{M_j}) + \omega_G(v) $.
\end{lemma}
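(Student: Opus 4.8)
The plan is to prove the identity by establishing the two inequalities ``$\leq$'' and ``$\geq$'' separately, in both cases exploiting the correspondence between $u$-$v$-paths in $G$ that use at most one vertex per module and paths in the vertex-weighted quotient graph $G_{/P}^*$. The guiding intuition is that along such a path the \emph{intermediate} modules may always be entered through their cheapest vertex, so that the intermediate portion of the cost is captured exactly by the vertex weights of $G_{/P}^*$, whereas the two endpoint modules are forced to use the prescribed vertices $u$ and $v$ rather than their cheapest representatives. This discrepancy is precisely what the four correction terms $-\omega_{G_{/P}^*}(q_{M_i}) + \omega_G(u)$ and $-\omega_{G_{/P}^*}(q_{M_j}) + \omega_G(v)$ are designed to account for.

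For the direction ``$\leq$'' I would start from a shortest $q_{M_i}$-$q_{M_j}$-path in $G_{/P}^*$, which, since all weights are non-negative, may be assumed simple, say $q_{M_i} = q_{K_0}, q_{K_1}, \ldots, q_{K_r} = q_{M_j}$. I then build a $u$-$v$-path in $G$ whose first vertex is $u$, whose last vertex is $v$, and which for every intermediate module $K_s$ uses a minimum-weight vertex $\argmin_{w \in K_s}\omega_G(w)$. Because each edge $\{q_{K_s}, q_{K_{s+1}}\}$ of the quotient graph witnesses that $K_s$ and $K_{s+1}$ are adjacent, and adjacent modules are joined completely in $G$, any chosen vertex of $K_s$ is adjacent to any chosen vertex of $K_{s+1}$; hence the constructed sequence is genuinely a path in $G$. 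Summing vertex weights along it and replacing the two minimum endpoint weights of the quotient path by $\omega_G(u)$ and $\omega_G(v)$ produces a $u$-$v$-path whose total cost equals the right-hand side of the claimed identity, so $dist_G(u,v)$ is at most that value.

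For the direction ``$\geq$'' I would invoke Lemma~\ref{lem:disjoint_module_path}: since $u \in M_i$ and $v \in M_j$ lie in different modules, there is a shortest $u$-$v$-path $P = (u = p_0, p_1, \ldots, p_r = v)$ visiting each module at most once, so the $p_s$ lie in pairwise distinct modules $K_0 = M_i, \ldots, K_r = M_j$ with consecutive modules adjacent. The associated quotient sequence $q_{K_0}, \ldots, q_{K_r}$ is then a valid $q_{M_i}$-$q_{M_j}$-path in $G_{/P}^*$, and bounding each intermediate $\min_{w \in K_s}\omega_G(w)$ from above by $\omega_G(p_s)$ gives $dist_{G_{/P}^*}(q_{M_i}, q_{M_j}) \leq \omega_{G_{/P}^*}(q_{M_i}) + \sum_{0 < s < r}\omega_G(p_s) + \omega_{G_{/P}^*}(q_{M_j})$. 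Rearranging and using $dist_G(u,v) = \omega_G(P) = \omega_G(u) + \sum_{0 < s < r}\omega_G(p_s) + \omega_G(v)$ then yields $dist_G(u,v) \geq dist_{G_{/P}^*}(q_{M_i}, q_{M_j}) - \omega_{G_{/P}^*}(q_{M_i}) + \omega_G(u) - \omega_{G_{/P}^*}(q_{M_j}) + \omega_G(v)$, as required.

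The main obstacle, and the only point truly needing care, is the bookkeeping at the two endpoints: the cheapest-vertex substitution must \emph{not} be applied to $u$ or $v$, which are prescribed, and this is exactly why the endpoint modules contribute $\omega_G(u)$ and $\omega_G(v)$ in $G$ while $G_{/P}^*$ charges $\omega_{G_{/P}^*}(q_{M_i})$ and $\omega_{G_{/P}^*}(q_{M_j})$; keeping these two accounting schemes aligned across both inequalities is what forces the four correction terms. A minor supporting point is that Lemma~\ref{lem:disjoint_module_path} guarantees $P$ uses exactly one vertex per visited module, so no module's weight is double-counted, and that the complete join between adjacent modules is what legitimizes the path reconstruction in the ``$\leq$'' direction.
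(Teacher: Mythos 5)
Your proof is correct and follows essentially the same route as the paper's: the ``$\leq$'' direction builds a $u$-$v$-path in $G$ from a shortest quotient path by using minimum-weight representatives for intermediate modules and the prescribed endpoints $u,v$ (justified by the complete join between adjacent modules), and the ``$\geq$'' direction invokes Lemma~\ref{lem:disjoint_module_path} to turn a shortest $u$-$v$-path into a quotient path and bounds intermediate weights by the module minima. Your version merely spells out the endpoint bookkeeping more explicitly than the paper does.
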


\begin{proof}
Let $u,v \in G$ be two vertices in $G$ with $u \in M_i$ and $v \in M_j$ with $i \neq j$.
Every shortest $q_{M_i}$-$q_{M_j}$-path $P^*$ in $G_{/P}^*$ corresponds to a $u$-$v$-path $P$ in $G$ by first replacing each vertex in $P^*$ by the minimum-weight vertex of the corresponding module, and afterwards, since $M_i$ and $M_j$ are modules, replacing the first vertex by $u$ and the last vertex by $v$.

Conversely, let $P$ be a shortest $u$-$v$-path in $G$ with $u \in M_i$ and $v \in M_j$ with $i \neq j$. Due to Lemma~\ref{lem:disjoint_module_path}, we can assume that no two vertices of $P$ are in the same module. Thus, due to the structure of modules, one can assume that each vertex of $P$, except of $u$ and $v$, are minimum weight vertices of their respective module. Hence, there is a $q_{M_i}$-$q_{M_j}$-path in $G_{/P}^*$ of cost $dist_G(u,v) - \omega_G(v) + \omega_{G_{/P}^*}(q_{M_i}) - \omega_G(u) + \omega_{G_{/P}^*}(q_{M_j})$, which proves the claim. 
\end{proof}

Due to Lemma~\ref{lem:computeDistDisjointVertices}, one can compute the shortest path length for vertices in different modules by solving the \prob{Vertex-Weighted All-Pairs Shortest Paths} problem on $G_{/P}^*$. The next lemma shows that for vertices that are in a same module, either the entire shortest path between them is inside this module or it is a path of length two.

\begin{corollary}\label{cor:eitherInsideOrHop}
Let $G = (V,E)$ be a graph with non-negative vertex weights $\omega_G \colon V \rightarrow \mathbb{R}_{\geq 0}$. Let $M \subseteq V$ be a module in $G$ and $P = \{M_1, \ldots, M_\ell\}$ be a modular partition of $G[M]$. Let further $u,v \in M$ be two vertices with $\{u,v\} \subseteq M_i$ for an $i \in [\ell]$.
Then, every shortest $u$-$v$-path in $G$ is either completely inside $G[M]$ or there exists a shortest $u$-$v$-path with exactly two edges.
\end{corollary}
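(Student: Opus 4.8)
The plan is to do a short case distinction on whether a shortest $u$-$v$-path stays within $M$. Concretely, I would fix an arbitrary shortest $u$-$v$-path $P = (u = p_1, \ldots, p_n = v)$ in $G$ (assuming $u \neq v$, as otherwise the claim is trivial) and ask whether all of its vertices lie in $M$. If they do, then $P$ is completely contained in $G[M]$ and we are in the first alternative of the disjunction. The only work lies in the complementary case, so I would establish the implication: if \emph{some} shortest $u$-$v$-path leaves $M$, then there is a shortest $u$-$v$-path with exactly two edges. Together these two observations give the stated dichotomy, since the existential second alternative does not depend on the particular path chosen.

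So suppose $P$ contains a vertex outside $M$, and let $w$ be the first such vertex along $P$ starting from $u$. First I would argue that $w$ is adjacent to every vertex of $M$: its predecessor on $P$ lies in $M$ (as $p_1 = u \in M$ and $w$ is the first vertex outside $M$), so $w$ has a neighbour in $M$, and since $M$ is a module this forces $M \subseteq N(w)$. In particular $w$ is adjacent to both $u$ and $v$, so $(u, w, v)$ is a genuine $u$-$v$-path of length two in $G$.

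It then remains to see that $(u,w,v)$ is in fact a \emph{shortest} path. Here I would use non-negativity of the vertex weights: since $u$, $v$, and $w$ are three distinct vertices occurring on $P$ (note $w \notin M$ while $u,v \in M$) and every other vertex of $P$ has non-negative weight, we obtain $\omega(u)+\omega(w)+\omega(v) \leq \omega(P) = dist_G(u,v)$. As $(u,w,v)$ is a valid path, the reverse inequality $dist_G(u,v) \leq \omega(u)+\omega(w)+\omega(v)$ holds as well, whence equality and $(u,w,v)$ is a two-edge shortest path. This step, replacing the entire excursion of $P$ outside $M$ by the single vertex $w$ without increasing the cost, is the only real content of the argument, and it is exactly where both non-negativity of the weights and the module property of $M$ are used; I expect it to be the main (and essentially the sole) point to get right.

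Beyond this I anticipate no serious obstacle: the argument never needs the finer partition $\{M_1, \ldots, M_\ell\}$ nor the hypothesis $\{u,v\}\subseteq M_i$, only that $M$ is a module and the weights are non-negative. In fact the same reasoning applied to the first vertex leaving $M_i$ (which is adjacent to all of $M_i$, since $M_i$ is a module of $G[M]$) would yield the slightly stronger statement that every shortest $u$-$v$-path either stays inside $G[M_i]$ or can be replaced by a two-edge shortest path; the stated $G[M]$-version then follows immediately because $M_i \subseteq M$.
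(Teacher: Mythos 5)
Your proof is correct and follows essentially the same approach as the paper: both arguments use the module property of $M$ to conclude that a path vertex $w \in V \setminus M$ is adjacent to both $u$ and $v$, and non-negativity of the vertex weights to shortcut the path. Your one-step construction of the two-edge path $(u,w,v)$ from the first vertex leaving $M$ is in fact a cleaner and more complete execution than the paper's own shortcutting argument, and your side remark that the modular partition $\{M_1,\ldots,M_\ell\}$ and the hypothesis $\{u,v\}\subseteq M_i$ are never needed is also accurate.
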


\begin{proof}
Assume that there is a shortest $u$-$v$-path $P$ in $G$ that is not completely inside $G[M]$ and contains more than two edges; in other words, $P$ contains at least two vertices $p,q\in V\setminus M$. Since $M$ is a module in $G$ and $u, v \in M$, every vertex $x \in V \setminus M$ is either connected to both $u$ and $v$ or to neither $u$ nor $v$. Thus, one can shortcut $P = (u, \ldots, p,\ldots,q,\ldots,v)$ to $P' = (u,\ldots,p,\ldots, v)$ and since every vertex weight is non-negative, it holds that $\omega(P') \leq \omega(P)$.
\end{proof}

We define the slightly more general problem \prob{$k$-Capped Vertex-Weighted APSP}, that takes a vertex-weighted graph $G$ as an input and asks for all pairs of vertices $u,v$ for the value $d_k(u,v) = \min \big\{ dist_G(u,v), k \big\}$. This generalizes \VWAPSP if we set $k$ large enough, i.e., set $k = \sum_{v \in V} \omega(v)$. We can now describe the algorithm and prove Theorem~\ref{thm:APSPmw}:

\begin{proof}[Proof of Theorem~\ref{thm:APSPmw}] 
For an input graph $G = (V,E)$, the algorithm first computes the modular decomposition tree $MD(G)$ and then processes $MD(G)$ in a top-down traversal, starting with the root node. For a node $v_M$ in $MD(G)$ with children $v_{M_1}, \ldots, v_{M_\ell}$, let $M$ be the corresponding module and $P = \{M_1, \ldots, M_\ell\}$ be the corresponding modular partition of $G[M]$. 
We solve \prob{$k$-Capped APSP} in $G[M]$ as follows: 

First, we construct the weighted quotient graph $G_{/P}^*$ as defined in Definition~\ref{def:weightedQuotientGraph}, and solve vertex-weighted APSP on $G_{/P}^*$.  
Next, we compute for all pairs of vertices in $G[M]$ that are in different modules $M_i$ the shortest distance in $G$ by utilizing Lemma~\ref{lem:computeDistDisjointVertices}. 
Afterwards, we compute for each module $M_i \in P$ the minimum weight of all vertices in adjacent modules using $G_{/P}^*$, i.e, $k_i = \min_{q_{M_j} \in N(q_{M_i})} \omega(q_{M_j})$. Finally, we use Corollary~\ref{cor:eitherInsideOrHop} and recurse by solving for each module $M_i \in P$ the Problem \prob{$k$-Capped APSP} on $G[M_i]$ with $k = k_i$. 
For the root node, we set $k = \sum_{v \in V} \omega(v)$.

For any prime node $v_{M_i}$ in $MD(G)$ vertex-weighted APSP can be solved in time $\Oh(n_i^{2.842})$ with an algorithm due Yuster~\cite{Yuster09}, where $n_i$ denotes the number of vertices in the corresponding quotient graph. With a standard combinatorial algorithm one can solve vertex-weighted APSP in time $\Oh(n_i^3)$.
Thus, by Theorem~\ref{thm:RunningTimeMDT}, the total running time for this step is $\Oh(n\mw^{1.842} + m)$ if we use fast matrix multiplication or $\Oh(n\mw^2 + m)$ for the combinatorial algorithm. 
After we have solved vertex-weighted APSP on all nodes in the modular decomposition tree,
we use Lemma~\ref{lem:computeDistDisjointVertices} to compute for each pair of vertices in $G$ that are in different modules the length of a shortest path in constant time. Since we do this for each pair of vertices in $G$ exactly once, this sums up to a total running time of $\Oh(n^2)$.
The computation of the values $k_i$ can be done in time $\Oh(n_i^2)$, which is dominated by the time of solving vertex-weighted APSP.
In total, we obtain a combinatorial algorithm of time $\Oh(\mw^2n + n^2)$ and a algorithm of time $\Oh(\mw^{1.842}n + n^2)$ using fast matrix multiplication. 
\end{proof}

\section{Conclusion}\label{sec:Conclusion}

We started the study of \prob{Vertex-Weighted All-Pairs Shortest Paths} in the FPT in P framework and obtained efficient parameterized algorithms with respect to clique-width and modular-width. The algorithm parameterized by modular-width is adaptive, i.e., even if the parameter reaches its upper bound of $n$, the algorithm is not worse than the best unparameterized algorithm, and even for $k \in \Oh(n^{1-\varepsilon})$ for any $\varepsilon > 0$, it outperforms the best unparameterized algorithm. The algorithm parameterized by the stronger parameter clique-width is truly subcubic if $\cw \in \Oh(n^{0.5-\varepsilon})$ for any $\varepsilon >0$. It also permits us to solve \prob{Diameter} in the same time $\Oh(\cw^2n^2)$, complementing the lower bound ruling out $\Oh(2^{o(\cw)} \cdot n^{2 - \varepsilon})$ for any $\varepsilon > 0$, due to Coudert et al.~\cite{CoudertDP18}. The algorithms only apply to the vertex-weighted case. Note also that the algorithm relative to clique-width assume to be given a suitable expression or decomposition, whereas the modular decomposition of a graph, and hence its modular-width, can be computed in linear time~\cite{TedderCHP08}.

As mentioned in \cite{KratschN18}, considering edge-weighted graphs with (low) clique-width resp.\ low modular-width is hopeless, as one could modify an arbitrary input graph by adding all the missing edges with sufficiently large weights. Clearly, the shortest path lengths do not change, but the resulting graph is a clique and has constant clique-width and modular-width.

Apart from considering other parameters,
one interesting open question is whether there is an \emph{adaptive} algorithm for \APSP parameterized by clique-width, e.g., can the running time be reduced to $\Oh(\cw n^2)$? This seems quite challenging, since even computing some variant of \APSP for each node in the expression tree (on a graph with $\cw$ many nodes) results in a non-adaptive running time.

\bibliography{main}

\end{document}